\newtheorem{mainth}{Theorem}[section]
\newtheorem{Ndefinition}[mainth]{Definition}
\newtheorem{Nlemma}[mainth]{Lemma}
\newcommand{\blue}[1]{\textcolor{black}{{#1}}}
\newcommand{\newstr}[1]{}
\newcommand{\str}[1]{}
\newcommand{\remove}[1]{} 
\newcommand{\unit}{$\log_2|\mathcal{V}|$ bits}
\newcommand{\units}{$\log_2|\mathcal{V}|$ bits~}
\newcommand{\pre}{\text{`}\mathrm{pre}\text{'}}
\newcommand{\fin}{\text{`}\mathrm{fin}\text{'}}
\newcommand{\Null}{\text{`}\mathrm{null}\text{'}}
\newcommand{\gc}{\text{`}\mathrm{gc}\text{'}}
\begin{document}

\title{A Coded Shared Atomic Memory Algorithm for Message Passing Architectures}
\author{Viveck R. Cadambe \\
	RLE, MIT,\\
	Cambridge, MA, USA\\
	{\tt viveck@mit.edu}
	\and
Nancy Lynch\\
CSAIL, MIT \\
{Cambridge, MA, USA}\\
	{\tt lynch@theory.lcs.mit.edu}
\and
Muriel M\'{e}dard \\
RLE, MIT\\
{Cambridge, MA, USA}\\
	{\tt medard@mit.edu}
	\and
Peter Musial\\
Advanced Storage Division, EMC$^{2}$\\
{Cambridge, MA, USA}\\
{\tt peter.musial@emc.com}
\thanks{This work was supported by in part by AFOSR contract no. FA9550-13-1-0042, NSF award no.s CCF-1217506 and 0939370-CCF, and by BAE Systems National Security Solutions, Inc., award 739532-SLIN 0004.}
}

\date{}
    
\maketitle
\begin{abstract}
This paper considers the communication and storage costs of emulating
atomic (linearizable) multi-writer multi-reader shared memory in distributed
message-passing systems. 
%
The paper contains three main contributions:

\noindent{\bf (1)}
We present a atomic shared-memory emulation algorithm that we call \emph{Coded Atomic Storage} (CAS).  This algorithm uses \emph{erasure coding} methods. In a storage system with $N$ servers that is resilient
to $f$ server failures, {we show that} the communication cost of CAS is
$\frac{N}{N-2f}$. The storage cost of CAS is unbounded. 

\noindent{\bf (2)}
We present a modification of the CAS algorithm known as CAS with Garbage Collection (CASGC). The CASGC algorithm is parametrized by an integer $\delta$ and has a bounded storage cost. We show that in every execution where the number of write operations that are concurrent with a read operation is no bigger than $\delta$, the CASGC algorithm with parameter $\delta$ satisfies atomicity and liveness. We explicitly characterize the storage cost of CASGC, and show that it has the same communication cost as CAS.

\noindent{\bf (3)}
We describe an algorithm known as the Communication Cost Optimal Atomic Storage (CCOAS) algorithm that achieves a smaller communication cost than CAS and CASGC. In particular, CCOAS incurs read and write communication costs of $\frac{N}{N-f}$ measured in terms of number of object values. We also discuss drawbacks of CCOAS as compared with CAS and CASGC.
\end{abstract}

\renewcommand{\thefootnote}{\roman{footnote}}

\newpage

\section{Introduction} 
\label{sec:Intro}
Since the late 1970s, emulation of shared-memory systems in
distributed message-passing environments has been an active area of
research~\cite{Gifford79,Tho79,ABD,MR98,RAMBO,DynaStore,FLS01, LDR, AJ95, DGL,FAB,AJX,GWGR,HGR,CT, Anderson_etal, dobre_powerstore}.
The traditional approach to building redundancy for distributed
systems in the context of shared memory emulation is \emph{replication}.
%
In their seminal paper \cite{ABD}, Attiya, Bar-Noy, and Dolev presented a replication based algorithm for
emulating shared memory that achieves atomic
consistency~\cite{Lamport86,Herlihy90}. In this paper we consider a simple multi-writer generalization of their algorithm which we call the \emph{ABD} algorithm\footnote{{The algorithm of Attiya, Bar-Noy and Dolev \cite{ABD} allows only a single node to act as a writer.  Also, it did not distinguish between client and server nodes as we do in our paper.}}.
This algorithm uses a quorum-based replication scheme~\cite{Vukolic12},
combined with read and write protocols to ensure that the emulated object is {atomic {\cite{Lamport86}} (linearizable {\cite{Herlihy90}}), and to ensure liveness, specifically, that each operation terminates provided that at most $\lceil \frac{N-1}{2}\rceil$
server nodes fail.}
A critical step in ensuring atomicity in ABD is the
\emph{propagate} phase of the read protocol, where the readers write back the value they read to a subset of the server nodes.
Since the read and write protocols require multiple communication
phases where entire replicas are sent, this algorithm has a high
communication cost. In \cite{LDR}, Fan and Lynch introduced a directory-based replication
algorithm known as the LDR algorithm that, like \cite{ABD}, emulates atomic shared memory in the
message-passing model;  however, unlike \cite{ABD}, its read protocol 
is required to write only some metadata information to the
directory, rather than the value read.
In applications where the data being replicated is much larger than
the metadata, LDR is less costly than 
ABD in terms of communication costs.

The main goal of our paper is to develop shared memory emulation algorithms, based on the idea of \emph{erasure coding}, that are efficient in terms of communication and storage costs. Erasure coding is a generalization of replication that is well known in the context of classical storage
systems~\cite{LinCostello_Book,Roth_CodingTheory,oggier, cassuto}.
Specifically, in erasure coding, each server does not store the value
in its entirety, but only a part of the value called a \emph{coded
element}. In the classical coding theory framework which studies storage of a single version of a data object, this approach is well known to lead to smaller 
storage costs as compared to replication (see Section \ref{sec:ec}). Algorithms for shared memory emulation that use the idea of erasure coding to store multiple versions of a data object consistently have been developed in \cite{AJ95, DGL,FAB,AJX,GWGR,HGR,CT, Anderson_etal, dobre_powerstore}.
In this paper, we develop algorithms that improve on previous algorithms in terms of communication and storage costs. We summarize our main contributions and compare them with previous related work next.

\subsection*{Contributions}
We consider a static distributed message-passing
setting where the universe of nodes is fixed and known, and nodes
communicate using a reliable message-passing network. We assume that 
client and server nodes can fail. {We define our system model, and communication and storage cost measures in 
Sec.~\ref{sec:background}.} 

\emph{The CAS algorithm:} We develop the \emph{Coded Atomic Storage} (CAS)
algorithm presented in Section~\ref{sec:CAS}, which is an erasure coding based shared memory emulation algorithm. {We present a brief introduction of the technique of erasure coding in Section~\ref{sec:ec}.} 
For a storage system with $N$ nodes, we show in Theorem \ref{lem:liveness} that CAS ensures the following liveness property: all operations that are invoked by a non-failed client terminate provided
that the number of \emph{server} failures is bounded by a parameter $f,$ where $f < \lceil \frac{N}{2}\rceil$ and regardless of the number of client failures. We also show in Lemma \ref{lem:liveness} that CAS ensures atomicity
regardless of the number of {(client or server)} failures. In Theorem \ref{thm:comCAS} in Section \ref{sec:CAS}, we also analyze the communication cost of CAS. 
Specifically, in a storage system with $N$ servers that is resilient to $f$ server node failures, we show that the communication costs of CAS are equal to $\frac{N}{N-2f}$. We note that these communication costs of CAS are smaller than replication based schemes ({see Appendix \ref{app:ABD_LDR}} for an analysis of communication costs of ABD and LDR algorithms.).  The storage cost of CAS, however, are unbounded because each server stores the value associated with the latest version of the data object it receives. Note that in comparison, in the ABD algorithm which is based on replication, the storage cost is bounded because each node stores only the latest version of the data object (see Appendix \ref{app:ABD_LDR} for an explicit characterization of the storage cost incurred by ABD). 



\emph{The CASGC algorithm:} In Section \ref{sec:CASgc}, we present a variant of
CAS called the CAS with Garbage Collection (CASGC) algorithm, which achieves a
bounded storage cost by \emph{garbage collection}, i.e., {discarding} values
associated with sufficiently old versions. CASGC is parametrized by an integer $\delta$ which, informally speaking, controls the number of tuples that each server stores. We show that CASGC satisfies atomicity in Theorem \ref{thm:CASGCatomicity} by establishing a {formal simulation relation \cite{Lynch1996}} between CAS and CASGC.  Because of the garbage collection at the servers, the liveness conditions for CASGC are more stringent than CAS. The liveness property satisfied by CASGC is described in Theorem \ref{thm:CASGCliveness} in Section \ref{sec:CASgc}, where we argue that in an execution of CASGC where the number of write
operations concurrent with a read operation is no bigger than a parameter $\delta$, every operation terminates.
 The main technical challenge lies in careful design of the CASGC algorithm in order to ensure that an unbounded number of writes that fail before propagating enough number of coded elements do not prevent a future read from returning a value of the data object. In particular, failed writes that begin and end before a read is invoked are not treated as operations that are concurrent with the read, and therefore do not contribute to the concurrency limit of $\delta$. While CASGC incurs the same communication costs as CAS, it incurs a bounded storage cost. A non-trivial bound on the storage cost incurred by an execution of CASGC is described in Theorem \ref{thm:storCAS}.  


\emph{Communication Cost Lower Bound:}
In Section \ref{sec:lowerbound} we describe a new algorithm called the Communication Cost Optimal Atomic Storage (CCOAS) algorithm that satisfies the same correctness conditions as CAS, but incurs smaller communication costs. However, CCOAS {would not be easily generalizable} to settings where channels could incur losses because, {unlike CAS and CASGC, it requires that messages from clients to servers are delivered reliably even after operations associated with the message terminates. While CCOAS is applicable in our model of reliable channels, designing a protocol with this property may not be possible when the channel has losses especially if the client fails before delivering the messages.} We describe CCOAS, analyse its communication costs, and discuss its drawbacks in Section \ref{sec:lowerbound}.

\subsection*{Comparison with Related Work}
Erasure coding has been used to develop shared memory emulation techniques for systems with crash failures in \cite{AJ95, DGL,FAB,AJX} and {Byzantine} failures in \cite{GWGR,HGR,CT, dobre_powerstore}\footnote{An earlier version of our work is presented in a technical report \cite{Cadambe_Lynch_Medard_Musial_TR}.}. In erasure coding, note that each server stores a coded element, so a reader has to obtain enough coded elements to decode and return the value. The main challenge in extending replication based algorithms such as ABD to erasure coding lies in handling partially completed or failed writes. In replication, when a read occurs during a partially completed write, servers simply send the stored value and the reader returns the latest value obtained from the servers. However, in erasure coding, the challenge is to ensure that a read that observes the trace of a partially completed or failed write obtains a enough coded elements corresponding to the same version to return a value. Different algorithms have different approaches in handling this challenge of ensuring that the reader decodes a value of the data object. As a consequence, the algorithms differ in the liveness properties satisfied, and the communication and storage costs incurred. We discuss the differences here briefly. 

Among the previous works, \cite{DGL,HGR, CT, dobre_powerstore} have similar correctness requirements as our paper; these references aim to emulate an atomic shared memory that supports concurrent operations in asynchronous networks. We note that the algorithm of \cite{CT} cannot be generalized to lossy channel models (see discussion in \cite{DGL}). We compare our algorithms with the \emph{ORCAS-B} algorithm of \cite{DGL}\footnote{The \emph{ORCAS-A} algorithm of \cite{DGL}, although uses erasure coding, has the same \emph{worst case} communication and storage costs as ABD.}, the algorithm of \cite{HGR}, which we call the \emph{HGR} algorithm, and the \emph{M-PoWerStore} algorithm of \cite{dobre_powerstore}. We note that \cite{DGL} assumes lossy channels and \cite{HGR, dobre_powerstore} assume Byzantine failures. Here, we interpret the algorithms of \cite{DGL,HGR, dobre_powerstore} in our model that has lossless channels and crash failures, and use worst-case costs for comparison. 

The CAS and CASGC algorithms resemble the M-PoWerStore and HGR algorithms in their structure. These algorithms handle partially completed or failed writes by \emph{hiding} ongoing writes from a read until enough coded elements have been propagated to the servers. The write communication costs of CAS, CASGC, M-PoWerStore, HGR and ORCAS-B are all the same. However, there are differences between these algorithms in the liveness properties, garbage collection strategies and read communication costs. 

CAS is essentially a {restricted version of the \emph{M-PoWerStore} algorithm of \cite{dobre_powerstore} for the crash failure model.} The main difference between CAS and M-PoWerStore is that in CAS, servers perform gossip\footnote{As we shall see later, the server gossip is not essential to correctness of CAS. It is however useful as a theoretical tool to prove correctness of CASGC.}. However, M-PoWerStore does not involve garbage collection and therefore incurs an infinite storage cost. {The garbage collection strategies of HGR and ORCAS-B are similar to that of CASGC with the parameter $\delta$ set to $1$. In fact, the garbage collection strategy of CASGC may be viewed as a non-trivial generalization of the garbage collection strategies of ORCAS-B and HGR. We next discuss differences between these algorithms in terms of their liveness properties and communication costs.}

The ORCAS-B algorithm satisfies the same liveness properties as ABD and CAS, which are stronger than the liveness conditions of CASGC. However, in ORCAS-B, to handle partially completed writes, a server sends coded elements corresponding to multiple versions to the reader. This is because, in ORCAS-B, a server, on receiving a request from a reader, registers the client\footnote{{The idea of registering a client's identity was introduced originally in \cite{Martin_minimal} and plays an important role in our CCOAS algorithm as well.}} and sends all the incoming coded elements to the reader until the read receives a second message from a client. Therefore, the read communication cost of ORCAS-B grows with the number of writes that are concurrent with a read. In fact, in ORCAS-B, if a read client fails in the middle of a read operation, servers may send all the coded elements it receives from future writes to the reader. In contrast, CAS and CASGC have smaller communication costs because each server sends only one coded element to a client per read operation, irrespective of the number of writes that are concurrent with the read.

In HGR, read operations satisfy \emph{obstruction freedom}, {that is, a read returns if there is a period during the read where no other operation takes steps for sufficiently long.  Therefore, in HGR, operations may terminate even if the number of writes concurrent with a read is arbitrarily large, but it requires a sufficiently long period where concurrent operations do not take steps. On the contrary, in CASGC, by setting $\delta$ to be bigger than $1$, we ensure that read operations terminate even if concurrent operations take steps, albeit at a larger storage cost, so long as the number of writes concurrent with a read is bounded by $\delta$. Interestingly, the read communication cost of HGR is larger than CASGC, and increases with the number of writes concurrent to the read to allow for read termination in presence of a large number of concurrent writes.}

We note that the server protocol of the CASGC algorithm is more complicated as compared with previous algorithms.  In particular, unlike ORCAS-B, HGR and M-PoWerStore, the CASGC algorithm requires gossip among the servers to ensure read termination in presence of concurrent writes at a bounded storage cost and low communication cost. A distinguishing feature of our work is that we provide formal measures of communication and storage costs of our algorithms. Our contributions also include the CCOAS algorithm, and complete correctness proofs of all our algorithms through the {development of invariants and simulation relations}, which may be of independent interest. The generalization of CAS and CASGC algorithms to the models of \cite{HGR, DGL, dobre_powerstore, CT} which consider Byzantine failures and lossy channel models is an interesting direction for future research.

\section{System Model}\label{sec:background}

\subsection{Deployment setting.}
We assume a {\em static asynchronous deployment setting} where 
all the nodes and the network connections are known a priori and the only sources of 
dynamic behavior are node stop-failures (or simply, failures) and processing and communication delays.  
We consider  a {message-passing} setting where nodes communicate via point-to-point reliable channels. We assume a universe of nodes that is {the} union of {\em server} and {\em client} nodes, where the
{client} nodes are {\em reader} or {\em writer} nodes. $\mathcal{N}$ represents the set of 
server nodes; $N$ denotes the cardinality of $\mathcal{N}.$ We assume that server and client 
nodes can fail (stop execution) at any point.  We assume that the number of server node failures 
is at most $f$. There is no bound on the number of client failures.



\subsection{Shared memory emulation.}
We consider algorithms that emulate multi-writer, multi-reader (MWMR) read$/$write atomic shared memory using our deployment platform.  We assume that read clients receive read requests (invocations) from some local external source, and respond with object values.  Write clients receive write requests and respond with acknowledgments.
The requests follow a ``handshake'' discipline, where a new invocation at a client waits for a response to the preceding invocation at the same client. We require that the overall external behavior of the algorithm corresponds to atomic {(linearizable) }memory. For simplicity, in this paper we consider a shared-memory system that consists of just a single object.

We represent each version of the data object as {a} $(tag,value)$ pair.
When a write client processes a write request, it assigns a \emph{tag} to the request. We assume that the tag is an element of a totally ordered set $\mathcal{T}$ that has a minimum element $t_0$. 
The tag of a write request serves as a unique identifier for that request, and 
the tags associated with successive write requests at a particular write client increase monotonically. 
We assume that $value$ is a member of a finite set $\mathcal{V}$ that represents the set of values that the data object can take on; note that $value$ can be represented by $ \log_{2}|\mathcal{V}|$ bits\footnote{{Strictly speaking, we need $\lceil \log_{2}|\mathcal{V}|\rceil$ bits since the number of bits has to be an integer. We ignore this rounding error.}}. We assume that all servers are initialized with a default initial state.

\subsection{Requirements} 
The key correctness requirement on the targeted shared memory service is 
{\em atomicity.}
A shared atomic object is one that supports concurrent access by multiple clients 
and where the {observed global external behaviors {``look like''} the object is being 
accessed sequentially}.
Another requirement is \emph{liveness}, by which we mean {here} that an operation of a 
non-failed client is guaranteed to terminate provided that the number of server failures 
is at most $f$, and irrespective of the failures of other clients\footnote{We assume 
that $N > 2f,$ since correctness cannot be guaranteed if $N \leq 2f$ \cite{Lynch1996}.}.

\subsection{Communication cost}
Informally speaking, the communication cost is the number of bits transferred over the 
{point-to-point links in }the message-passing system.
{For a message that can take any value in some finite set $\mathcal{M}$, we measure its communication cost as} 
$\log_{2}|\mathcal{M}|$ bits. We separate the cost of communicating {a value of }the data object from the cost of communicating the tags and other metadata. Specifically, we assume that each message is a triple $(t,w,d)$ where $t \in \mathcal{T}$ is a tag, $w \in \mathcal{W}$ is a {component} of the triple that depends on the value associated with tag $t$, and $d \in \mathcal{D}$ is any additional metadata that is independent of the value. Here, $\mathcal{W}$ is a finite set of values that {the second component} of the message can take {on}, depending on the value \str{to }{of }the data object. $\mathcal{D}$ is a finite set that contains all the possible metadata elements for the message. These sets are assumed to be known a priori to the sender and recipient of the message. In this paper, we make the approximation: $ \log_{2}|\mathcal{M}| \approx \log_{2}|\mathcal{W}|,$ that is, the costs of communicating the tags and the metadata are negligible as compared to 
the cost of communicating the data object values. We assume that every message is sent on behalf of some read or write operation. We next define the read and write communication costs of an algorithm.

{ For a given shared memory algorithm, consider an execution $\alpha$. The communication cost of a write operation in $\alpha$ is the sum of the communication costs of all the messages sent over the point-to-point links on behalf of the operation. The write communication cost of the execution $\alpha$ is the supremum of the costs of all the write operations in $\alpha$. The write communication cost of the algorithm is the supremum of the write communication costs taken over all executions. }The read communication cost of an algorithm is defined similarly.

\subsection{Storage cost}
Informally speaking, at any point of an execution of an algorithm, the {\em storage cost} is the total number of bits stored by the servers. 
Specifically, we assume that a server node stores a set of triples with each triple of the form $(t, w, d)$, where  $t \in \mathcal{T},$ $w$ depends on the value of the data object associated with tag $t$, and $d$ represents additional metadata that is independent of the values stored. We neglect the cost of storing the tags and the metadata; so the cost of storing the triple $(t,w,d)$ is measured as $\log_2|\mathcal{W}|$ bits. The storage cost of a server is the sum of the storage costs of all the triples stored at the server.  {For a given shared memory algorithm, consider an execution $\alpha$. The storage cost at a particular point of $\alpha$ is the sum of the storage costs of all the non-failed servers at that point. The storage cost of the execution $\alpha$ is the supremum of the storage costs over all points of $\alpha$. The storage cost of an algorithm is the supremum of the storage costs over all executions of the algorithm}. 

\section{Erasure Coding - Background}\label{sec:ec}

Erasure coding is a generalization of replication that has been widely studied for purposes of failure-tolerance in storage systems (see \cite{LinCostello_Book,Roth_CodingTheory,plank2005t1,oggier,cassuto}). The key idea of erasure coding involves splitting the data into several \emph{{coded elements}}, each of which is stored at a different server node. As long as a sufficient number of coded elements can be accessed, the original data can be recovered.  Informally speaking, given two positive integers $m, k,$ {$k < m$}, \emph{an $(m,k)$ Maximum Distance Separable (MDS) code maps a $k$-length vector to an $m$-length vector, where the input $k$-length vector can be recovered from any $k$ coordinates of the output $m$-length vector.} This implies that an $(m,k)$ code, when used to store a $k$-length vector on $m$ server nodes - each server node storing one of the $m$ coordinates of the output - can tolerate $(m-k)$ node failures in the absence of any consistency requirements (for example, see \cite{RAID}). We proceed to define the notion of an MDS code formally.


  Given an arbitrary finite set $\mathcal{A}$ and any set $S \subseteq \{1,2,\ldots,m\},$ let $\pi_{S}$ denote the \emph{natural projection mapping} from $\mathcal{A}^{m}$ 
onto the coordinates corresponding to $S,$ i.e., denoting $S=\{s_{1},s_{2},\ldots,s_{|S|}\},$ where {$s_1 < s_2 \ldots < s_{|S|}$,} the 
function $\pi_{S}:\mathcal{A}^{m}\rightarrow \mathcal{A}^{|S|}$ is defined as
$\pi_{S}\left(x_1,x_2,\ldots,x_m\right) = (x_{s_1},x_{s_2},\ldots, x_{s_{|S|}})$.
\begin{Ndefinition}[Maximum Distance Separable (MDS) code] Let $\mathcal{A}$ denote any finite set.  For positive integers $k, m$ such that $k < m,$ an $(m,k)$ code over $\mathcal{A}$ \str{consists of }{is }a map $\Phi:\mathcal{A}^{k} \rightarrow \mathcal{A}^{m}$.  An $(m,k)$ code {$\Phi$} over $\mathcal{A}$ is said to be \emph{Maximum Distance Separable} (MDS) if, for every $S \subseteq \{1,2,\ldots,m\}$ where $|S|=k,$ there exists a function $\Phi^{-1}_{S}:\mathcal{A}^{k} \rightarrow \mathcal{A}^{k}$ such that: $\Phi_S^{-1}(\pi_{S}(\Phi(\mathbf{x})) = \mathbf{x}$  for every $\mathbf{x} \in \mathcal{A}^{k}$, where $\pi_{S}$ is the natural projection mapping.
\label{def:MDS}
\end{Ndefinition}
We refer to each of the $m$ coordinates of the output of {an} $(m,k)$ code $\Phi$ as a \emph{coded element}.  Classical $m$-way replication, where the input value is repeated $m$ times, is in fact {an} $(m,1)$ MDS code{.} Another example is the \emph{single parity code}: an $(m,m-1)$ MDS code over $\mathcal{A}=\{0,1\}$ which maps the $(m-1)$-bit vector $x_1, x_2,\ldots,x_{m-1}$ to the $m$-bit vector $x_1,x_2,\ldots,x_{m-1},x_{1}\oplus x_{2}\oplus \ldots \oplus x_{m-1}.$ 

%
{
We now review the use of an MDS code in the classical coding-theoretic model, where a single version of a data object with value $v \in \mathcal{V}$ is stored over $N$ servers using an $(N,k)$ MDS code. We assume that $\mathcal{V}= \mathcal{W}^{k}$ for some finite set $\mathcal{W}$ and that an $(N,k)$ MDS code {$\Phi:\mathcal{W}^{k}\rightarrow \mathcal{W}^{N}$} exists over $\mathcal{W}$ (see Appendix \ref{app:coding} for a discussion). The value $v$ of the data object can be used as an input to {$\Phi$ }to get $N$ coded elements over $\mathcal{W};$ each of the $N$ servers, respectively, stores one of these coded elements. Since each coded element belongs to the set $\mathcal{W},$ whose cardinality satisfies 
$|\mathcal{W}| = |\mathcal{V}|^{1/k} = 2^{\frac{\log_{2}|\mathcal{V}|}{k}},$ \emph{each coded element can be represented as a $\frac{\log_{2}|\mathcal{V}|}{k}$ bit-vector, i.e., the number of bits in each coded element is a fraction $\frac{1}{k}$ of the number of bits in the original data object}. When we employ an $(N,k)$ code in the context of storing multiple versions, the size of a coded element is closely related to communication and storage costs incurred by our algorithms (see Theorems \ref{thm:comCAS} and \ref{thm:storCAS}).

\remove{We also describe a classical coding-theoretic cost measure known as the \emph{redundancy factor} and its connection to the storage costs.} 

\section{Coded Atomic Storage}\label{sec:CAS}
We now present the {\em Coded Atomic Storage} (CAS) {algorithm, which takes 
advantage of erasure coding techniques to reduce the communication \str{and 
storage }cost for emulating atomic shared memory. CAS is parameterized by an 
integer $k$, $1 \leq k \leq N-2f$; we denote the algorithm with parameter value 
$k$ by CAS($k$).}  CAS, like ABD and LDR, is a quorum-based algorithm. {Later, in Sec. \ref{sec:CASgc}, we present a variant of CAS that has efficient storage costs as well (in addition to having the same communication costs as CAS).}


\begin{figure*}[!t]
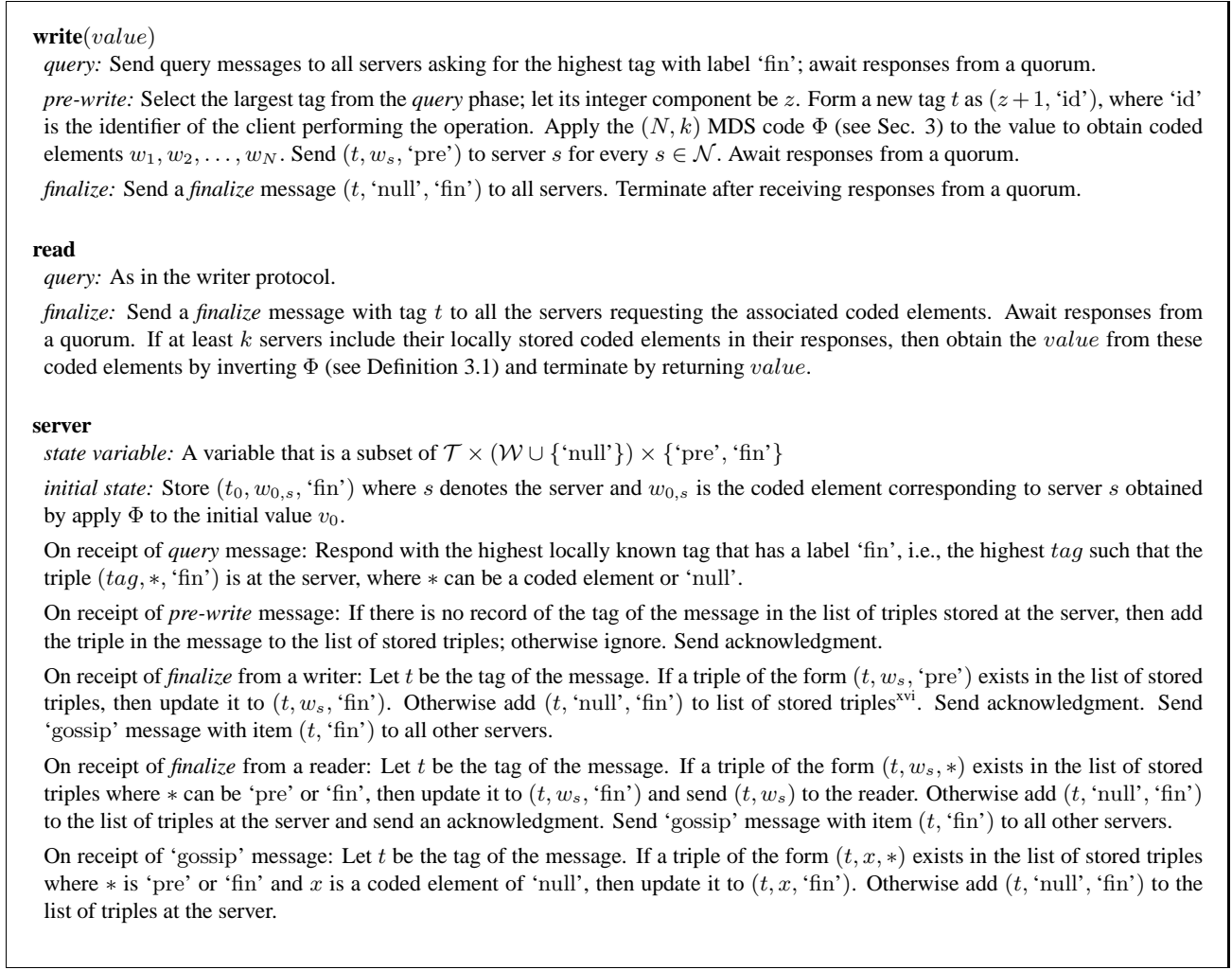

  \begin{mdframed}
  \footnotesize
  \begin{tabbing}
  	{\bf write$(value)$} \\
	\ \ \begin{minipage}[t]{\textwidth}%
	{\em query:}  Send query messages to all servers asking for the 
	highest tag with label $\fin$; {await responses from a quorum.} \smallskip\\
	{\em pre-write:} {Select the largest tag from the \emph{query} phase; let its integer component be $z$. Form a new tag $t$ as $(z+1,\text{`}\mathrm{id}\text{'})$, where `$\mathrm{id}$' is the identifier of the client performing the operation.} Apply the $(N,k)$ MDS code {$\Phi$} (see Sec. \ref{sec:ec}) {to }the value to obtain coded elements ${w_1, w_2, \ldots, w_N.}$ Send $(t, w_s,\pre)$ to server $s$ for every $s \in \mathcal{N}.$ Await response{s} from a quorum. \smallskip \\
	{\em finalize:} Send a \emph{finalize} message {$(t,\text{`}\mathrm{null}\text{'},\fin)$ to all
	servers.}  Terminate after receiving responses from a quorum.\smallskip
	\end{minipage}\\ \\
	{\bf read} \\
	\ \ \begin{minipage}[t]{\textwidth}%
	{\em query:} As in the writer protocol. \smallskip \\
	{\em finalize:} Send a \emph{finalize} message with tag $t$ {to all the servers requesting 
	the associated coded elements.}\str{ to the all servers. } Await responses 
	from a quorum. {If at least $k$ servers include their locally stored coded elements in their  
	responses, then obtain the $value$ from these coded elements by inverting $\Phi$ (see 
	Definition \ref{def:MDS}) and terminate by returning $value$.} \smallskip
	\end{minipage} \\ \\
	{\bf server} \\
	\ \ \begin{minipage}[t]{\textwidth}%
\emph{state variable:} A variable that is a subset of $\mathcal{T} \times \left(\mathcal{W} \cup \{\Null\}\right) \times \{\pre, \fin\}$\smallskip \\
		{{\em initial state:} Store {$(t_0, w_{0,s}, \fin)$ where $s$ denotes the server and $w_{0,s}$ is the coded element corresponding to server $s$ obtained by apply $\Phi$ to the initial value $v_0$.}} \smallskip  \\
		On receipt of {{\em query} message: Respond with the highest locally known tag that has a label $\fin$, i.e., the highest $tag$ such that the triple $(tag, *, \fin)$ is at the server, where $*$ can be a coded element or `$\mathrm{null}$'.} \smallskip \\
	On receipt of {{\em pre-write} message:} If there is no record of the tag of the message in the list of triples stored at the server, then {add }the \str{incoming }triple {in the message }to the list of stored triples; otherwise ignore. {Send acknowledgment.} \smallskip \\
	On receipt of {{\em finalize} from a writer: Let $t$ be the tag of the message. If a triple of the form $(t, w_s, \pre)$ exists in the list of stored triples, then update it to $(t, w_s, \fin)$. Otherwise add $(t, \text{`}\mathrm{null}\text{'}, \fin)$ to list of stored triples\footnotemark[16]. Send {acknowledgment}.} {Send `$\mathrm{gossip}$' message with item $(t,\fin)$ to all other servers.}  \smallskip \\
	On receipt of {\em finalize} from a reader: Let $t$ be the tag of the message. If a triple of the form $(t, w_s, *)$ exists in the list of stored triples where $*$ can be $\pre$ or $\fin$, then update it to $(t, w_s, \fin)$ and send $(t, w_s)$ to the reader. Otherwise add $(t, \text{`}\mathrm{null}\text{'}, \fin)$ to the list of triples at the server and send an {acknowledgment.} Send `$\mathrm{gossip}$' message with item $(t,\fin)$ to all other servers.  \smallskip \\
	On receipt of `$\mathrm{gossip}$' message: Let $t$ be the tag of the message. If a triple of the form $(t, x, *)$ exists in the list of stored triples where $*$ is $\pre$ or $\fin$ and $x$ is a coded element of $\Null$, then update it to $(t, x , \fin)$. Otherwise add $(t, \text{`}\mathrm{null}\text{'}, \fin)$ to the list of triples at the server. \smallskip \\
	\end{minipage}
	\end{tabbing}
  \end{mdframed}
  \caption{Write, read, {and server} protocols of the CAS algorithm.}\label{fig:CAS}
\end{figure*}

%
Handling of incomplete writes is not as simple when erasure coding is used because, unlike in replication based techniques, no {single} server has a complete replica of the value being written.
In CAS, we solve this problem by \emph{hiding} ongoing write operations from reads until enough information has been stored at servers{.} Our approach essentially mimics \cite{dobre_powerstore}, projected to the setting of crash failures. We describe CAS in detail next.

\noindent{\bf Quorum specification.}
%
We \str{construct }{define} our quorum system, $\mathcal{Q},$ to be the set of all subsets of 
$\mathcal{N}$ that have at least $\lceil \frac{N+k}{2}\rceil $ elements (server nodes).
We refer to the members of $\mathcal{Q}$, 
as quorum sets. We show in Apppendix \ref{appendix:sec:quorum} that $\mathcal{Q}$ satisfies the following property:

\begin{Nlemma}
\label{lem:quorums}
Suppose that $1 \leq k \leq N-2f.$ 
({\bf i}) {If $Q_{1}, Q_{2} \in \mathcal{Q},$ then} $|Q_{1} \cap Q_{2}| \ge k$. 
({\bf ii}) If the number of failed servers is at most $f$,
then {$\mathcal{Q}$ contains at least one quorum set $Q$ of non-failed servers.}
\end{Nlemma}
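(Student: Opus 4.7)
The plan is to prove both parts by direct counting arguments using the cardinality bound $\lceil \tfrac{N+k}{2}\rceil$ that defines a quorum, together with the hypothesis $1 \le k \le N-2f$.

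For part (i), I would start from the basic inclusion--exclusion identity
\[
|Q_1 \cap Q_2| \;=\; |Q_1| + |Q_2| - |Q_1 \cup Q_2|.
\]
Since $Q_1, Q_2 \subseteq \mathcal{N}$, the union is bounded above by $N$. Each quorum has size at least $\lceil \tfrac{N+k}{2}\rceil$, so
\[
|Q_1 \cap Q_2| \;\ge\; 2\bigl\lceil \tfrac{N+k}{2}\bigr\rceil - N \;\ge\; (N+k) - N \;=\; k,
\]
where the middle inequality uses $2\lceil x \rceil \ge 2x$. This yields the desired intersection bound. No case split on parity of $N+k$ is actually needed because the ceiling only helps us.

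For part (ii), let $\mathcal{F}$ denote the set of failed servers, so $|\mathcal{N}\setminus\mathcal{F}| \ge N-f$. I would show that any set of $\lceil \tfrac{N+k}{2}\rceil$ non-failed servers is a quorum, which requires $N-f \ge \lceil \tfrac{N+k}{2}\rceil$. From the hypothesis $k \le N - 2f$ we get $N+k \le 2(N-f)$, hence $\tfrac{N+k}{2} \le N-f$; since $N-f$ is an integer, $\lceil \tfrac{N+k}{2}\rceil \le N-f$ as well. Thus a quorum of non-failed servers exists (e.g., any $\lceil \tfrac{N+k}{2}\rceil$-subset of $\mathcal{N}\setminus\mathcal{F}$).

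Neither step looks technically difficult; the only mild subtlety is remembering to handle the ceiling correctly (using $2\lceil x\rceil \ge 2x$ in part (i) and integrality of $N-f$ in part (ii)), and to invoke the hypothesis $k \le N-2f$ exactly once, in part (ii). The bound $k \ge 1$ is not strictly needed for this lemma but is part of the standing assumption for CAS($k$) to be meaningful.
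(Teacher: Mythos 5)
Your proposal is correct and follows essentially the same counting argument as the paper's own proof in the appendix: inclusion--exclusion with $|Q_1\cup Q_2|\le N$ for part (i), and $N-f \ge \lceil\tfrac{N+k}{2}\rceil$ for part (ii). The only cosmetic difference is that the paper reaches $N-f\ge\lceil\tfrac{N+k}{2}\rceil$ via the identity $N-\lfloor\tfrac{N-k}{2}\rfloor=\lceil\tfrac{N+k}{2}\rceil$, while you appeal directly to the integrality of $N-f$; both are sound.
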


The CAS algorithm can, in fact, use any quorum system that satisfies properties ({\bf i}) and ({\bf ii}) of 
Lemma \ref{lem:quorums}.

\subsection{Algorithm description}
In CAS, we assume that tags are tuples of the form $(z, \text{`}\mathrm{id}\text{'})$, where $z$ is an integer and $\text{`}\mathrm{id}\text{'}$ is an identifier of a client node. The ordering on the set of tags $\mathcal{T}$ is defined {lexicographically, using the usual ordering on the integers and a predefined ordering on the client identifiers.}
{We add a `$\mathrm{gossip}$' protocol to CAS, whereby each server sends each \emph{item} from $\mathcal{T}\times \{\fin\}$ that it ever receives once (immediately) to every other server. As a consequence, in any fair execution, if a non-failed server initiates `$\mathrm{gossip}$' or receives `$\mathrm{gossip}$' message with item $(t,\fin)$, then, every non-failed server receives a `$\mathrm{gossip}$' message with this item at some point of the execution.} Fig.~\ref{fig:CAS} contains a description of the read and write protocols,  
and the server actions of CAS. Here, we provide an overview of the algorithm.

Each server node maintains a set 
of $(tag, coded\text{-}element, label)$\footnote{The `$\mathrm{null}$' entry indicates that no coded element is stored; the storage cost associated storing a $\mathrm{null}$ coded element is negligible.} triples, where we specialize the metadata to 
$label\in \{\pre,\fin\}$.  The different phases of the write and read protocols are executed sequentially.  In each phase, a client sends messages to servers to which the non-failed servers respond. Termination of each phase depends on getting responses from at least one quorum.


The {\em query} phase is identical in both protocols and it allows clients to discover a recent \emph{finalized 
object version}{, i.e., a recent version with a $\fin$ tag.}
The goal of the {\em pre-write} phase {of a write} is to ensure that each server gets a tag and a coded element with label $\pre$.  Tags associated with label $\pre$ are not 
visible to the readers, since the servers respond to \emph{query} messages {only }with finalized tags. Once a quorum, say $Q_{pw},$ has acknowledged receipt of the coded elements to the pre-write phase,
the writer proceeds to its \emph{finalize} phase. In this phase, it propagates a finalize ($\fin$) label with the tag and waits for a response from a quorum of servers, say $Q_{fw}.$ The purpose of propagating the $\fin$ label is to record that the coded elements associated with the tag have been propagated to a quorum\footnote{It is worth noting that $Q_{fw}$ and $Q_{pw}$ need not be the same quorum.}. In fact, when a tag {appears anywhere in the system associated with a $\fin$ label}, it means that the corresponding coded elements\str{ have} reached a quorum $Q_{pw}$ with a $\pre$ label at some previous point. The operation {of a writer} in {the }two phases {following its \emph{query phase }}helps overcome the challenge of handling writer failures. In particular, notice that only tags with the $\fin$ label are visible to the 
reader. This ensures that the reader gets at least $k$ unique coded elements from any quorum of non-failed nodes in response to its finalize messages, because such a quorum has an intersection of at least $k$ nodes with $Q_{pw}$. Finally, the reader helps propagate the tag to a quorum, and this helps complete possibly failed writes as well.

We note that the server gossip is not necessary for correctness of CAS. We use `$\mathrm{gossip}$' in CAS mainly because it simplifies the proof of atomicity of the \emph{CASGC} algorithm, which is presented in Section \ref{sec:CASgc}. 

\subsection{Statements and proofs of correctness}
We next state the main result of this section.

\begin{mainth}\label{thm:dar}
CAS emulates shared atomic read/write memory.
\end{mainth}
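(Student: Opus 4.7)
The plan is to decompose the statement into its two constituent requirements, atomicity and liveness, and dispatch each using standard tools adapted to the coded setting. For atomicity, I would invoke the classical characterization (e.g., Lemma 13.16 of Lynch's \emph{Distributed Algorithms}) that reduces atomicity to exhibiting, for every well-formed execution, a partial order $\prec$ on completed operations that extends real-time precedence, totally orders the writes, places each read after at least one write, and makes each read return the value of its immediately preceding write. The natural choice is to order operations by the tag they manipulate: each write is labeled with the unique tag it generates in its \emph{pre-write} phase, and each read is labeled with the tag it decodes in its \emph{finalize} phase; break ties by placing reads after the write of the same tag and totally ordering concurrent reads of the same tag arbitrarily.

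The heart of the argument is a handful of invariants about reachable states, which I would prove by induction on execution length. (a) Tag uniqueness: because a writer's tag is $(z{+}1,\text{id})$, distinct write operations produce distinct tags, so $\prec$ totally orders writes. (b) Coded-element persistence: once a server stores $(t,w_s,\pre)$ or $(t,w_s,\fin)$ with $w_s\ne\Null$, it never loses $w_s$ (the server rules only upgrade $\pre\to\fin$). (c) Finalization witness: if \emph{any} server ever holds a triple of the form $(t,\cdot,\fin)$, then at some earlier point some quorum $Q_{pw}$ of servers each stored $(t,w_s,\cdot)$ with the correct coded element $w_s=\pi_{\{s\}}(\Phi(v))$, where $v$ is the value written under tag $t$. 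Invariant (c) follows by tracing how $\fin$ labels enter the system: via a writer's \emph{finalize} message (which is only sent after its \emph{pre-write} phase completed at a quorum), via a reader's \emph{finalize} message (which is only sent for a tag returned by some server's \emph{query} response, hence already finalized), or via \emph{gossip} (which only propagates tags that were $\fin$ somewhere).

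With these invariants in hand, I would verify the atomicity axioms. \textbf{Real-time precedence:} if operation $\pi_1$ with tag $t_1$ completes before $\pi_2$ begins, then $\pi_2$'s \emph{query} quorum $Q_q$ intersects a quorum $Q$ of servers that already hold $(t_1,\cdot,\fin)$ by the time $\pi_1$ ends (for a write, $Q$ is the finalize quorum; for a read, the finalize messages implant $\fin$ at a quorum). Lemma~\ref{lem:quorums}(i) gives $|Q_q\cap Q|\ge k\ge 1$, so $\pi_2$'s query response contains a tag $\ge t_1$, and hence $\pi_2$ works with a tag $\ge t_1$. \textbf{Read correctness:} if a read selects tag $t$ in its \emph{finalize} phase, then by (c) some quorum $Q_{pw}$ stored the genuine coded elements for $t$; the reader's finalize quorum $Q_{fr}$ satisfies $|Q_{pw}\cap Q_{fr}|\ge k$ by Lemma~\ref{lem:quorums}(i), and by (b) each server in the intersection still holds its coded element and will return it in response to the reader's finalize message. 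Thus the reader obtains at least $k$ coded elements produced by $\Phi(v)$ and, by Definition~\ref{def:MDS}, decodes exactly $v$. Combined with tag uniqueness, this shows the read returns the value of its immediate $\prec$-predecessor write.

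For liveness, I would show each of the three phases terminates whenever at most $f$ servers have failed. By Lemma~\ref{lem:quorums}(ii), $\mathcal{Q}$ contains a quorum of non-failed servers, each of which responds to every message it receives (the server code has no blocking conditions), so every client waiting for a quorum eventually unblocks; a read's finalize phase in particular collects the required $\ge k$ coded elements by the invariant argument above. The step I expect to be the most delicate is invariant (c), because it must account for all three routes by which a $\fin$ label can appear at a server (writer-finalize, reader-finalize, and gossip) and recursively trace each back to a completed pre-write quorum; the \emph{gossip} route in particular needs the induction hypothesis applied at the sending server, and the reader-finalize route needs one to carefully unwind a server's query response to the original writer's actions. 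Everything else is either bookkeeping or a direct quorum-intersection argument.
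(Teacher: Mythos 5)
Your proposal decomposes the theorem into atomicity and liveness and proves each exactly as the paper does: atomicity via Lemma 13.16 of Lynch with the tag-based partial order, quorum-intersection arguments to show tag monotonicity across real-time precedence, and tag uniqueness for writes; liveness via Lemma~\ref{lem:quorums}(ii) plus the observation that a read's query tag can be traced back to a completed pre-write quorum, giving at least $k$ coded-element responses. The only cosmetic difference is that you package the ``trace a $\fin$ label back to a completed pre-write quorum'' reasoning as an explicit inductive invariant (your (c)), whereas the paper makes the same argument inline within the liveness proof and implicitly in Lemma~\ref{lem:qfwproperty}; the substance is identical.
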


To prove Theorem~\ref{thm:dar}, 
we show atomicity, Lemma~\ref{lem:atomicity}, and liveness, Lemma~\ref{lem:liveness}.
\subsubsection{Atomicity}

\begin{Nlemma}
\label{lem:atomicity}
CAS($k$) is atomic.
\end{Nlemma}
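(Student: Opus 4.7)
My approach is to use the standard tag-based technique for proving atomicity of quorum-based shared memory algorithms (cf. Lemma~13.16 in Lynch 1996). I would assign a tag $T(\pi)$ to each completed operation $\pi$: for a write, $T(\pi)$ is the tag $(z{+}1, \text{`}\mathrm{id}\text{'})$ formed in its pre-write phase; for a read, $T(\pi)$ is the tag $t$ selected at the end of its query phase (and sent in its finalize phase). Using these tags, I would verify three properties that are sufficient for atomicity: (P1) the write tags are distinct and totally ordered; (P2) if $\pi_1$ completes before $\pi_2$ begins, then $T(\pi_1) \le T(\pi_2)$, with strict inequality when $\pi_2$ is a write; and (P3) the value returned by a read with tag $T(\pi)$ equals the value written by the unique write with that tag (or $v_0$ if $T(\pi)=t_0$). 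Property (P1) is immediate from the lexicographic order on $(z,\text{`}\mathrm{id}\text{'})$ and the fact that each client monotonically increments $z$.

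The crux of the argument is the following invariant, which I would establish by induction on the steps of an arbitrary execution.

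\textbf{Invariant (I).} If at some point of the execution some server holds a triple $(t, *, \fin)$ with $t \neq t_0$, then there exists a quorum $Q_{pw} \in \mathcal{Q}$ such that, from that point onward, every server $s \in Q_{pw}$ holds a triple of the form $(t, w_s, *)$, where $w_s = \pi_{\{s\}}(\Phi(v))$ and $v$ is the value of the write operation that created $t$.

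For the inductive step I would enumerate every way a $\fin$ label can first appear for a tag $t$ at a server (finalize from the writer, finalize from a reader, or gossip). In each case I would trace back and observe that a $\fin$-labeled copy of $t$ can only exist after the writer has collected pre-write acknowledgments from some full quorum $Q_{pw}$, so every server in $Q_{pw}$ held $(t, w_s, \pre)$ at that earlier time. I would then walk through the server rules and verify that none of them ever removes or alters a coded element $w_s$: the pre-write rule is the only rule that inserts an element, and the finalize-writer, finalize-reader, and gossip rules only promote the label from $\pre$ to $\fin$ while leaving $w_s$ untouched. This closes the invariant.

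With (I) in hand, (P2) follows from the quorum-intersection property Lemma~\ref{lem:quorums}(i): by the time $\pi_1$ terminates, some quorum $Q_1$ (namely, the finalize quorum $Q_{fw}$ of a write, or the quorum that returned coded elements to a read) holds $(T(\pi_1), *, \fin)$, so the quorum $Q'$ that responds to $\pi_2$'s query meets $Q_1$ in at least $k \ge 1$ servers, forcing $\pi_2$ to see $T(\pi_1)$ among the finalized tags and hence $T(\pi_2) \ge T(\pi_1)$; the strict inequality when $\pi_2$ is a write comes from $\pi_2$'s construction of a fresh tag by incrementing $z$. For (P3), applying (I) to the tag $T(\pi_2)$ (which $\pi_2$ learned through a $\fin$-labeled query response) gives a quorum $Q_{pw}$ all of whose servers hold $(T(\pi_2), w_s, *)$; the read's finalize quorum $Q_r$ intersects $Q_{pw}$ in at least $k$ servers by Lemma~\ref{lem:quorums}(i), each of which returns its coded element $w_s$. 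Thus the reader obtains at least $k$ distinct coded elements of the same written value, and by Definition~\ref{def:MDS} inverting $\Phi$ on any such set recovers exactly that value.

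The main obstacle will be the careful bookkeeping for invariant (I). Because the server's pre-write rule \emph{ignores} the incoming coded element whenever the tag is already known, one must rule out the scenario where a server in $Q_{pw}$ learns of $t$ via a $\fin$-labeled message first (from a reader, the writer's finalize, or gossip) and therefore never stores $w_s$. The argument hinges on the observation that any $\fin$-labeled message mentioning $t$ can only be produced after the writer has already received pre-write acknowledgments from all of $Q_{pw}$; so each server in $Q_{pw}$ has already stored $(t, w_s, \pre)$ before any $\fin$-labeled message about $t$ enters the system. Formalizing this ``causal'' precedence cleanly—without a separate simulation relation—is the technical work of the proof.
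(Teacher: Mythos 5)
Your proposal follows the same high-level structure as the paper's proof: assign tags $T(\pi)$, order operations by those tags, and verify the three conditions of Lemma~13.16 of Lynch (1996). Your (P1) corresponds to the paper's Lemma~\ref{inv:tags}, and your argument for (P2) — a terminated operation has propagated $(T(\pi),*,\fin)$ to a quorum, and any later query must meet that quorum — is precisely the paper's Lemma~\ref{lem:qfwproperty} followed by Lemma~\ref{lem:tagsordering}.

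The one place you depart from the paper is in how you handle (P3). You introduce Invariant~(I), which tracks the pre-write quorum $Q_{pw}$ and shows (via causal precedence of the pre-write acknowledgments before any $\fin$-labeled message) that the servers in $Q_{pw}$ retain their coded elements, so that the reader's finalize quorum intersects $Q_{pw}$ in at least $k$ servers that return coded elements. This invariant is correct, and your identification of the ``pre-write-ignored-because-tag-already-known'' scenario as the potential pitfall, and its resolution via causality, is exactly the right technical concern. However, note that the paper deliberately separates this reasoning: the claim that a reader actually obtains $k$ coded elements belongs to the liveness proof (Lemma~\ref{lem:liveness}), where essentially your Invariant~(I) reappears as the $Q_{pw}(t)\cap Q_{fw}$ argument. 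For atomicity alone, Lemma~13.16 already assumes every operation in $\beta$ terminates, so the reader is guaranteed to have inverted $\Phi$ on some $k$ coded elements; all that remains is to observe that any coded element carrying tag $t$ was necessarily generated by $\Phi$ applied to the value of the unique write with tag $t$, which the paper states informally. Your route proves more than is strictly needed for (P3), but it is not wrong — it just merges atomicity and liveness reasoning where the paper keeps them apart. One small precision point: Invariant~(I) as stated should be understood modulo server crashes (a failed server's state is frozen, and downstream quorum-intersection arguments only ever appeal to servers that respond, hence have not failed); the paper's Lemma~\ref{lem:qfwproperty} carries the same implicit caveat.
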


The main idea of our proof of atomicity involves defining, on the operations of any execution $\beta$ of CAS, a partial order $\prec$ that satisfies the sufficient conditions for atomicity described by Lemma 13.16 of \cite{Lynch1996}. We state these sufficient conditions in Lemma \ref{def:atomicity} next. 

\begin{Nlemma}[Paraphrased Lemma 13.16~\cite{Lynch1996}.]
\label{def:atomicity}
%
Suppose that the environment is well-behaved{,} meaning that an operation is invoked
at a client only if no other operation was performed by the client, or the client
received a response to the last operation it initiated.
Let $\beta$ be a (finite or infinite) execution of a read/write object, where 
$\beta$ consists of invocations and responses of read and write operations and 
where all operations terminate.  
Let $\Pi$ be the set of all operations in $\beta$.

Suppose that $\prec$ is an irreflexive partial ordering of all the operations in $\Pi$, 
satisfying the following properties:
{\bf (1)} If the response for $\pi_{1}$ precedes the invocation for $\pi_{2}$ in $\beta$,
	then it cannot be the case that $\pi_{2} \prec \pi_{1}$.
{\bf (2)} If $\pi_{1}$ is a write operation in $\Pi$ and $\pi_{2}$ is any operation in $\Pi$,
	then either $\pi_{1} \prec \pi_{2}$ or $\pi_{2} \prec \pi_{1}$.
{\bf (3)} The value returned by each read operation is the value written by the last 
 	preceding write operation according to $\prec$ (or $v_{0}$, if there is no such
	write).
\end{Nlemma}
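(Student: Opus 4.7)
The plan is to apply Lemma \ref{def:atomicity} by associating a tag with every completed operation and then ordering operations by these tags. For a write operation $\pi$, let $T(\pi)$ be the tag created in its pre-write phase; for a read operation $\pi$, let $T(\pi)$ be the tag sent in its finalize phase (equivalently, the tag of the returned value). Tags are unique across write operations by construction (the client identifier is embedded, and successive write tags at a single client are strictly increasing). Define $\pi_1 \prec \pi_2$ iff either $T(\pi_1) < T(\pi_2)$, or $T(\pi_1) = T(\pi_2)$ with $\pi_1$ a write and $\pi_2$ a read. This is clearly an irreflexive partial order and automatically gives condition (2) of Lemma \ref{def:atomicity} (totality on writes).

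The core technical step is a monotonicity lemma: if $\pi_1$ completes before $\pi_2$ is invoked in the execution $\beta$, then $T(\pi_1) \leq T(\pi_2)$, with strict inequality whenever $\pi_2$ is a write. To prove this I would use Lemma \ref{lem:quorums}(i). Before $\pi_1$ terminates, some quorum $Q_1$ contains a triple $(T(\pi_1),*,\fin)$: for a write, this is $Q_{fw}$ from its finalize phase; for a read, the servers that returned coded elements in the reader's finalize phase have also stored $T(\pi_1)$ as $\fin$ by the server protocol. When $\pi_2$ begins, its query phase collects responses from some quorum $Q_2$, and by Lemma \ref{lem:quorums}(i) the set $Q_1 \cap Q_2$ contains at least $k \geq 1$ servers, each of which reports a finalized tag $\geq T(\pi_1)$. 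Thus the maximum tag selected by $\pi_2$ in its query phase is $\geq T(\pi_1)$, giving $T(\pi_2) \geq T(\pi_1)$ for a read, and $T(\pi_2) > T(\pi_1)$ for a write because a write strictly increments the integer component. Condition (1) follows immediately: if the response of $\pi_1$ precedes the invocation of $\pi_2$, then $T(\pi_1) \leq T(\pi_2)$, so we cannot have $\pi_2 \prec \pi_1$.

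Condition (3) requires showing that every read returns the value written by the latest preceding write under $\prec$. Let $\pi_R$ be a read returning a value with tag $t$. First I argue that some write $\pi_W$ has $T(\pi_W) = t$: when the reader observes $t$ as finalized during its query phase, either it was stamped $\fin$ by a server that received a writer's finalize message, or by a server that received a reader's finalize, or via gossip; by induction on the causal chain, the tag $t$ originated from the pre-write phase of a unique write $\pi_W$. Moreover, whenever a server holds a triple $(t, \fin)$ (possibly with a $\Null$ coded element), at some earlier point a quorum $Q_{pw}$ had received the pre-write triples $(t, w_s, \pre)$ from $\pi_W$. The reader's finalize-phase quorum $Q_R$ intersects $Q_{pw}$ in at least $k$ servers, each of which had stored the correct coded element $w_s$ corresponding to $\pi_W$'s value and therefore returns it (the server protocol never overwrites a non-null coded element for tag $t$). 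Hence the reader collects at least $k$ coded elements generated by $\pi_W$, and by the MDS property of $\Phi$ (Definition \ref{def:MDS}) reconstructs exactly the value written by $\pi_W$. By the monotonicity lemma, no write with tag $> t$ can precede $\pi_R$ in real time, so $\pi_W$ is the last write preceding $\pi_R$ under $\prec$, which is condition (3).

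The main obstacle is the second paragraph: carefully justifying that a reader's tag is $\geq$ the tag of any preceding completed operation, and in particular that a read's tag is itself propagated to a quorum as $\fin$ by the time the read returns. The server-side case analysis for finalize-from-reader (which writes $(t,\Null,\fin)$ if no matching entry exists) and the induction on the provenance of $\fin$-labels are the pieces that need the most care; the gossip protocol is not needed here but does not interfere. All remaining pieces are straightforward bookkeeping using the handshake discipline and the quorum intersection from Lemma \ref{lem:quorums}.
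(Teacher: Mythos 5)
Your proposal does not prove the statement in question. Lemma \ref{def:atomicity} is the (paraphrased) Lemma 13.16 of Lynch~\cite{Lynch1996}: a \emph{general} sufficient condition asserting that, for any read/write object, if an execution $\beta$ with a well-behaved environment and terminating operations admits an irreflexive partial order $\prec$ satisfying properties (1)--(3), then $\beta$ is atomic. What you have written is an argument that the specific algorithm CAS \emph{admits} such a partial order --- you even open with ``the plan is to apply Lemma \ref{def:atomicity}.'' That is the content of Lemma \ref{lem:atomicity} (atomicity of CAS), which the paper proves separately using essentially the ingredients you describe: the tag-based order, the quorum-intersection monotonicity argument (the paper's Lemmas \ref{lem:qfwproperty} and \ref{lem:tagsordering}), uniqueness of write tags (Lemma \ref{inv:tags}), and the MDS-decoding argument for property (3). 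Had the target been Lemma \ref{lem:atomicity}, your route would essentially coincide with the paper's. As a proof of Lemma \ref{def:atomicity} itself, however, it is circular: you assume the implication ``(1)--(3) imply atomicity'' rather than establishing it.

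A proof of the stated lemma must work at the level of an arbitrary execution and an arbitrary order $\prec$ satisfying the three properties, and must exhibit a legal sequential reordering of $\beta$. The standard argument extends $\prec$ to a total order by serializing the writes (which are totally ordered among themselves by (2)) and inserting each read immediately after the write whose value it returns (identified via (3), or before all writes if it returns $v_0$), then checks, using (1), that this serialization is consistent with the real-time precedence of non-overlapping operations, and finally verifies that the resulting sequential trace satisfies the read/write register specification. Nothing about CAS, tags, quorums, or erasure coding is relevant to that argument. Note that the paper itself does not reprove this lemma --- it cites it --- so if your task was to supply the missing proof, the quorum machinery should be discarded entirely and replaced by the serialization construction sketched above.
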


{The following definition will be useful in defining a partial order on operations in an execution of CAS that satisfies the conditions of Lemma \ref{def:atomicity}.}
\begin{Ndefinition}
\label{def:tagofoperation}
Consider an execution $\beta$ of CAS and consider an operation $\pi$ that terminates in $\beta$. The \emph{tag} of operation $\pi,$ denoted as $T(\pi),$ is defined as follows: If $\pi$ is a read, then, $T(\pi)$ is the highest tag received in its \emph{query} phase. If $\pi$ is a write, then, $T(\pi)$ is the new tag formed in its \emph{pre-write} phase.
\end{Ndefinition}

We define our partial order $\prec$ as follows: In any execution $\beta$ of CAS, we order operations $\pi_1, \pi_2$ as $\pi_1 \prec \pi_2$ if
{\bf (i)} $T(\pi_1) < T(\pi_2),$ or 
{\bf (ii)} $T(\pi_1) = T(\pi_2),$ $\pi_1$ is a write and $\pi_2$ is a read.
We next argue that the partial ordering $\prec$ satisfies the conditions of \ref{def:atomicity}. We first show in Lemma \ref{lem:qfwproperty} that, in any execution $\beta$ of CAS, at any point after an operation $\pi$ terminates, the tag $T(\pi)$ has been propagated with the $\fin$ label to at least one quorum of servers. Intuitively speaking, Lemma \ref{lem:qfwproperty} means that if an operation $\pi$ terminates, the tag $T(\pi)$ is visible to any operation that is invoked after $\pi$ terminates. We crystallize this intuition in Lemma \ref{lem:tagsordering}, where we show that any operation that is invoked after an operation $\pi$ terminates acquires a tag that is at least as large as $T(\pi)$. Using Lemma \ref{lem:tagsordering} we show Lemma \ref{inv:tags}, which states that the tag acquired by each write operation is unique. Then we show that Lemma \ref{lem:tagsordering} and Lemma \ref{inv:tags} imply conditions {\bf (1)} and {\bf (2)} of Lemma \ref{def:atomicity}. By examination of the algorithm, we show that CAS also satisfies condition {\bf (3)} of Lemma \ref{def:atomicity}. 

\begin{Nlemma}
In any execution $\beta$ of CAS, for an operation $\pi$ that terminates in $\beta$, there exists a quorum $Q_{fw}(\pi)$ such that the following is true at every point of the execution $\beta$ after $\pi$ terminates:
 Every server of $Q_{fw}(\pi)$ has $(t, *, \fin)$ in its set of stored triples{, where} $*$ is 
either a coded element or `$\mathrm{null}$', and $t = T(\pi)$.
\label{lem:qfwproperty}
\end{Nlemma}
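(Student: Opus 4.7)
The plan is to take $Q_{fw}(\pi)$ to be the quorum whose servers respond to $\pi$'s \emph{finalize} phase (such a quorum exists since $\pi$ terminated). I would then argue two things: (a) by the time $\pi$ terminates, every server in $Q_{fw}(\pi)$ has stored a triple $(T(\pi), *, \fin)$ with $*$ either a coded element or $\Null$; and (b) once such a triple is stored at a server, the server continues to have some triple of that form (same tag, $\fin$ label) at all later points of $\beta$.

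For (a), I would split into cases on whether $\pi$ is a write or a read. If $\pi$ is a write with $t=T(\pi)$, then each server $s \in Q_{fw}(\pi)$ must have executed its handler for the \emph{finalize} message from the writer before acknowledging: inspecting that handler, either $s$ already had $(t,w_s,\pre)$, which it updated to $(t,w_s,\fin)$, or it had no record of $t$, in which case it added $(t,\Null,\fin)$. In either case, a triple of the required form is in the server's set by the time it sends its acknowledgment, and hence certainly by the time $\pi$ terminates. If $\pi$ is a read, the argument is exactly analogous, using the server's handler for a \emph{finalize} from a reader, which similarly either upgrades an existing $(t,w_s,*)$ to $(t,w_s,\fin)$ or inserts $(t,\Null,\fin)$ before responding.

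For (b), I would examine each of the five server transitions (receipt of \emph{query}, \emph{pre-write}, \emph{finalize} from writer, \emph{finalize} from reader, and \mbox{`$\mathrm{gossip}$'}) and observe two structural facts: no transition ever removes a triple from the stored set, and any transition that acts on a triple with a given tag either leaves its label at $\fin$ or upgrades $\pre$ to $\fin$ (labels are never downgraded, and a $\Null$ second component is never replaced by anything that would lose the $\fin$ label). These two facts immediately yield the invariant ``server $s$ has some triple of the form $(t,*,\fin)$'' once it first holds, which together with (a) gives the lemma. The main ``obstacle'' is really just bookkeeping: making sure I cover each server handler when proving invariance and match the quorums up with the correct phase of $\pi$. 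There is no subtle concurrency argument here, since the claim is entirely about local server state after the finalize acknowledgments for $\pi$ have been collected.
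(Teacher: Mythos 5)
Your proposal is correct and follows essentially the same approach as the paper: take $Q_{fw}(\pi)$ to be the quorum responding to $\pi$'s \emph{finalize} phase, observe from the server protocol that each such server has a triple $(T(\pi),*,\fin)$ at the point it responds, and note that this triple persists thereafter since servers never remove triples or downgrade a $\fin$ label. The paper's proof is just a more compressed version of your (a)+(b) decomposition.
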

\begin{proof}
The proof 
is the same whether $\pi$ is a read or a write operation. The operation $\pi$ terminates after completing its \emph{finalize} phase, during which it receives responses from a quorum, say $Q_{fw}(\pi),$ to its \emph{finalize} message. This means that every server $s$ in $Q_{fw}(\pi)$ responded to the \emph{finalize} message from $\pi$ at some point before the point of termination of $\pi$. From the server protocol, we can observe that every server $s$ in $Q_{fw}(\pi)$ stores the triple $(t, *, \fin)$ at the point of responding to the \emph{finalize} message of $\pi$, where $*$ is either a coded element or $\Null$. Furthermore, the server $s$ stores the triple at every point after the point of responding to the \emph{finalize} message of $\pi$ and hence at every point after the point of termination of $\pi$.
\end{proof}
{\begin{Nlemma}
Consider any execution $\beta$ of CAS, and let $\pi_1,\pi_2$ be two operations that terminate in $\beta$. Suppose that $\pi_1$ returns before $\pi_2$ is invoked. Then $T(\pi_2) \geq T(\pi_1)$. Furthermore, if $\pi_2$ is a write, then $T(\pi_2) > T(\pi_1)$.
\label{lem:tagsordering}
\end{Nlemma}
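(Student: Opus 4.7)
The plan is to combine Lemma~\ref{lem:qfwproperty} with the quorum intersection property from Lemma~\ref{lem:quorums}(i), and then to unpack what the tag-acquisition rules of CAS produce in each of the two cases for $\pi_2$.

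First I would invoke Lemma~\ref{lem:qfwproperty} to fix a quorum $Q_{fw}(\pi_1)$ such that, at every point after $\pi_1$ returns, every server in $Q_{fw}(\pi_1)$ has a triple $(T(\pi_1), *, \fin)$ in its stored set. Because $\pi_2$ is invoked strictly after $\pi_1$ returns, its \emph{query} phase necessarily takes place after that point, and it terminates only upon receiving responses from some quorum $Q_q(\pi_2) \in \mathcal{Q}$. By Lemma~\ref{lem:quorums}(i), $|Q_{fw}(\pi_1) \cap Q_q(\pi_2)| \geq k \geq 1$, so there is at least one server $s$ that both lies in $Q_{fw}(\pi_1)$ and responded to the query of $\pi_2$. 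At the instant $s$ processed that query, it still held $(T(\pi_1), *, \fin)$, and per the server protocol it responded with the highest tag $t_s$ it then had that carried a $\fin$ label. Hence $t_s \geq T(\pi_1)$, and the maximum tag $t^{*}$ received by $\pi_2$ in its query phase satisfies $t^{*} \geq T(\pi_1)$.

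Now I would split into cases based on the type of $\pi_2$. If $\pi_2$ is a read, then by Definition~\ref{def:tagofoperation} we have $T(\pi_2) = t^{*} \geq T(\pi_1)$, giving the first claim. If $\pi_2$ is a write, then by the \emph{pre-write} rule $T(\pi_2) = (z+1, \text{`}\mathrm{id}\text{'})$, where $z$ is the integer component of $t^{*}$. Writing $T(\pi_1) = (z_1, \text{`}\mathrm{id}_1\text{'})$, the inequality $t^{*} \geq T(\pi_1)$ in the lexicographic order on $\mathcal{T}$ yields $z \geq z_1$, and therefore $z+1 > z_1$. Consequently $T(\pi_2) > T(\pi_1)$ lexicographically, regardless of the client identifiers, which establishes the strict inequality.

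I do not anticipate any serious obstacle: both required ingredients (the persistence of a finalized copy of $T(\pi_1)$ after termination, and the $k$-wise quorum intersection) are already in hand, and the only mild subtlety is in the write case, where one must pass from a comparison of full tags to a comparison of their integer components to conclude a strict increase. The argument uses nothing beyond Lemma~\ref{lem:qfwproperty}, Lemma~\ref{lem:quorums}, Definition~\ref{def:tagofoperation}, and the server and writer rules of Fig.~\ref{fig:CAS}.
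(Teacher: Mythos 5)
Your proposal is correct and follows essentially the same route as the paper's own proof: invoke Lemma~\ref{lem:qfwproperty} for the persistence of a finalized $T(\pi_1)$ on a quorum, intersect with the query quorum of $\pi_2$ via Lemma~\ref{lem:quorums}(i), and conclude the query-phase tag $\hat{T}(\pi_2)\geq T(\pi_1)$, then split into the read and write cases. The only cosmetic difference is that the paper closes the write case by the chain $T(\pi_2)>\hat{T}(\pi_2)\geq T(\pi_1)$ while you re-derive the strict inequality directly from the integer components; both are sound.
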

\begin{proof}
To establish the lemma, it suffices to show that the tag acquired in the \emph{query} phase of $\pi_2,$ denoted as $\hat{T}(\pi_2),$ is at least as big as $T(\pi_1)$, that is, it suffices to show that $\hat{T}(\pi_2) \geq T(\pi_1)$. This is because, by examination of the client protocols, we can observe that if $\pi_2$ is a read, $T(\pi_2) = \hat{T}(\pi_2),$ and if $\pi_2$ is a write, $T(\pi_2) > \hat{T}(\pi_2)$. 

To show that $\hat{T}(\pi_2) \geq T(\pi_1)$ we use Lemma \ref{lem:qfwproperty}. We denote the quorum of servers that respond to the \emph{query} phase of $\pi_2$ as $\hat{Q}(\pi_2)$. We now argue that every server $s$ in $\hat{Q}(\pi_2) \cap Q_{fw}(\pi_1)$ responds to the \emph{query} phase of $\pi_2$ with a tag that is at least as large as $T(\pi_1)$. To see this, since $s$ is in $Q_{fw}(\pi_1)$, Lemma \ref{lem:qfwproperty} implies that $s$ has a tag $T(\pi_1)$ with label $\fin$ at the point of termination of $\pi_1$. Since $s$ is in $\hat{Q}(\pi),$ it also responds to the \emph{query} message of $\pi_2$, and this happens at some point after the termination of $\pi_1$ because $\pi_2$ is invoked after $\pi_1$ responds.
From the server protocol, we infer that server $s$ responds to the \emph{query} message of $\pi_2$ with a tag that is no smaller than $T(\pi_1)$. Because of Lemma \ref{lem:quorums}, there is at least one server $s$ in $\hat{Q}(\pi_2) \cap Q_{fw}(\pi_1)$ implying that operation $\pi_2$ receives at least one response in its \emph{query} phase with a tag that is no smaller than $T(\pi_1)$. Therefore $\hat{T}(\pi_2) \geq T(\pi_1)$.
\end{proof}

\begin{Nlemma}\label{inv:tags}
Let $\pi_1, \pi_2$ be write operations that terminate in an execution $\beta$ of CAS. Then $T(\pi_1) \neq T(\pi_2)$. 
\end{Nlemma}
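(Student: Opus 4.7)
The plan is to case-split on whether the two write operations $\pi_1$ and $\pi_2$ are performed by the same client or by different clients, and exploit the structure of tags $(z,\text{`id'})$ together with the previously established Lemma~\ref{lem:tagsordering}.

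First I would dispose of the easy case where $\pi_1$ and $\pi_2$ are invoked by \emph{different} clients. Looking at the pre-write step of the write protocol, each write forms its new tag as $(z+1, \text{`id'})$, where `id' is the identifier of the invoking client. Since distinct clients have distinct identifiers and tags are ordered lexicographically, $T(\pi_1)$ and $T(\pi_2)$ differ in their second coordinate and hence are unequal.

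Next I would handle the case where both writes are performed by the \emph{same} client $c$. The well-behaved environment assumption (the ``handshake'' discipline stated in Section~\ref{sec:background} and Lemma~\ref{def:atomicity}) guarantees that $c$ does not invoke a new operation until it has received a response to its preceding one. Therefore one of the two writes, say $\pi_1$, must respond before $\pi_2$ is invoked. Applying Lemma~\ref{lem:tagsordering} with $\pi_2$ a write then yields $T(\pi_2) > T(\pi_1)$, so in particular the two tags are distinct.

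Combining the two cases gives $T(\pi_1) \neq T(\pi_2)$ in every execution $\beta$ of CAS, as desired. I do not anticipate any real obstacle here: all the machinery is already in place, and the proof is essentially a bookkeeping argument that leverages the id-tiebreaker built into the tag structure together with the monotonicity statement of Lemma~\ref{lem:tagsordering}. The only subtlety worth stating explicitly is that Lemma~\ref{lem:tagsordering} requires both operations to terminate, which is precisely the hypothesis of the present lemma.
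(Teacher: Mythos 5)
Your proposal is correct and follows essentially the same two-case argument as the paper: different clients are distinguished by the id component of the tag, while for the same client the handshake discipline forces one write to precede the other and Lemma~\ref{lem:tagsordering} then gives strict inequality.
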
}
\begin{proof}
{Let $\pi_1,\pi_2$ be two write operations that terminate in execution $\beta.$ Let $C_1, C_2$ respectively indicate the identifiers of the client nodes at which operations $\pi_1, \pi_2$ are invoked. We consider two cases. \\
\emph{Case 1, $C_1 \neq C_2$:} From the write protocol, we note that $T(\pi_i) = (z_i, C_i)$. Since $C_1 \neq C_2$, we have $T(\pi_1) \neq T(\pi_2)$.\\
\emph{Case 2, $C_1 = C_2:$}} Recall that operations at the same client follow a ``handshake'' discipline, where a new invocation awaits the response of a preceding invocation. This means that one of the two operations $\pi_1, \pi_2$ should complete before the other starts. Suppose that, without loss of generality, the write operation $\pi_1$ completes before the write operation $\pi_2$ starts. Then, Lemma \ref{lem:tagsordering} implies that $T(\pi_2) > T(\pi_1)$. This implies that $T(\pi_2) \neq T(\pi_1)$.\qed
\end{proof}

\noindent \textit{Proof of Lemma \ref{lem:atomicity}.}
Recall that we define our ordering $\prec$ as follows: In any execution $\beta$ of CAS, we order operations $\pi_1, \pi_2$ as $\pi_1 \prec \pi_2$ if
{\bf (i)} $T(\pi_1) < T(\pi_2),$ or 
{\bf (ii)} $T(\pi_1) = T(\pi_2),$ $\pi_1$ is a write and $\pi_2$ is a read.

We first verify that the above ordering is {a partial order}, that is, if $\pi_1 \prec \pi_2,$ then it cannot be that $\pi_2 \prec \pi_1$. {We prove this by contradiction. Suppose that $\pi_1 \prec \pi_1$ and $\pi_2 \prec \pi_1$. Then, by definition of the ordering, we have that $T(\pi_1) \leq T(\pi_2)$ and vice-versa, implying that $T(\pi_1) = T(\pi_2)$. Since $\pi_1 \prec \pi_2$ and $T(\pi_1) = T(\pi_2)$, we have that $\pi_1$ is a write and $\pi_2$ is a read. But a symmetric argument implies that $\pi_2$ is a write and $\pi_1$ is a read, which is a contradiction. Therefore $\prec$ is a partial order.}

With the ordering $\prec$ defined as above, we now show {that the} three properties of Lemma \ref{def:atomicity} are satisfied. For property $\mathbf{(1)}$, consider an execution $\beta$ and two distinct operations $\pi_1, \pi_2$ in $\beta$ such that $\pi_1$ returns before $\pi_2$ is invoked. {If $\pi_2$ is a read, then Lemma \ref{lem:tagsordering} implies that $T(\pi_2) \geq T(\pi_1)$. By definition of the ordering, it cannot be the case that $\pi_2 \prec \pi_1$. If $\pi_1$ is a write, then Lemma \ref{lem:tagsordering} implies that $T(\pi_2) > T(\pi_1)$ and so, $\pi_1 \prec \pi_2$. Since $\prec$ is a partial order, it cannot be the case that $\pi_2 \prec \pi_1$.}

Property $\mathbf{(2)}$ follows from the definition of the $\prec$ in conjunction with Lemma \ref{inv:tags}.  

{Now we show property $\mathbf{(3)}$}: The value returned by each read operation is the value written by the last preceding write operation according to $\prec,$ or $v_0$ if there is no such write. Note that every version of the data object written in execution $\beta$ is \emph{uniquely} associated with a write operation in $\beta$. Lemma \ref{inv:tags} implies that every version of the data object being written can be uniquely associated with \emph{tag.} Therefore, to show that a read $\pi$ returns the last preceding write, we only need to argue that the read returns the value associated with $T(\pi)$. From the write, read, and server protocols, it is clear that a value and/or its coded elements are always paired together with the corresponding tags at every state of every component of the system. {In particular, the read returns the value from $k$ coded elements by inverting the MDS code $\Phi$; these $k$ coded elements were obtained at some previous point by applying $\Phi$ to the value associated with $T(\pi)$. Therefore Definition \ref{def:MDS} implies that the read returns the value associated with $T(\pi).$} \qed



\subsubsection{Liveness}
We now state the liveness condition satisfied by CAS. 

\remove{{It is more challenging to show the termination of a reader's \emph{finalize} phase.} Similar to the arguments for the \emph{query} phase, {we can show that} a quorum, say $Q_{fw}$ of servers responds to a reader's \emph{finalize} message. For the \emph{finalize} phase of a read to terminate, there is an additional requirement that at least $k$ servers include coded elements in their responses. To show this, suppose that the read acquired a tag $t$ in its \emph{query} phase. From examination of the CAS algorithm, we can infer that, at some point before the point of termination of the read's \emph{query} phase, a writer propagated a \emph{finalize} message with tag $t$. Let us denote by $Q_{pw}(t),$ the set of servers that responded to this write's \emph{pre-write} phase. Now, we argue that all servers in $Q_{pw}(t) \cap Q_{fw}$ respond to the reader's \emph{finalize} message with a coded element. To see this, let $s$ be any server in $Q_{pw}(t) \cap Q_{fw}.$
{
	Since $s$ is in $Q_{pw}(t)$, the server protocol for responding to a \emph{pre-write} message implies that $s$ has a coded element, $w_s$, at the point where it responds to that message.
	Since $s$ is in $Q_{fw},$ it also responds to the reader's \emph{finalize} message, and this happens at some point after it responds to the \emph{pre-write} message.
So it responds with its coded element $w_s.$}
 From Lemma \ref{lem:quorums}, it is clear that $|Q_{pw}(t) \cap Q_{fw}| \geq k$ implying that the reader receives at least $k$ coded elements in its \emph{finalize} phase and hence terminates.}}

\begin{Nlemma}[Liveness]
	\label{lem:liveness}
{CAS($k$) satisfies the following \emph{liveness} condition: {If $1 \leq k \leq N-2f$, {then every non-failing\footnote{An operation is said to have failed if the client performing the operation fails after its invocation but before its termination.} operation terminates} in every fair execution of CAS($k$)} where the number of server failures is no bigger than $f$ .}
\end{Nlemma}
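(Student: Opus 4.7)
\noindent\textit{Proof proposal.}
The overall plan is to walk through every phase that a non-failing operation executes and show that each phase completes. Each phase has two kinds of completion conditions: (a) receiving responses from \emph{some} quorum, and (b) for the reader's \emph{finalize} phase only, additionally receiving at least $k$ coded elements among those responses. Condition (a) will follow in every phase from Lemma~\ref{lem:quorums}(ii), combined with fair message delivery. Condition (b) is the main obstacle and will require a separate argument that the tag used in a reader's \emph{finalize} phase has already been pre-written to a full quorum of servers.

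For the \emph{query} phase (of both readers and writers), the \emph{pre-write} phase, and the writer's \emph{finalize} phase, the argument is uniform: Lemma~\ref{lem:quorums}(ii) yields a quorum $Q^*$ of non-failed servers; inspecting the server protocol one sees that every non-failed server unconditionally sends back a response to each of these message types (the initial state $(t_0, w_{0,s}, \fin)$ guarantees even the first \emph{query} receives a well-defined reply). Fairness then ensures that all responses from $Q^*$ are eventually delivered to the client, giving it responses from a full quorum, which is the termination condition for those phases.

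The interesting case is the reader's \emph{finalize} phase. Let $t$ be the tag acquired by the reader in its preceding \emph{query} phase. The plan is to establish the following auxiliary claim: \emph{whenever a tag $t$ appears anywhere in the system labeled $\fin$, there exists a quorum $Q_{pw}(t)$ such that at every subsequent point of the execution, every server in $Q_{pw}(t)$ stores a triple of the form $(t, w_s, \ell)$ with $w_s$ the $s$-th coded element and $\ell\in\{\pre,\fin\}$.} The base case $t=t_0$ is immediate from the initial state, and for $t>t_0$ the $\fin$ label can originate only from a writer's \emph{finalize} message, which is sent only after that writer has collected pre-write acknowledgments from a quorum $Q_{pw}(t)$; since CAS has no garbage collection, servers never drop a stored coded element, so the property persists. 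Given this claim, when the reader issues its \emph{finalize}, a quorum $Q_{fw}$ of non-failed servers eventually replies (as in the previous paragraph); by Lemma~\ref{lem:quorums}(i), $|Q_{pw}(t)\cap Q_{fw}|\ge k$, and each server in this intersection holds the coded element $w_s$ and therefore includes it in its response. Thus the reader collects at least $k$ coded elements, can invert $\Phi$ by Definition~\ref{def:MDS}, and terminates.

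The main obstacle, as already flagged, is justifying the auxiliary claim cleanly. One has to be careful because the $\fin$ label for a tag $t$ can propagate to a server via several disjoint code paths (a writer's or a reader's \emph{finalize} message, or a `$\mathrm{gossip}$' message) and in some of those paths the stored triple is $(t,\Null,\fin)$ rather than $(t,w_s,\fin)$. The point is that the $\fin$ label has to be \emph{born} somewhere, and the only action that labels a tag $\fin$ without already seeing $\fin$ for it is a writer's \emph{finalize} step; that step is guarded by a completed \emph{pre-write} quorum. So tracing the $\fin$ label back to its origin, together with the monotone (no-deletion) behavior of the server's stored set, yields $Q_{pw}(t)$ and preserves the coded elements at all servers in $Q_{pw}(t)$ indefinitely. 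With this in hand, the liveness statement follows.
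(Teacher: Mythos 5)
Your proposal is correct and follows essentially the same route as the paper's proof: dispatch the \emph{query}, \emph{pre-write}, and writer-\emph{finalize} phases uniformly via Lemma~\ref{lem:quorums}(ii) plus unconditional server responses, then handle the reader's \emph{finalize} phase by locating a pre-write quorum $Q_{pw}(t)$ for the reader's tag $t$ and applying Lemma~\ref{lem:quorums}(i) to argue $|Q_{pw}(t)\cap Q_{fw}|\ge k$. The one place you go beyond the paper is in explicitly tracing the $\fin$ label for $t$ back through the various propagation paths (reader finalize, `$\mathrm{gossip}$') to its unique ``birth'' at a writer's \emph{finalize} step, whereas the paper states more tersely that ``from examination of CAS\ldots a writer propagated a \emph{finalize} message with tag $t$''; your extra care here is sound and, if anything, fills in a step the paper leaves implicit.
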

\begin{proof}
{By examination }of the algorithm we observe that termination of any operation depends on 
termination of its phases. {So, to show liveness, we need to show that each phase of each operation terminates. Let us first examine the \emph{query} phase of a read/write operation; note that termination of the \emph{query} phase of a client is contingent on receiving responses from a quorum. Every non-failed server responds to a \emph{query} message with the highest locally available tag marked $\fin$. Since every server is initialized with $(t_0, v_0, \fin)$, every non-failed server has at least one tag associated with the label $\fin$ and hence responds to the client's \emph{query} message. Since the client receives responses from every non-failed server, property ({\bf ii}) of Lemma \ref{lem:quorums} ensures that the \emph{query} phase receives responses from at least one quorum, and hence terminates. We can similarly show that the \emph{pre-write} phase and \emph{finalize} phase of a writer terminate. In particular, termination of each of these phases is contingent on receiving responses from a quorum. Their termination {is }guaranteed from property ({\bf ii}) {of Lemma \ref{lem:quorums}} in conjunction with the fact that every non-failed server responds, at some point, to a \emph{pre-write} message and a \emph{finalize} message from a write with an acknowledgment. 

It remains to show the termination of a reader's \emph{finalize} phase. By using property ({\bf ii}) of Lemma \ref{lem:quorums}, we can show that a quorum, say $Q_{fw}$ of servers responds to a reader's \emph{finalize} message. For the \emph{finalize} phase of a read to terminate, there is an additional requirement that at least $k$ servers include coded elements in their responses. To show that this requirement is satisfied, suppose that the read acquired a tag $t$ in its \emph{query} phase. From examination of CAS, we infer that, at some {point before the point of termination of the read's \emph{query} phase}, a writer propagated a \emph{finalize} message with tag $t$. Let us denote by $Q_{pw}(t),$ the set of servers that responded to this write's \emph{pre-write} phase. We argue that all servers in $Q_{pw}(t) \cap Q_{fw}$ respond to the reader's \emph{finalize} message with a coded element. To see this, let $s$ be any server in $Q_{pw}(t) \cap Q_{fw}.$
{
	Since $s$ is in $Q_{pw}(t)$, the server protocol for responding to a \emph{pre-write} message implies that $s$ has a coded element, $w_s$, at the point where it responds to that message.
	Since $s$ is in $Q_{fw},$ it also responds to the reader's \emph{finalize} message, and this happens at some point after it responds to the \emph{pre-write} message.
So it responds with its coded element $w_s.$}
 From Lemma \ref{lem:quorums}, it is clear that $|Q_{pw}(t) \cap Q_{fw}| \geq k$ implying that the reader receives at least $k$ coded elements in its \emph{finalize} phase and hence terminates. }
\end{proof}
\subsection{Cost Analysis}
We analyze the communication costs of CAS in Theorem \ref{thm:comCAS}. The theorem implies that the 
read and write communication costs can be made as small as $\frac{N}{N-2f}$ \unit{} by choosing $k = N-2f.$

\begin{mainth}\label{thm:comCAS}
The write and read communication costs of the CAS($k$) are equal to $N/k$ \unit. 
\end{mainth}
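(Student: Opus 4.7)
The plan is to go phase by phase through the write and read protocols in Fig.~\ref{fig:CAS} and tally only those messages that carry a value-dependent component $w \in \mathcal{W}$; the paper's cost convention ignores the contributions of tags, metadata, labels, acknowledgements and $\Null$ entries. From Section~\ref{sec:ec}, $\mathcal{V} = \mathcal{W}^{k}$, so each coded element $w_s$ is of size $\log_2|\mathcal{W}| = \tfrac{1}{k}\log_2|\mathcal{V}|$ bits. The entire argument then reduces to counting how many coded elements flow over the links on behalf of a single operation.

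For a write, the \emph{query} phase exchanges only query requests and tag responses, contributing nothing. In the \emph{pre-write} phase, the writer sends one message $(t, w_s, \pre)$ to each of the $N$ servers, so exactly $N$ coded elements traverse the network; the servers reply only with acknowledgements. In the \emph{finalize} phase, the outgoing messages carry $(t, \Null, \fin)$ and the incoming replies are plain acknowledgements, so again no coded elements are sent. The \texttt{gossip} messages triggered by the finalize carry only $(t,\fin)$ and hence also contribute nothing. Summing, the write communication cost equals $N \cdot \tfrac{1}{k}\log_2|\mathcal{V}| = \tfrac{N}{k}$~\unit, and this value is actually attained by every write execution, so taking the supremum is tight.

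For a read, the \emph{query} phase again carries no value-dependent payload. In the \emph{finalize} phase, the reader's outgoing message is tag-only, and each server either returns its locally stored coded element $(t,w_s)$ or only an acknowledgement. Thus at most $N$ coded elements are returned, giving an upper bound of $\tfrac{N}{k}$~\unit{} on the read cost. To see that this upper bound is achieved by the supremum defining the read communication cost of the algorithm, one exhibits a fair execution in which a read is preceded by a completed write whose pre-write messages reach all $N$ servers, so that every server has the relevant coded element and all $N$ servers respond to the read's finalize message with their coded element. Hence the read communication cost equals $\tfrac{N}{k}$~\unit{} as well.

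No step here is conceptually difficult; the only subtlety is the bookkeeping needed to confirm that every message other than the pre-write broadcast (for writes) and the finalize responses (for reads) carries no element of $\mathcal{W}$, and in particular that the server-to-server \texttt{gossip} messages — which are the only novel communication compared with earlier coding-based protocols — transmit only a tag and the $\fin$ label and therefore do not inflate the cost. Once this is verified by inspecting each bullet in Fig.~\ref{fig:CAS}, the theorem follows immediately.
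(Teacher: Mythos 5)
Your proposal is correct and follows essentially the same route as the paper: both observe that the only value-dependent messages are the $N$ pre-write messages of a write and the up-to-$N$ coded-element responses in a read's finalize phase, that each such message carries one coded element of size $\tfrac{1}{k}\log_2|\mathcal{V}|$ bits, and that the resulting $\tfrac{N}{k}$ bound is attained in worst-case executions. Your version is merely a more explicit phase-by-phase bookkeeping of the same argument.
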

\begin{proof}
For either protocol, observe that messages carry coded elements which have size $\frac{\log_{2}|\mathcal{V}|}{k}$ bits. More formally, each message is an element from $\mathcal{T}\times \mathcal{W} \times \{\pre, \fin\}$, 
where, $\mathcal{W}$ is a coded element corresponding to one of the $N$ outputs of the MDS code $\Phi$. As described in Sec. \ref{sec:ec}, $\log_{2}|\mathcal{W}| = \frac{\log_{2}|\mathcal{V}|}{k}.$
{The only messages that incur communication costs are the messages sent from the client to the servers in the \emph{pre-write} phase of a write and the messages sent from the servers to a client in the {\em finalize} phase of a read. It can be seen that the total communication cost of read and write operations of the CAS algorithm are $\frac{N}{k}\log_{2}|\mathcal{V}|$ bits, that is, they are upper bounded by this quantity and the said costs are incurred in certain worst-case executions.} 
\end{proof}

%

\section{Storage-Optimized Variant of CAS}\label{sec:CASgc}

Although CAS is efficient in terms of communication costs, it incurs an infinite storage cost because {servers} can store coded elements corresponding to an arbitrarily large number of versions. {We here present a variant of the CAS algorithm called \emph{CAS with Garbage Collection} (CASGC),  which has the same communication costs as CAS and incurs a bounded storage cost under {certain reasonable conditions.} CASGC achieves a bounded storage cost by using \emph{garbage collection}, i.e., by discarding coded elements with sufficiently small tags at the servers. CASGC is parametrized by two positive integers denoted as $k$ and $\delta$, where $1 \leq k \leq N-2f$; we denote the algorithm with parameter values $k, \delta$ by CASGC($k,\delta$). Like CAS($k$), we use an $(N,k)$ MDS code in CASGC($k,\delta$). The parameter $\delta$ is related to the number of coded elements stored at each server {under ``normal conditions'', that is, if all operations terminate and there are no ongoing write operations. }


\subsection{Algorithm description} {The CASGC$(k,\delta)$} algorithm is essentially the same as CAS$(k)$ with an additional garbage collection step at the servers. In particular,\str{ the client protocols in CASGC are identical to the corresponding protocols in CAS;} the only differences between the two algorithms {lie} in {the server actions} on receiving a \emph{finalize} message from a writer or a reader or `$\mathrm{gossip}$'. The server actions in the CASGC algorithm are described in Fig. \ref{fig:CASGC}. {In CASGC($k,\delta$), }each server stores the latest {$\delta+1$} triples with the $\fin$ label {plus} the triples corresponding to later and intervening operations with the $\pre$ label. {For the tags that are older (smaller) than the latest $\delta+1$ finalized tags received by the server, it stores only the metadata, not the data itself.} On receiving a {\em finalize} message either from a writer or a reader, {the server} performs a garbage collection step {before responding to the client}. The garbage collection step checks {whether} the server has more than {$\delta+1$} triples with the $\fin$ label. If so, it replaces the triple $(t', x , *)$ by $(t', \mbox{`$\mathrm{null}$'}, (*,\gc))$  for every tag $t'$ that is smaller than all the $\delta+1$ highest tags labeled $\fin$, where $*$ is $\pre$ or $\fin$, and $x$ can be a coded element or $\Null$. If a reader requests, through a \emph{finalize} message, a coded element that is already garbage collected, the server simply ignores this request. 
\begin{figure*}[t]
\begin{mdframed}
\footnotesize
\begin{tabbing}
{\bf servers }\\
        \ \ \begin{minipage}[t]{\textwidth}%
        \emph{state variable:} A variable that is a subset of $\mathcal{T} \times \left(\mathcal{W} \cup \{\Null\}\right) \times \{\pre, \fin,(\pre,\gc), (\fin, \gc)\}$ \\
            \emph{initial state}: Same as in Fig. \ref{fig:CAS}.  \\
            On receipt of {{\em query} message: Similar to Fig. \ref{fig:CAS}, respond with the highest locally available tag labeled $\fin$, i.e., respond with the highest $tag$ such that the triple $(tag, x, \fin)$ {or $(tag, \Null, (\fin, \gc))$ }is at the server, where $x$ can be a coded element or `$\mathrm{null}$'.}\smallskip 

            On receipt of a {\em pre-write} message: Perform the actions as described in Fig. \ref{fig:CAS} except the sending of an acknowledgement. Perform garbage collection. Then send an acknowledgement.  \smallskip \\
	    On receipt of a {\em finalize} from a writer: Let $t$ be the tag of the message. If a triple of the form $(t,x,\fin)$ or $(t,\Null,(\fin,\gc))$ is stored in the set of locally stored triples where $x$ can be a coded element or $\Null$, then ignore the incoming message. Otherwise, if a triple of the form $(t,w_s,\pre)$ or $(t,\Null,(\pre,\gc))$ is stored, then upgrade it to $(t,w_s,\fin)$ or $(t,\Null,(\fin,\gc))$. Otherwise, add a triple of the form $(t,\Null,\fin)$ to the set of locally stored triples. Perform garbage collection. Send `$\mathrm{gossip}$' message with item $(t,\fin)$ to all other servers. \smallskip \\
		On receipt of a {\em finalize} message from a reader: 
Let $t$ be the tag of the message. If a triple of the form $(t, w_s, *)$ exists in the list of stored triples where $*$ can be $\pre$ or $\fin$, then update it to $(t, w_s, \fin),$ perform garbage collection, and send $(t, w_s)$ to the reader. If $(t,\text{`}\mathrm{null}\text{'},(*,\text{`}\mathrm{gc}\text{'}))$ exists in the list of locally available triples where $*$ can be either $\fin$ or $\pre$, then update it to $(t,\text{`}\mathrm{null}\text{'},(\fin,\text{`}\mathrm{gc}\text{'}))$ and perform garbage collection, but do \emph{not} send a response.  Otherwise add $(t, \text{`}\mathrm{null}\text{'}, \fin)$ to the list of triples at the server, perform garbage collection, and send an {acknowledgment.}  Send `$\mathrm{gossip}$' message with item $(t,\fin)$ to all other servers.  \smallskip \\ 
On receipt of a `$\mathrm{gossip}$' message: Let $t$ denote the tag of the message. If a triple of the form $(t,x,\fin)$ or $(t,\Null,(\fin,\gc))$ is stored in the set of locally stored triples where $x$ can be a coded element or $\Null$, then ignore the incoming message. Otherwise, if a triple of the form $(t,w_s,\pre)$ or $(t,\Null,(\pre,\gc))$   is stored, then upgrade it to $(t,w_s,\fin)$ or $(t,\Null,(\fin,\gc))$. Otherwise, add a triple of the form $(t,\Null,\fin)$ to the set of locally stored triples. Perform garbage collection. \smallskip \\
{\em garbage collection:} If the total number of tags of the set $\{t: (t,x,*)\textrm{ is stored at the server, where } x \in \mathcal{W} \cup \{\Null\} \textrm{ and }* \in \{\fin, (\fin,\gc)\}\}$ is no bigger than $\delta+1,$ then return. Otherwise, let $t_1, t_2, \ldots t_{\delta+1}$ denote the highest $\delta+1$ tags from the set, sorted in descending order. Replace every element of the form $(t', x, *)$ where $t'$ is smaller than $t_{\delta+1}$ by $(t', \text{`}\mathrm{null}\text{'}, (*,\gc))$ where $*$ can be either $\pre$ or $\fin$ and $x \in \mathcal{W}\cup \{\Null\}$. \smallskip \\

\end{minipage} 
\end{tabbing}
\end{mdframed}
\caption{Server Actions for CASGC($k,\delta$).}\label{fig:CASGC}
\end{figure*}

\subsection{Statements and proofs of correctness} We next describe the correctness conditions satisfied by CASGC. We begin with a formal statement of atomicity. Later, we describe the liveness properties of CASGC.
\subsubsection{Atomicity}
\begin{mainth}[Atomicity]
CASGC is atomic. 
\label{thm:CASGCatomicity}
\end{mainth}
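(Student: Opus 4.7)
The plan is to prove atomicity of CASGC by establishing a simulation relation from CASGC to CAS and invoking Lemma \ref{lem:atomicity}. Since atomicity is a trace property, if every external trace of CASGC arises as a trace of some execution of CAS, then CAS being atomic forces CASGC to be atomic as well.

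First, I would define the simulation relation $R$. Client states and in-transit client-server messages are required to agree under $R$. For server state, $R$ relates a CASGC state $s$ to a CAS state $u$ exactly when (i) every triple in $s$ with label in $\{\pre, \fin\}$ appears verbatim in $u$, and (ii) each garbage-collected triple $(t, \Null, (*, \gc))$ in $s$ corresponds to a triple $(t, w, *)$ in $u$, where $w$ is the coded element $s$ stored before garbage collection if it ever stored one, and $\Null$ otherwise. A clean way to make this well-defined is to augment CASGC with a ghost variable recording the pre-garbage-collection coded element at each server, maintained by an invariant stating that whenever a server ever stored a non-$\Null$ element for tag $t$, it did so in response to the pre-write sent by the unique writer of tag $t$; hence the ``original'' value is always recoverable.

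I would then verify the standard simulation obligations action by action. Initial states are related. Client actions, query responses, pre-write acknowledgments, writer-finalize handlers, and gossip handlers all map to their CAS namesakes without difficulty, because garbage collection only alters triples whose tag lies strictly below the $(\delta+1)$-st highest $\fin$ tag, while the relevant handlers always consult the top of the $\fin$ list. The garbage-collection step itself maps to a stutter in CAS. The main obstacle will be the reader-finalize handler when the requested tag has already been garbage collected: CASGC sends no coded element, whereas the matched CAS server would send one. I would handle this by matching such a CASGC step to a CAS step in which the server does send its coded element, and then exploiting the asynchrony of CAS's channels to leave that message undelivered indefinitely; since atomicity is a property of external traces, this internal message has no effect. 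Finally, whenever a CASGC reader does terminate, the $k$ coded elements it receives agree, under $R$, with those received by the corresponding CAS reader, so by Definition \ref{def:MDS} the decoded values coincide. This confirms that the external trace of every CASGC execution is the external trace of some CAS execution, and Lemma \ref{lem:atomicity} then yields atomicity of CASGC.
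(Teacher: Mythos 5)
Your high-level strategy is the same as the paper's: show that every CASGC execution looks, externally, like some CAS execution, then invoke Lemma~\ref{lem:atomicity}. However, the simulation relation $R$ you propose cannot be preserved as stated, and this is precisely the technical snag the paper's proof is structured to avoid. You require that in-transit client--server messages agree under $R$, yet the step where a CASGC server ignores a reader's \emph{finalize} on a garbage-collected tag must be matched (per your own plan) with a CAS server step that \emph{does} send a coded element, relying on ``indefinite delay'' to make it unobservable. After that step, CAS's channel contains a message CASGC's does not, so $R$ is violated; weakening $R$ to tolerate such orphan messages then forces you to carry an extra invariant that these messages are never delivered, which is awkward inside a forward simulation. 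The paper sidesteps this by introducing an intermediate automaton CAS$'$ whose servers may \emph{non-deterministically decline} to respond to a reader's \emph{finalize}: CASGC then matches CAS$'$ step for step with identical channel contents (so the simulation is clean, and your ghost variable is replaced by the paper's step-by-step construction of the CAS$'$ server state without garbage collection), and the ``delay the response in a reliable channel'' trick is relegated to a separate, purely trace-level argument showing CAS$'$'s external behaviors are a subset of CAS's. If you want to keep a single-stage simulation to CAS you would need to explicitly augment the state and relation to account for these phantom messages; factoring through an intermediate CAS$'$ as the paper does is the cleaner fix.
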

To show the above theorem, we observe that, from the perspective of the clients, the only difference between CAS and CASGC is in the server response to a read's \emph{finalize} message. In CASGC, when a coded element has been garbage collected, a server ignores a read's \emph{finalize} message. Atomicity follows similarly to CAS, since, in any execution of CASGC, operations acquire essentially the same tags as they would in an execution of CAS. We show this formally next. 
\begin{proof}[Proof (Sketch)]
Note that, formally, CAS is an I/O automaton formed by composing the automata of all the nodes and communication channels in the system. We show atomicity in two steps. In the first step, we construct a I/O automaton CAS$'$ which differs from CAS in that some of the actions of the servers in CAS$'$ are non-deterministic. However, we show that from the perspective of its external behavior (i.e., its invocations, responses and failure events), any execution of CAS$'$ can be extended to an execution of CAS implying that CAS$'$ satisfies atomicity. In the second step, we will show that CASGC {simulates} CAS$'$. These two steps suffice to show that CASGC satisfies atomicity.

We now describe CAS$'$. The CAS$'$ automaton is identical to CAS with respect to the client actions, and {to }the server actions on receipt of {\emph{query}} and {\emph{pre-write}} messages{ and {\emph{finalize}} messages from writers.} A server's response to a \emph{finalize} message from a read operation can be different in CAS$'$ as compared to CAS. In CAS$'$, at the point of the receipt of the \emph{finalize} message at the server, the server could respond either with the coded element, or {not respond at all }(even if it has the coded element).The server performs `$\mathrm{gossip}$' in CAS$'$ as in CAS.  

{We note that CAS$'$ ``simulates'' CAS. Formally speaking,} for every execution $\alpha{'}$ of CAS$'$, there is a natural corresponding execution $\alpha$ of CAS with an identical sequence of actions of all the components with one exception; when a server ignores a read's \emph{finalize} message in $\alpha{'}$, we assume that the corresponding message in $\alpha$ is indefinitely delayed. Therefore, from the perspective of client actions, for any execution $\alpha'$ of CAS$'$, there is an $\alpha$ of CAS with the same set of external actions. Since CAS satisfies atomicity, $\alpha$ has atomic behavior. Therefore $\alpha'$ is atomic, and implying that CAS$'$ satisfies atomicity.

{Now, we show that CASGC ``simulates'' CAS$'.$ That is, for every execution $\alpha_{\text{gc}}$ of CASGC, we construct a corresponding execution $\alpha'$ of CAS$'$ such that $\alpha'$ has the same external behavior (i.e., the same invocations, responses and failure events) as that of $\alpha_{\text{gc}}.$ We first describe the execution $\alpha'$ step-by-step, that is, we consider a step of $\alpha_{\text{gc}}$ and describe the corresponding step of $\alpha'$. We then show that the execution $\alpha'$ that we have constructed is consistent with the CAS$'$ automaton.

We construct $\alpha'$ as follows. We first set the initial states of all the components of $\alpha'$ {to be the same as they are in }$\alpha_{\text{gc}}.$  At every step, the states of the client nodes and the message passing system in $\alpha'$ are the same as the states of the corresponding components in the corresponding step of $\alpha_{\text{gc}}.$ A server's responses on receipt of a message is the same in $\alpha'$ as that of the corresponding server's response in $\alpha_{\text{gc}}$. In particular, we note that a server's external responses are the same in $\alpha_{\text{gc}}$ and $\alpha'$ even on receipt of a reader's \emph{finalize} message, that is, if a server ignores a reader's finalize message in $\alpha_{\text{gc}},$ it ignores the reader's finalize message in $\alpha'$ as well. Similarly, if a server sends a message as a part of `$\mathrm{gossip}$' in $\alpha_{\text{gc}}$, it sends a message in $\alpha'$ as well. The only difference between $\alpha_{\text{gc}}$ and $\alpha'$ is in the {change to the server's internal state} at a point of receipt of a \emph{finalize} message from a reader or a writer. At such a point, the server may perform garbage collection in $\alpha_{\text{gc}}$, whereas it does not perform garbage collection in $\alpha'$. Note that the initial state, the server's {response,} and the client states at every step of $\alpha'$ are the same as the corresponding step of $\alpha_{\text{gc}}.$ {Also note that 
 a server that fails at a step of $\alpha_{\text{gc}}$ fails at the corresponding step of $\alpha'$ (even though the server states could be different in general because of the garbage collection).} Hence, at every step, the external behavior of $\alpha'$ and $\alpha_{\text{gc}}$ are the same. This implies that the external behavior of the entire execution $\alpha'$ is the same as the external behavior of $\alpha_{\text{gc}}$. 

We complete the proof by noting that execution $\alpha'$ consistent with the CAS$'$ automaton. In particular, since the initial states of all the components are the same in the CAS$'$ and CASGC algorithms, the initial state of $\alpha'$ is consistent with the CAS$'$ automaton. Also, every step of $\alpha'$ is consistent with CAS$'$. Therefore, CASGC simulates CAS$'$. Since CAS$'$ is atomic, $\alpha_{\text{gc}}$ has atomic behavior. So CASGC is atomic.


}
\end{proof}

\subsubsection{Liveness}
Showing operation termination in CASGC is {more complicated} than CAS. 
This is because, in CASGC, when a reader requests a coded element, the server may have garbage collected it. The liveness property we show essentially articulates conditions under which read operations terminate in spite of the garbage collection. Informally speaking, we show that CASGC satisfies the following liveness property: every operation terminates in an execution where the number of failed servers is no bigger than $f$ and the number of writes \emph{concurrent} with a read is bounded by $\delta+1$. Before we proceed to formally state our liveness conditions, we give a formal definition of the notion of {concurrent operations in an execution} of CASGC. {For any operation} $\pi$ that completes its query phase, the tag of the operation $T(\pi)$ is defined as in Definition \ref{def:tagofoperation}. We begin with defining the \emph{end-point} of an operation. 
\begin{Ndefinition}[End-point of a write operation]
	In an execution $\beta$ of CASGC, the end point of a write operation $\pi$ in $\beta$ is defined to be 
	\begin{itemize*}
         \item[(a)] the first point of $\beta$ at which a quorum of servers that do not fail in $\beta$ has tag $T(\pi)$ with the $\fin$ label, where $T(\pi)$ is the tag of the operation $\pi$, if such a point exists,
		\item[(b)] the point of failure of operation $\pi,$ if operation $\pi$ fails and (a) is not satisfied.  
      	      \end{itemize*}
\end{Ndefinition}
{Note that if neither condition (a) nor (b) is satisfied, then the write operation has no end-point.
\begin{Ndefinition}[End-point of a read operation]
The end point of a read operation in $\beta$ is defined to be the point of termination if the read returns in $\beta$. The end-point of a failed read operation is defined to be the point of failure. 
\end{Ndefinition}
A read that does not fail or terminate has no end-point.}

\begin{Ndefinition}[Concurrent Operations]
	\label{def:concurrent}
			One operation is defined to be concurrent with another operation if it is not the case that the end point of either of the two operations is before the point of invocation of the other operation.
\end{Ndefinition}
Note that if both operations do not have end points, then they are concurrent with each other. 
We next describe the liveness property satisfied by CASGC.
\begin{mainth}[Liveness]
	Let $1 \leq k \leq N-2f$. Consider a fair execution $\beta$ of CASGC($k,\delta$) where the number of write operations concurrent to any read operation is at most $\delta,$ and the number of server node failures is at most $f$. Then, every {non-failing operation} terminates in $\beta$. \label{thm:CASGCliveness}
\end{mainth}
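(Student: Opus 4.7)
The plan is to show that every phase of every non-failing operation in $\beta$ terminates, following the blueprint of the liveness proof of CAS (Lemma \ref{lem:liveness}). The query phase of both reads and writes, the writer's pre-write phase, and the writer's finalize phase terminate by the same argument as in CAS: in each case every non-failed server responds to the relevant message with an acknowledgment or a tag, and by Lemma \ref{lem:quorums}({\bf ii}) at least one quorum of non-failed servers exists, so the client eventually receives responses from a quorum. Garbage collection does not interfere with these phases, since the server's response to a query still reports the highest locally stored tag with label in $\{\fin,(\fin,\gc)\}$, and the pre-write and writer-finalize actions still send acknowledgments whether or not garbage collection fires. Hence the only nontrivial step is to establish termination of the reader's finalize phase.

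Fix a non-failing read $\rho$ and set $t = T(\rho)$. Let $w$ be the write operation that produced $t$, and let $Q_{pw}(w)$ denote the quorum that responded to $w$'s pre-write phase; if $t = t_0$, take $Q_{pw}(w)=\mathcal{N}$, since every server is initialized with $(t_0, w_{0,s}, \fin)$. By Lemma \ref{lem:quorums}({\bf ii}) there is a quorum $Q_{fw}(\rho)$ of non-failed servers that eventually process $\rho$'s finalize, and by Lemma \ref{lem:quorums}({\bf i}) we have $|Q_{pw}(w)\cap Q_{fw}(\rho)|\geq k$. The key claim is that every server $s$ in $Q_{pw}(w)\cap Q_{fw}(\rho)$ still stores a triple of the form $(t, w_s, *)$ with $*\in\{\pre,\fin\}$, and not $(t, \Null, (*,\gc))$, at the instant it processes $\rho$'s finalize message; consequently $s$ responds with its coded element $w_s$, and $\rho$ collects at least $k$ distinct coded elements and inverts $\Phi$ to decode. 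The same counting argument used to prove this key claim also shows that no non-failed server ever enters the non-responding branch of the reader-finalize protocol for tag $t$, so $\rho$ indeed receives responses from a quorum.

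To prove the key claim we analyze the garbage collection rule: $s$ replaces $(t, w_s, *)$ by $(t, \Null, (*,\gc))$ only at a moment when $s$ stores at least $\delta+1$ tags $> t$ with label in $\{\fin,(\fin,\gc)\}$. We will argue that each such higher tag $t'>t$ present at the non-failed server $s$ is attributable to a write operation $w'$ with $T(w')=t'$ that is concurrent with $\rho$ in the sense of Definition \ref{def:concurrent}; the hypothesis then bounds the number of such tags by $\delta$, and so garbage collection of $t$ at $s$ never triggers. Fix such $t'$ and let $w'$ be its writer. Since $w'$ has already been invoked (otherwise no such tag $t'$ could exist anywhere), the only way $w'$ can fail to be concurrent with $\rho$ is if $w'$'s end-point precedes the invocation of $\rho$. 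If that end-point is of type (a), then a quorum of non-failed servers stores $t'$ with a $\fin$-like label before $\rho$'s query phase, whence $\rho$'s query intersects that quorum and $\rho$ acquires a tag at least $t'>t$, contradicting $T(\rho)=t$. If the end-point is of type (b), then by definition no quorum of non-failed servers ever stores $t'$ with a $\fin$-like label in $\beta$; but because the writer's finalize send goes to all servers and each recipient gossips to every other server, in a fair execution the presence of a $\fin$-like label for $t'$ at any non-failed server (in particular at $s$) forces eventual propagation to every non-failed server and thus to a quorum, contradicting type (b).

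The main obstacle is exactly the end-point analysis just sketched, especially ruling out type (b); this is precisely where the server-to-server gossip of CASGC becomes essential, as without gossip a failed writer could leave scattered $\fin$-labels at a sub-quorum of servers that do not propagate further, and such orphaned labels would not be chargeable to any concurrent write, breaking the counting argument. Once the counting is in place, $t$ remains among the top $\delta+1$ finalized tags at every non-failed server throughout the relevant window, so the coded element $w_s$ survives every garbage collection that occurs before $\rho$'s finalize is processed at $s$; combined with $|Q_{pw}(w)\cap Q_{fw}(\rho)|\geq k$, this delivers $\rho$ at least $k$ coded elements, and $\rho$'s finalize phase terminates, completing the proof.
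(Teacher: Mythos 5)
Your proof is correct and follows essentially the same approach as the paper. The paper organizes the argument around the contrapositive Lemma~\ref{lem:invCASGC} (if a non-failing read does not terminate, then at least $\delta+1$ writes are concurrent with it), supported by two auxiliary lemmas — Lemma~\ref{lem:CASGCpropagation} (gossip forces any finalized tag seen at a non-failing server to acquire a type-(a) end-point) and Lemma~\ref{lem:CASGC_operationorder} (a type-(a) end-point before an operation's invocation forces that operation's query to see the corresponding tag). You inline exactly these two ingredients in your end-point case analysis, ruling out type (b) via the gossip argument and type (a) via the query-intersection argument, and then count the $\delta+1$ higher finalized tags at a would-be garbage-collecting server against the concurrency hypothesis. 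You present this as a direct proof rather than via the paper's explicit contrapositive lemma, and you also handle the trivial $t=t_0$ base case and the quorum-of-responses requirement explicitly, but the decomposition into (non-reader phases terminate as in CAS) plus (counting argument blocks garbage collection of the reader's tag, with gossip as the crucial enabler) is identical to the paper's.
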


The main challenge in proving Theorem \ref{thm:CASGCliveness} lies in showing termination of read operations. In Lemma \ref{lem:invCASGC}, we show that if a \blue{read operation does not terminate} in an execution of CASGC($k,\delta)$, then the number of write operations that are concurrent with the read is larger than $\delta$. We then use the lemma to show Theorem \ref{thm:CASGCliveness} later in this section.
We begin by stating and proving Lemma \ref{lem:invCASGC}. 

\begin{Nlemma}
	Let $1 \leq k \leq N-2f$. Consider any fair execution $\beta$ of CASGC$(k,\delta)$ where the number of server failures is upper bounded by $f$. Let $\pi$ be a non-failing read operation in $\beta$ that does not terminate. Then, the number of writes that are concurrent with $\pi$ is at least $\delta+1$.
\label{lem:invCASGC}
\end{Nlemma}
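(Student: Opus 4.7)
The plan is to trace what prevents $\pi$ from terminating and extract $\delta+1$ distinct writes concurrent with it. First I would note that $\pi$'s \emph{query} phase terminates by the same reasoning as in the proof of Lemma~\ref{lem:liveness}: every non-failed server has some finalized tag (since the initial state contains $(t_0,\cdot,\fin)$), and property~(\textbf{ii}) of Lemma~\ref{lem:quorums} guarantees a quorum of responses. So $\pi$ must be stuck in its \emph{finalize} phase, and therefore has no end-point. Writing $t=T(\pi)$ and letting $Q_{nf}$ denote the set of servers that do not fail in $\beta$, we have $|Q_{nf}| \geq N-f \geq \lceil (N+k)/2\rceil$, so $Q_{nf}$ is itself a quorum, and in a fair execution every $s \in Q_{nf}$ eventually processes $\pi$'s finalize message by sending a coded element, sending an acknowledgment, or ignoring (when its entry for $t$ has been garbage-collected).

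Next I would argue that this non-termination forces some $s_0 \in Q_{nf}$ to have garbage-collected $t$ by the time it processes the read's finalize. The finalize phase fails to terminate iff either (a) fewer than $k$ coded elements are ever received, or (b) fewer than a quorum of responses is ever received. In case~(a) I would use Lemma~\ref{lem:quorums}(\textbf{i}) to conclude that $|Q_{pw}(t)\cap Q_{nf}| \geq k$, where $Q_{pw}(t)$ is the pre-write quorum of the write that created tag $t$; each such server stored $(t,w_s,\pre)$ at the moment it acknowledged the pre-write, and inspection of the CASGC server code shows that such a server responds to the read's finalize with its coded element unless the entry has been garbage-collected, so at least one of these $\geq k$ servers must have garbage-collected $t$. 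In case~(b), since $|Q_{nf}|$ is a quorum, at least one server in $Q_{nf}$ must ignore, which happens only when its entry for $t$ is of the form $(t,\Null,(*,\gc))$.

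The garbage-collection rule in Fig.~\ref{fig:CASGC} then gives us $\delta+1$ tags $t_1 > t_2 > \cdots > t_{\delta+1} > t$ stored at $s_0$ with label $\fin$ or $(\fin,\gc)$ at the instant it discarded the data for $t$. Each $t_i$ was constructed in the pre-write phase of a distinct write operation (by the tag-uniqueness argument used in the proof of Lemma~\ref{inv:tags}), yielding $\delta+1$ distinct writes $\sigma_1,\ldots,\sigma_{\delta+1}$.

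The main obstacle is the last step: verifying that each $\sigma_i$ is concurrent with $\pi$. Since $\pi$ has no end-point, Definition~\ref{def:concurrent} reduces this to showing the end-point of $\sigma_i$, if it exists, is not before $\pi$'s invocation. First I would rule out case~(b) of the end-point definition: because $s_0 \in Q_{nf}$ stores $t_i$ with the $\fin$ label and never fails, the gossip mechanism in a fair execution ensures every non-failed server eventually has $t_i$ finalized, so a quorum of non-failed servers eventually has $t_i$ finalized and case~(a) is always applicable. To finish, I would argue by contradiction: if the case~(a) end-point of $\sigma_i$ were before $\pi$'s invocation, then a quorum $Q_i$ of non-failed servers would have $t_i$ finalized at that point, and any server in $Q_i \cap Q_q$ (with $Q_q$ the quorum responding to $\pi$'s query, intersection non-empty by Lemma~\ref{lem:quorums}(\textbf{i})) would respond to $\pi$'s query with a finalized tag $\geq t_i > t$, contradicting $T(\pi)=t$. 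Hence all $\delta+1$ writes $\sigma_1,\ldots,\sigma_{\delta+1}$ are concurrent with $\pi$, giving the desired bound.
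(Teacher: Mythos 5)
Your proof is correct and takes essentially the same approach as the paper's: it locates a non-failing server that has garbage-collected the read's tag, extracts $\delta+1$ larger finalized tags from the garbage-collection rule, and shows via gossip and quorum-intersection that the corresponding writes have end-points occurring after $\pi$'s invocation. The only differences are organizational --- you inline the arguments that the paper factors into Lemma~\ref{lem:CASGCpropagation} and Lemma~\ref{lem:CASGC_operationorder}, and you split on the reason the finalize phase stalls (too few coded elements vs.\ too few responses) rather than on whether a quorum member has garbage-collected $t$, but these decompositions are equivalent.
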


To prove Lemma \ref{lem:invCASGC}, we prove Lemmas \ref{lem:CASGCpropagation} and \ref{lem:CASGC_operationorder}. Lemma \ref{lem:CASGCpropagation} implies that in a fair execution where the number of server failures is bounded by $f$, if a non-failing server receives a finalize message corresponding to a tag at some point, then the write operation corresponding to that tag has an end-point in the execution. We note that the server gossip plays a crucial role in showing Lemma \ref{lem:CASGCpropagation}. We then show Lemma \ref{lem:CASGC_operationorder} which states that in an execution, if a write operation $\pi$ has an end-point, then every operation that begins after the end-point of $\pi$ acquires a tag that is at least as large as the tag of $\pi$. Using Lemmas \ref{lem:CASGCpropagation} and \ref{lem:CASGC_operationorder}, we then show Lemma \ref{lem:invCASGC}. 

\begin{Nlemma}
	\blue{Let $1 \leq k \leq N-2f$. Consider any fair execution $\beta$ of CASGC$(k,\delta)$ where the number of server failures is no bigger than $f$. Consider a write operation $\pi$ that acquires tag $t$. If at some point of $\beta$, at least one non-failing server has a triple of the form $(t,x,\fin)$ or $(t,\Null,(\fin,\gc))$ where $x \in \mathcal{W}\cup \{\Null\}$, then operation $\pi$ has an end-point in $\beta$.}
\label{lem:CASGCpropagation}
\end{Nlemma}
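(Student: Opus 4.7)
The plan is to exploit the `$\mathrm{gossip}$' protocol in Fig.~\ref{fig:CASGC} to argue that the $\fin$ marker for tag $t$ propagates to every non-failing server, which together form a quorum under the assumption $k \leq N-2f$, thereby establishing condition (a) of the end-point definition for $\pi$.

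First I would inspect the server protocol to observe that a server comes to hold a triple of the form $(t,x,\fin)$ or $(t,\Null,(\fin,\gc))$ only as a consequence of either (i) processing a \emph{finalize} message (from a writer or a reader) carrying tag $t$, or (ii) processing a `$\mathrm{gossip}$' message with item $(t,\fin)$; and a `$\mathrm{gossip}$' message with that item is emitted by a server only as an immediate consequence of (i). Consequently, the hypothesis that some non-failing server holds such a triple at some point of $\beta$ implies that at some earlier point some server---call it $s^{\star}$---received and processed a \emph{finalize} message carrying tag $t$, and thereby atomically dispatched a `$\mathrm{gossip}$' message with item $(t,\fin)$ to every other server.

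Next I would invoke fairness of $\beta$: every non-failing server eventually receives either the gossip from $s^{\star}$ or else some other finalize or gossip carrying tag $t$. Inspection of the garbage-collection rule shows that once a server marks tag $t$ with $\fin$ (with or without the $\gc$ suffix), the $\fin$ marker is never removed. Hence there is a point $P$ of $\beta$ after which every non-failing server persistently holds a triple of the form in the lemma's hypothesis. Since at most $f$ servers fail, the non-failing servers number at least $N-f$, and under $k \leq N-2f$ this set contains a quorum by Lemma~\ref{lem:quorums}. Thus condition (a) of the end-point definition is satisfied at some point no later than $P$, so $\pi$ has an end-point in $\beta$; this conclusion is independent of whether $\pi$ fails, since condition (b) would in any case supply an end-point should $\pi$ fail before (a) is first met.

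The main obstacle I anticipate is the case in which the non-failing witness server only ever received a `$\mathrm{gossip}$' message (and no \emph{finalize}): the originating server $s^{\star}$ of that gossip may itself be a failing server, so I must argue that the entire gossip broadcast completed before $s^{\star}$ crashed. This requires treating ``send `$\mathrm{gossip}$' to all other servers'' as a single atomic output action that places all $N-1$ messages into the channels before any subsequent crash of $s^{\star}$, after which channel fairness can be invoked separately for each recipient. A secondary care-point is that when a gossip arrives at a server whose view of tag $t$ has already been garbage-collected to $(t,\Null,(\pre,\gc))$, the resulting upgrade is $(t,\Null,(\fin,\gc))$, which nonetheless qualifies the server to be counted in the quorum witnessing condition (a), since that condition only requires the $\fin$ marker and not the presence of a coded element.
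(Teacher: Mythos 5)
Your proposal reaches the right conclusion and identifies the correct ingredients (trace the $\fin$ marker back to a gossip originator, use monotonicity of the $\fin$ label, count non-failing servers as a quorum via Lemma~\ref{lem:quorums}), but your closing mechanism differs from the paper's in a way worth examining. You observe, correctly from the figures, that the explicit server pseudocode emits a `$\mathrm{gossip}$' only after processing a \emph{finalize} from a client, and so you trace the chain back to a single originator $s^{\star}$ and then must assume that ``send `$\mathrm{gossip}$' to all other servers'' is a single atomic action to cover the case where $s^{\star}$ fails mid-broadcast. The paper's own proof instead leans on the prose description of the gossip protocol that precedes Figure~\ref{fig:CAS}: ``each server sends each \emph{item} from $\mathcal{T}\times\{\fin\}$ that it \emph{ever receives} once (immediately) to every other server.'' Under that reading, a server re-gossips items it receives via `$\mathrm{gossip}$' as well, not only items it receives from clients. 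That flooding property dissolves the obstacle you flagged: the non-failing witness server $s$ that holds the $\fin$ triple has, by hypothesis, itself received the item $(t,\fin)$, and therefore $s$ re-gossips it; since $s$ never fails, fairness alone (not atomicity) guarantees that $s$'s gossip messages are all eventually placed in the channels and delivered. So the paper never needs to reason about whether $s^{\star}$'s broadcast completed. Your argument is sound given the atomic-broadcast modeling assumption, but that assumption is not stated in the paper and is not needed under the flooding reading of the gossip protocol; if you want to match the paper's proof, re-gossiping by the non-failing witness is the cleaner and more robust path. The rest of your reasoning — that the $\fin$ marker survives garbage collection in the form $(t,\Null,(\fin,\gc))$, that it is never removed once placed, and that $N-f$ non-failing servers constitute a quorum under $k \le N-2f$ — matches the paper.
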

\begin{proof}
Notice that every server that receives a \emph{finalize} message with tag $t$ invokes the `$\mathrm{gossip}$' protocol. If a \blue{non-failing} server $s$ stores tag $t$ with the $\fin$ label at some point of $\beta$, then from the server protocol we infer that it received a \emph{finalize} message with tag $t$ from a client or another server at some previous point. Since server $s$ receives the \emph{finalize} message with tag $t$, every \blue{non-failing} server also receives a \emph{finalize} message with tag $t$ at some point of the execution because of `$\mathrm{gossip}$'. Since a server that receives a \emph{finalize} message with tag $t$ stores the $\fin$ label after receiving the message, and the server does not delete the label associated with the tag at any point, eventually, every \blue{non-failing server} stores the $\fin$ label with the tag $t$. \blue{Since the number of server failures is no bigger than $f$, there is a quorum of non-failing servers that stores tag $t$ with the $\fin$ label at some point of $\beta$. Therefore, operation $\pi$ has an end-point in $\beta$, with the end-point being the first point of $\beta$ where a quorum of non-failing servers have the tag $t$ with the $\fin$ label.}
\end{proof}

\begin{Nlemma}
\blue{ Consider any execution $\beta$ of CASGC$(k,\delta).$ If write operation $\pi$ with tag $t$ has an end-point in $\beta$, then the tag of any operation that begins after the end point of $\pi$ is at least as large as $t$.}
	\label{lem:CASGC_operationorder}
\end{Nlemma}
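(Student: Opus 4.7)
My plan is to mirror the proof of Lemma~\ref{lem:tagsordering} (the analogous statement for CAS), while carefully accounting for the garbage collection in CASGC. The key point to exploit is that, in CASGC, a server's information that a tag is finalized is \emph{never} lost: the server protocol in Fig.~\ref{fig:CASGC} only upgrades $\pre$ to $\fin$ and, during garbage collection, replaces $(t', x, \fin)$ with $(t', \Null, (\fin,\gc))$. The server's \emph{query} response rule treats both $\fin$ and $(\fin,\gc)$ as ``finalized''. Thus the set of tags that a server reports as finalized is monotonically non-decreasing.

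I would handle the main case first, in which the end-point $p$ of $\pi$ is given by condition (a) of the end-point definition. Then there is a quorum $Q$ of servers that do not fail in $\beta$ such that at $p$, every $s \in Q$ stores $(t, x, \fin)$ or $(t, \Null, (\fin,\gc))$ for some $x \in \mathcal{W}\cup\{\Null\}$. By the persistence observation above, every $s \in Q$ continues, at every point after $p$, to have some finalized tag at least as large as $t$. Now consider any operation $\pi'$ invoked after $p$; let $Q'$ be the quorum whose responses complete its \emph{query} phase. Lemma~\ref{lem:quorums}(i) gives $|Q \cap Q'| \ge k \ge 1$, so fix $s \in Q \cap Q'$. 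Since the response of $s$ to $\pi'$'s query is sent at some point strictly after $p$, $s$ responds with its highest finalized tag, which is at least $t$. Hence the tag $\hat{T}(\pi')$ acquired in the query phase of $\pi'$ satisfies $\hat{T}(\pi') \ge t$. Inspecting the client protocols, $T(\pi') = \hat{T}(\pi')$ if $\pi'$ is a read, and $T(\pi') > \hat{T}(\pi')$ if $\pi'$ is a write; in both cases $T(\pi') \ge t$, as required.

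For the remaining case, in which the end-point is given by (b), I would argue by an analogous quorum-intersection argument applied at the point of failure: the failure only affects the client node, so any messages that actually induced a non-failing server to record $t$ with a $\fin$ (or $(\fin,\gc)$) label before the failure behave exactly as in case (a), and the same query-intersection argument yields $T(\pi') \ge t$ for any $\pi'$ invoked after the failure point. (In a fair execution with at most $f$ server failures, Lemma~\ref{lem:CASGCpropagation} in fact collapses case (b) into case (a) whenever tag $t$ is visible anywhere after the failure.)

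The main obstacle I foresee is subtle but narrow: it is verifying that the garbage-collection step never erases the \emph{visibility} of the tag $t$ at any server in $Q$, i.e., checking every branch of the server pseudocode in Fig.~\ref{fig:CASGC} to confirm that a finalized tag is always reported in subsequent query responses, even after it has been reduced to the $(t, \Null, (\fin,\gc))$ form. Once this monotonicity-of-finalized-tags invariant is established, the rest of the argument reduces to the quorum-intersection reasoning already used for CAS in Lemma~\ref{lem:tagsordering}, with the natural reinterpretation that ``finalized at the server'' now means labelled $\fin$ or $(\fin,\gc)$.
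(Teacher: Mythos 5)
Your case~(a) argument mirrors the paper's own proof, and the monotonicity observation you single out---that once a server reports a tag as finalized (whether via $\fin$ or $(\fin,\gc)$) it continues to do so forever, since garbage collection only rewrites labels to gc-variants and never removes the metadata---is exactly the invariant the paper invokes in a single sentence. The paper's proof is in fact confined to case~(a): its opening line asserts that at the end-point there is a quorum of non-failing servers holding $t$ with the $\fin$ label, which is precisely condition~(a) of the end-point definition.

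Your case~(b) paragraph, however, is not sound. In case~(b), condition~(a) is by definition \emph{not} met, so there is no quorum of non-failing servers storing $t$ as finalized; indeed there may be no non-failing server at all that ever records $t$ as finalized (consider a write that fails after its pre-write phase but before sending any finalize message, so that no \emph{query} response ever reveals $t$). A subsequent operation could then complete its query phase against a quorum that has never seen $t$ finalized and acquire a strictly smaller tag, so the quorum-intersection argument has nothing to intersect and the conclusion of the lemma can genuinely fail. Your parenthetical appeal to Lemma~\ref{lem:CASGCpropagation} does not repair this: if $t$ were visible at some non-failing server in a fair execution with at most $f$ failures, that lemma would already give a case-(a) end-point, so the two cases are mutually exclusive and case~(b) is precisely the situation in which $t$ is invisible. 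The only place Lemma~\ref{lem:CASGC_operationorder} is applied---inside the proof of Lemma~\ref{lem:invCASGC}, immediately after Lemma~\ref{lem:CASGCpropagation}---always supplies a case-(a) end-point, so in context the implicit restriction is harmless; but you should state the restriction explicitly rather than claim to cover case~(b) with an argument that does not go through.
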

\begin{proof}
Consider a write operation $\pi$ that has an end-point in $\beta$. \blue{By definition, at the end-point of $\pi,$ there exists at least one quorum $Q(\pi)$ of non-failing servers such that each server has the tag $t$ with the $\fin$ label. Furthermore, from the server protocol, we infer that each server in quorum $Q(\pi)$ has the tag $t$ with the $\fin$ label at every point after the end point of the operation $\pi$. }

Now, suppose operation $\pi'$ is invoked after the end point of $\pi$. We show that the tag acquired by operation $\pi'$ is at least as large as $t$. Denote the quorum of servers that respond to the \emph{query} phase of $\pi'$ as ${Q}(\pi')$. We now argue that every server $s$ in ${Q}(\pi) \cap Q(\pi')$ responds to the \emph{query} phase of $\pi'$ with a tag that is at least as large as $t$. To see this, since $s$ is in $Q(\pi)$, it has a tag $t$ with label $\fin$ at the end-point of $\pi$. Since $s$ is in ${Q}(\pi'),$ it also responds to the \emph{query} message of $\pi'$, and this happens at some point after the end-point of $\pi$ because $\pi'$ is invoked after the end-point of $\pi$. Therefore server $s$ responds with a tag that is at least as large as $t$. This completes the proof.
\end{proof}

\begin{proof}[Proof of Lemma \ref{lem:invCASGC}]
	Note that the termination of the query phase of the read is contingent on receiving a quorum of responses. By noting that every non-failing server responds to the read's query message, we infer from Lemma \ref{lem:quorums} that the query phase terminates.  \blue{It remains to consider termination of the read's finalize phase.}
	Consider an operation $\pi$ whose finalize phase does not terminate. We argue that there are \blue{at least} $\delta+1$ write operations that are concurrent with $\pi$. 

Let $t$ be the tag acquired by operation $\pi$. By property ({\bf ii}) of Lemma \ref{lem:quorums}, we infer that a quorum, say $Q_{fw}$ of \blue{non-failing} servers receives the read's \emph{finalize} message. There are only two possibilities. 

$\mathbf{(i)}$ There is no server $s$ in $Q_{fw}$ such that, at the point of receipt of the read's finalize message at server $s$, a triple of the form $(t,\Null, (*,\gc))$ exists at the server.

$\mathbf{(ii)}$ There is at least one server $s$ in $Q_{fw}$ such that, at the point of receipt of the read's finalize message at server $s$, a triple of the form $(t,\Null, (*,\gc))$ exists at the server. 

In case $\mathbf{(i)},$ we argue in a manner that is similar to Lemma \ref{lem:liveness} that the read receives responses to its finalize message from quorum $Q_{fw}$ of which at least $k$ responses include coded elements. \blue{We repeat the argument here for completeness. From examination of CASGC, we infer that, at some {point before the point of termination of the read's \emph{query} phase}, a writer propagated a \emph{finalize} message with tag $t$. Let us denote by $Q_{pw}(t),$ the set of servers that responded to this write's \emph{pre-write} phase. We argue that all servers in $Q_{pw}(t) \cap Q_{fw}$ respond to the reader's \emph{finalize} message with a coded element. To see this, let $s'$ be any server in $Q_{pw}(t) \cap Q_{fw}.$ Since $s'$ is in $Q_{pw}(t)$, the server protocol for responding to a \emph{pre-write} message implies that $s'$ has a coded element, $w_{s'}$, at the point where it responds to that message. Since $s'$ is in $Q_{fw},$ it does not contain an element of the form $(t,\Null,(*,\gc))$ implying that it has not garbage collected the coded element at the point of receipt of the reader's finalize message. Therefore, it responds to the reader's \emph{finalize} message, and this happens at some point after it responds to the \emph{pre-write} message. So it responds with its coded element $w_{s'}$.  From Lemma \ref{lem:quorums}, it is clear that $|Q_{pw}(t) \cap Q_{fw}| \geq k$ implying that the reader receives at least $k$ coded elements in its \emph{finalize} phase and hence terminates. Therefore the finalize phase of $\pi$ terminates, {contradicting our assumption that it does not. Therefore $\mathbf{(i)}$ is impossible.}}

We next argue that in case $\mathbf{(ii)},$ there are at least $\delta+1$ write operations that are concurrent with the read operation $\pi$. In case $\mathbf{(ii)}$, from the server protocol of CASGC, we infer that at the point of receipt of the reader's finalize message at server $s$, there exist tags $t_1, t_2,\ldots,t_{\delta+1},$ each bigger than $t$, such that a triple of the form $(t_i,x,\fin)$ or $(t_i, \Null, (\fin,\gc))$ exists at the server. We infer from the write and server protocols  that, for every $i$ in $\{1,2,\ldots,\delta+1\},$ a write operation, say $\pi_i,$ must have committed to tag $t_i$ in its \emph{pre-write} phase before this point in $\beta$. \blue{Because $s$ is non-failing in $\beta$, we infer from Lemma \ref{lem:CASGCpropagation} that operation $\pi_i$ has an end-point in $\beta$ for every $i \in \{1,2,\ldots,\delta+1\}$. Since $t < t_i$ for every $i \in \{1,2,\ldots,\delta+1\}$, we infer from Lemma \ref{lem:CASGC_operationorder} that the end point of write operation $\pi_i$ is after the point of invocation of operation $\pi$.} Therefore operations $\pi_1,\pi_2,\ldots,\pi_{\delta+1}$ are concurrent with read operation $\pi$.

 \end{proof}

\blue{A proof of Theorem \ref{thm:CASGCliveness} follows from Lemma \ref{lem:invCASGC} in a manner that is similar to Lemma \ref{lem:liveness}. We briefly sketch the argument here.}

\begin{proof}[Proof Sketch of Theorem \ref{thm:CASGCliveness}]
\blue{
{By examination }of the algorithm we observe that termination of any operation depends on 
termination of its phases. So, to show liveness, we need to show that each phase of each operation terminates. We first consider a write operation. Note that termination of the \emph{query} phase of a write operation is contingent on receiving responses from a quorum. Every non-failed server responds to a \emph{query} message with the highest locally available tag marked $\fin$. Since every server is initialized with $(t_0, v_0, \fin)$, every non-failed server has at least one tag associated with the label $\fin$ and hence responds to the writer's \emph{query} message. Since the writer receives responses from every non-failed server, property ({\bf ii}) of Lemma \ref{lem:quorums} ensures that the \emph{query} phase receives responses from at least one quorum, and hence terminates. We can similarly show that the \emph{pre-write} phase and \emph{finalize} phase of a writer terminate. }

\blue{It remains to consider the termination of a read operation. Suppose that a non-failing read operation does not terminate. Then, from Lemma \ref{lem:invCASGC}, we infer that there are at least $\delta+1$ writes that are concurrent with the read. This contradicts our assumption that the number of write operations that are concurrent with a read is no bigger than $\delta$. Therefore every non-failing read operation terminates.}
\end{proof}

\subsection{Bound on storage cost} 

We bound the storage cost of an execution of CASGC by providing a bound on the number of coded elements stored at a server {at \blue{any particular point} of the execution}. In particular, in Lemma \ref{lem:CASGC_storage}, we describe conditions under which coded elements corresponding to the value of a write operation {are garbage collected} at \emph{all} the servers. Lemma \ref{lem:CASGC_storage} naturally leads to \blue{a} storage cost bound in Theorem \ref{thm:storCAS}. {We begin with a definition of an \emph{$\omega$-\blue{superseded} write operation} \blue{for a point in an execution}, for a positive integer $\omega$.}
\begin{Ndefinition}[$\omega$-superseded write operation]
	In an execution $\beta$ of CASGC, consider a write operation $\pi$ that completes its query phase. Let $T(\pi)$ denote the tag of the write. Then, the write operation is said to be $\omega$-superseded at a point $P$ of the execution if \blue{there are at least $\omega$ terminating write operations, each with a tag that is bigger than $T(\pi),$ such that every message on behalf of each of these operations (including `$\mathrm{gossip}$' messages) has been delivered by point $P$. }
\end{Ndefinition}
We show} in Lemma \ref{lem:CASGC_storage} that in {an execution of CASGC$(k,\delta)$}, if a write operation is \blue{$(\delta+1)$-superseded} at a point, then, no server stores a coded element corresponding to the operation at that point because of garbage collection. We state and prove Lemma \ref{lem:CASGC_storage} next. We then use Lemma \ref{lem:CASGC_storage} to describe a bound on the storage cost of any execution of CASGC($k,\delta$) in Theorem \ref{thm:storCAS}.

\begin{Nlemma}
	Consider an execution $\beta$ of CASGC($k,\delta$) and consider any point $P$ of $\beta$. {If} a write operation $\pi$ is \blue{$(\delta+1)$-superseded} at point $P$, {then no non-failed server} has a coded element corresponding to the value of the write operation $\pi$ at point $P$. \label{lem:CASGC_storage}
\end{Nlemma}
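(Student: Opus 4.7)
The plan is to fix an arbitrary non-failed server $s$ and show that it cannot hold a non-null coded element for $T(\pi)$ at point $P$, by exhibiting a garbage-collection event at $s$ (prior to $P$) that strips away the coded element, and then arguing that no subsequent server action at $s$ can restore it.

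First, I would let $\pi_{1},\ldots,\pi_{\delta+1}$ be the $\delta+1$ terminating write operations witnessing that $\pi$ is $(\delta+1)$-superseded, with tags $t_{1},\ldots,t_{\delta+1}$, all strictly greater than $T(\pi)$. By definition of $(\delta+1)$-supersession, every message on behalf of each $\pi_{i}$, including every `$\mathrm{gossip}$' message sent by servers after receiving a corresponding finalize, has been delivered by point $P$. Since the server code forces any server that receives a finalize or `$\mathrm{gossip}$' with tag $t_{i}$ to insert (or upgrade to) a triple $(t_{i},\cdot,\fin)$ or $(t_{i},\Null,(\fin,\gc))$, we get that at point $P$ the server $s$ stores at least $\delta+1$ tags with label in $\{\fin,(\fin,\gc)\}$ that are strictly larger than $T(\pi)$.

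Next, I would argue by contradiction. Suppose $s$ stores a triple $(T(\pi),w_{s},\ast)$ with $w_{s}\in\mathcal{W}$ at point $P$. By inspection of the server handlers, a non-null coded element for a tag can enter the server state only via receipt of a \emph{pre-write} message for that tag (the finalize, reader-finalize, and gossip handlers never insert or restore a non-null coded element). So $s$ must have processed a pre-write for $T(\pi)$ at some earlier point. Let $Q$ be the latest, in the local order at $s$, of the pre-write arrival for $T(\pi)$ and the arrivals of the $\delta+1$ messages for $t_{1},\ldots,t_{\delta+1}$ — this event happens no later than $P$. Each of these events triggers a garbage-collection step. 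At the garbage collection immediately following $Q$, the server already has the pre-write triple for $T(\pi)$ together with all $\delta+1$ tags $t_{i}$ labeled $\fin$ or $(\fin,\gc)$; consequently the top $\delta+1$ finalized tags lie strictly above $T(\pi)$, so the garbage-collection rule replaces the triple at tag $T(\pi)$ with $(T(\pi),\Null,(\ast,\gc))$, destroying the coded element.

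Finally, I would close the loop by checking that nothing between this event and $P$ can reintroduce a coded element at tag $T(\pi)$. After garbage collection, the server has a record $(T(\pi),\Null,(\ast,\gc))$; a later pre-write for $T(\pi)$ is then ignored (because a record of the tag exists); a later writer- or reader-\emph{finalize} for $T(\pi)$ merely upgrades this record to $(T(\pi),\Null,(\fin,\gc))$; and a later `$\mathrm{gossip}$' message does the same. In every case the second component remains $\Null$. Combined with the trivial case in which $s$ never receives a pre-write for $T(\pi)$, this yields the claim. The main delicate point, and the step I would be most careful with, is the case analysis on the order of arrivals of the pre-write for $T(\pi)$ and the $\delta+1$ superseding messages at $s$, and the verification — by direct case analysis on the four server handlers — that no post-garbage-collection action can ever resurrect a non-null coded element at tag $T(\pi)$.
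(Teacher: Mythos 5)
Your proof is correct and takes essentially the same approach as the paper's, but you are noticeably more careful about two points that the paper's own proof leaves implicit. The paper simply observes that each of the $\delta+1$ superseding writes delivers a finalize message to $s$ by point $P$ and concludes that ``the garbage collection executed by the server on the receipt of the last of these finalize messages ensures that the coded element corresponding to tag $T(\pi)$ does not exist at server $s$ at point $P$.'' This is terse in two respects that you handle explicitly: (i) if the pre-write for $T(\pi)$ arrives at $s$ \emph{after} the last of those $\delta+1$ finalize messages, the GC at the last finalize would find nothing to collect — your choice of $Q$ as the latest of the pre-write arrival and the $\delta+1$ finalize/gossip arrivals, together with the observation that the pre-write handler also triggers GC, closes this case; and (ii) the paper does not argue that, once collected, the coded element stays collected up to point $P$ — your case analysis over the four handlers showing that a post-GC $(T(\pi),\Null,(\ast,\gc))$ record can never be upgraded back to a non-null coded element supplies exactly this. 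These are not a different decomposition or new lemmas, just a more explicit rendering of the same garbage-collection argument, and filling in these gaps is the right thing to do.
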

\begin{proof}[Proof]
Consider an execution $\beta$ of CASGC($k,\delta$) and a point $P$ in $\beta$. Consider a {write} operation $\pi$ that is \blue{$(\delta+1)$-superseded} at point $P$. Consider an arbitrary server $s$ that has not failed at point $P$. We show that server $s$ does not have a coded element corresponding to operation $\pi$ at point $P.$ Since operation $\pi$ is \blue{$(\delta+1)$-superseded} at point $P$, there exist at least $\delta+1$ write operations $\pi_1, \pi_2, \ldots, \pi_{\delta+1}$ such that, for every $i \in \{1,2,\ldots,\delta+1\}$,
\begin{itemize*}
\item operation $\pi_i$ terminates in $\beta,$ 
\item the tag $T(\pi_i)$ acquired by operation $\pi_i$ is larger than $T(\pi)$, and 
\item every message on behalf of operation $\pi_i$ is delivered by point $P$.
\end{itemize*} 
Since operation $\pi_i$ terminates, it completes its \emph{finalize} phase where it sends a finalize message with tag $T(\pi_i)$ to server $s$. Furthermore, the \emph{finalize} message with tag $T(\pi_i)$ arrives at server $s$ by point $P$. Therefore, by point $P$, server $s$ has received at least $\delta+1$ finalize messages, one from each operation in $\{\pi_i:i=1,2,\ldots,\delta+1\}$. The garbage collection executed by the server on the receipt of the last of these finalize messages ensures that the coded element corresponding to tag $T(\pi)$ does not exist at server $s$ at point $P$. This completes the proof.

\end{proof}

\begin{mainth}
\label{thm:storCAS}
Consider an execution $\beta$ of CASGC($k,\delta$) such that, at any point of the execution, \blue{the number of writes that have completed their query phase by that point and are not $(\delta+1)$-superseded at that point is upper bounded by $w$.} The storage cost of the execution is at most $\frac{w N}{k} \log_2|\mathcal{V}|.$
\end{mainth}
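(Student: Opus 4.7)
The plan is to reduce the storage-cost bound to a counting argument on the coded elements stored at each individual server, and then sum the contributions across the $N$ servers. By the storage-cost convention introduced in Section~\ref{sec:background}, the only contributions to the storage cost at a server come from triples whose middle coordinate lies in $\mathcal{W}$ (actual coded elements); triples of the form $(t,\Null,*)$, including those produced by garbage collection, contribute only metadata and are not charged. So it suffices to upper bound, at an arbitrary point $P$ of $\beta$, the number of coded elements retained at each non-failed server.

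Fix a point $P$ of $\beta$ and an arbitrary non-failed server $s$. I would first observe, by examination of the server protocol of Fig.~\ref{fig:CASGC}, that a coded element $w_s \in \mathcal{W}$ associated with tag $t$ can appear in the state of $s$ only if $s$ received a \emph{pre-write} message with tag $t$ at some earlier point of $\beta$. Since the tag $t$ is formed in a writer's \emph{pre-write} phase, this in turn implies that some write operation $\pi$ with $T(\pi)=t$ has completed its \emph{query} phase by point $P$. This identifies the set of ``candidate'' writes whose coded elements could still reside at $s$ at point $P$.

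Next, I would invoke Lemma~\ref{lem:CASGC_storage}: any such candidate write $\pi$ that is additionally $(\delta+1)$-superseded at $P$ has no coded element stored at any non-failed server at point $P$. Consequently, the candidate writes whose coded elements may still be present at $s$ at $P$ form a subset of the set of writes that have completed their \emph{query} phase by $P$ and are not $(\delta+1)$-superseded at $P$. By hypothesis, this set has cardinality at most $w$, so server $s$ stores at most $w$ coded elements at point $P$.

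Finally, each coded element is an element of $\mathcal{W}$, and from the discussion in Section~\ref{sec:ec} we have $\log_2|\mathcal{W}| = \tfrac{\log_2|\mathcal{V}|}{k}$. Therefore each non-failed server contributes at most $w\cdot\tfrac{\log_2|\mathcal{V}|}{k}$ bits to the storage cost at $P$, and summing over the (at most) $N$ non-failed servers yields a total storage cost at $P$ of at most $\tfrac{wN}{k}\log_2|\mathcal{V}|$. Taking the supremum over points $P$ preserves this bound and gives the claim. The only non-routine step is the appeal to Lemma~\ref{lem:CASGC_storage}, which has already been proved; everything else is a counting argument using the per-coded-element size from Section~\ref{sec:ec}, so I do not anticipate any significant obstacle.
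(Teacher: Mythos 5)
Your proof is correct and follows essentially the same approach as the paper's: fix an arbitrary point $P$ and a non-failed server $s$, argue that any coded element at $s$ must belong to a write that has completed its query phase but is not $(\delta+1)$-superseded at $P$ (the latter via Lemma~\ref{lem:CASGC_storage}), bound that set by $w$, and multiply by the per-coded-element size $\tfrac{\log_2|\mathcal{V}|}{k}$ and the number of servers $N$. Your explicit remark that `$\mathrm{null}$' triples contribute no storage cost and your observation that a stored coded element's tag must have originated in some writer's pre-write phase are slight elaborations of steps the paper states more tersely, but do not constitute a different route.
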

\begin{proof}

		\blue{Consider an execution $\beta$ where at any point of the execution, the number of writes that have completed their query phase by that point and are not $(\delta+1)$-superseded at that point is upper bounded by $w$. Consider an arbitrary point $P$ of the execution $\beta$, and consider a server $s$ that is non-failed at point $P$. We infer from the write and server protocols that, at point $P$, server $s$ does not store a coded element corresponding to any write operation that has not completed its query phase by point $P$. We also infer from Lemma \ref{lem:CASGC_storage} that server $s$ does not store a coded element corresponding to an operation that is $(\delta+1)$-superseded at point $P$. Therefore, if server $s$ stores a coded element corresponding to a write operation at point $P$, we infer that the write operation has completed its query phase but is not $(\delta+1)$-superseded by point $P$. By assumption on the execution $\beta$, the number of coded elements at point P of $\beta$ at server $s$ is upper bounded by $w$. Since each coded element has a size of $\frac{1}{k}$ \units and we considered an arbitrary server $s$, the storage cost at point $P,$ summed over all the non-failed servers, is upper bounded by $\frac{wN}{k}$ \unit. Since we considered an arbitrary point $P$, the storage cost of the execution is upper bounded by $\frac{wN}{k}$ \unit.}
	\end{proof}
	We note that Theorem \ref{thm:storCAS} can be used to obtain a bound \blue{on} the storage cost of executions in terms of various parameters of the system components. For instance, the theorem can be used to obtain a bound on the storage cost in terms of an upper bound on the delay of every message, the number of steps for the nodes to take actions, the rate of write operations, and the rate of failure.
In particular, the above parameters can be used to bound the number of writes that are not $(\delta+1)$-superseded, which can then be used to bound the storage cost.

{
\section{Communication Cost Optimal Algorithm}
\label{sec:lowerbound}
A natural question is whether one might be able to prove a lower bound to show that communication costs of CAS and CASGC are optimal. Here, we describe a new \blue{``counterexample algorithm''} called \emph{Communication Cost Optimal Atomic Storage} (CCOAS) algorithm, which shows that such a lower bound cannot be proved. We show in Theorem \ref{thm:storCCOAS} that CCOAS has write and read communication costs of $\frac{N}{N-f}\log_2|\mathcal{V}|$ bits, which is smaller than the communication costs of CAS and CASGC. Because elementary coding theoretic bounds imply that these costs can be no smaller than $\frac{N}{N-f}\log_2|\mathcal{V}|$ bits, CCOAS is optimal from the perspective of communication costs. CCOAS, however, is infeasible in practice because of certain drawbacks described later in this section.
\begin{figure}[tb]
\begin{mdframed}
\footnotesize
\begin{tabbing}
{\bf write}(\emph{value})\\
	\ \ \begin{minipage}[t]{\textwidth}%
\emph{query}: Same as in CAS($N-2f$). \\
\emph{pre-write}: Select the largest tag from the \emph{query} phase; form a new tag $t$ by incrementing integer by 1 and adding its `id'. {{Apply an $(N,N-f)$ MDS code $\Phi$ to \emph{value}}} and obtain coded elements $w_1,\dots,w_N$. Send $(t, w_s,\pre)$ to every server $s$. Await responses from a quorum.

\emph{finalize}: Same as in CAS($N-2f$). 
\end{minipage}\\ \\
{\bf read} \\
	\ \ \begin{minipage}[t]{\textwidth}%

\emph{query}: Same as in CAS($N-2f$).

\emph{finalize}: Select largest tag $t$ from the query phase. Send \emph{finalize} message $(t,\Null,\fin)$ to all servers requesting the associated coded elements. { Await responses with coded elements from a quorum.} Obtain the \emph{value} by inverting $\Phi$, and terminate by returning $value$.
\end{minipage}
\\ \\
{\bf server}\\
	\ \ \begin{minipage}[t]{\textwidth}%

\emph{state variables}: State is a subset of $\mathcal{T} \times (\mathcal{W}\cup \{\Null\}) \times \{ \pre, \fin\} \times 2^{\mathcal{C}}$.

\emph{initial state}: $(t_0,w_{0,s},\fin, \{\})$.

Response to \emph{query}: Send highest locally known tag that has label $\fin$.

Response to \emph{pre-write}: If the tag $t$ of the message is not available in the locally stored set of tuples, add the tuple $(t, w_s, \pre, \{\})$ to the locally stored set. {If $(t, \Null, \fin, \mathcal{C}_{0})$ exists in the locally stored set of tuple for some set of clients $\mathcal{C}_{0}$, then send $(t, w_s)$ to every client in $\mathcal{C}_{0}$ and modify the locally stored tuple to $(t, w_s, \fin, \{\}).$} Send acknowledgement to the writer.

Response to \emph{finalize} of write: Let $t$ denote the tag of the message. If $(t,w_s,\pre, \{\})$ exists in the locally stored set of tuple where $*$ can be $\pre$ or $\fin$, update to $(t,w_s,\fin, \{\})$. If no tuple exists in the locally stored set with tag $t$, add $(t,\Null,\fin, \{\})$ to the locally stored set. Send acknowledgement.

Response to \emph{finalize} of read: Let $t$ denote the tag of the message and $C \in \mathcal{C}$ denote the identifier of the client sending the message. If $(t,w_s,*, \mathcal{C}_{0})$ exists in the locally stored set, update the tuple as  $(t,w_s,\fin, \mathcal{C}_0)$ and send $(t,w_s)$ to reader. If $(t, \Null, \fin, \mathcal{C}_{0})$ exists at the server, update it as $(t, \Null, \fin, \mathcal{C}_{0}\cup \{C\})$. Otherwise, add  $(t, \Null, \fin, \{C\})$ to the list of locally stored tags. 
\end{minipage}
\end{tabbing}
\end{mdframed}
\caption{The CCOAS algorithm. We denote the (possibly infinite) set of clients by $\mathcal{C}.$ The notation $2^\mathcal{C}$ denotes the power set of the set of clients $\mathcal{C}$.}
\label{fig:counterexample}
\end{figure}

\subsection{Algorithm description}
CCOAS resembles CAS in its structure. Like CAS($N-2f$), its quorum $\mathcal{Q}$ consists of the set of all subsets of $\mathcal{N}$ that have at least $N-f$ elements. We also use terms ``query'', ``pre-write'', and ``finalize'' for the various phases of operations.  We provide a formal description of CCOAS in Fig. \ref{fig:counterexample}. Here, we informally describe the differences between CAS and CCOAS.
\begin{itemize*}
\item In CCOAS, the writer uses an $(N,N-f)$ MDS code to generate coded elements. Note the contrast with CAS($k$) which uses an $(N,k)$ code, where the parameter $k$ is at most $N-2f.$ Because we use an $(N,N-f)$ code in CCOAS, the size of each coded element is equal to $\frac{\log_2|\mathcal{V}|}{N-f}$ bits, and as a consequence, the read and write communication costs are equal to $\frac{N}{N-f}\log_2|\mathcal{V}|$ bits.
\item In CCOAS, a reader requires $N-f$ responses with coded elements for termination of its finalize phase. In CAS, in general, at most $N-2f$ responses with coded elements are required. 
\item In CCOAS, the servers respond to finalize messages from a read with coded elements only. This is unlike CAS, where a server that does not have a coded element corresponding to the tag of a reader's finalize message at the point of reception responds simply with an acknowledgement. In CCOAS, if a server does not have a coded element corresponding to the tag $t$ of a reader's finalize message at the point of reception, then, in addition to adding a triple of the form $(t, \Null, \fin)$ to its local storage, the server registers this read along with tag $t$ in its logs. When the corresponding coded element with tag $t$ arrives at a later point, the server, in addition to storing the coded element, sends it to every reader that is registered with tag $t$. We show in our proofs of correctness that, in CCOAS, every non-failing server responds to a finalize message from a read with a coded element at some point. 
\end{itemize*}

\subsection{Proof of correctness and communication cost}
We next describe a formal proof of the correctness of CCOAS. 
\subsubsection{Atomicity}
\begin{mainth}
	CCOAS emulates shared atomic read/write memory. 
	\label{thm:counterexample}
\end{mainth}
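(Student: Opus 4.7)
The plan is to mirror the atomicity proof of CAS (Lemma \ref{lem:atomicity}) almost verbatim, since CCOAS preserves the ``tag discipline'' of CAS: every terminating operation acquires a tag via its query and (for writes) pre-write steps, and every terminating operation causes that tag to be stored with the $\fin$ label at a quorum. I would extend Definition \ref{def:tagofoperation} to CCOAS unchanged (the tag of a read is the highest tag gathered in its query phase; the tag of a write is the new tag formed in its pre-write phase) and define the same partial order $\prec$: $\pi_1 \prec \pi_2$ iff $T(\pi_1) < T(\pi_2)$, or $T(\pi_1) = T(\pi_2)$ with $\pi_1$ a write and $\pi_2$ a read. The goal is then to verify the three conditions of Lemma \ref{def:atomicity}. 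Note that, since $N > 2f$, the CCOAS quorum system (subsets of $\mathcal{N}$ of size $\geq N-f$) satisfies the pairwise intersection property of Lemma \ref{lem:quorums} with intersection size at least one (in fact $N-2f$), which is all the atomicity argument needs.

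The supporting lemmas I would prove, in order, are direct analogs of Lemmas \ref{lem:qfwproperty}, \ref{lem:tagsordering}, and \ref{inv:tags}. First, the \emph{quorum property}: for every terminating operation $\pi$ with tag $t = T(\pi)$, there exists a quorum $Q_{fw}(\pi)$ such that at every point after the termination of $\pi$, every server in $Q_{fw}(\pi)$ stores a tuple of the form $(t, *, \fin, *)$. For writes this is immediate from the finalize-phase quorum of acks, exactly as in CAS. For reads the argument uses the fact that, in CCOAS, a server responds to a reader's finalize only by sending a coded element (possibly after deferring until the corresponding pre-write arrives); hence the quorum of $N-f$ responses that allow the read to terminate comes from servers each of which, at some earlier point, stored $(t, w_s, \fin, *)$, and servers never downgrade $\fin$ labels or drop tags. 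Second, the \emph{tag ordering lemma}: if $\pi_1$ terminates before $\pi_2$ is invoked, then $T(\pi_2) \geq T(\pi_1)$, with strict inequality when $\pi_2$ is a write. This follows from the quorum property together with Lemma \ref{lem:quorums}: the query-phase quorum of $\pi_2$ intersects $Q_{fw}(\pi_1)$ in at least one server, which responds to $\pi_2$'s query with a $\fin$ tag no smaller than $T(\pi_1)$. Third, \emph{tag uniqueness for writes} follows exactly as in CAS, from distinct client `id' suffixes and the handshake discipline combined with the tag ordering lemma.

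With these three lemmas, conditions $\mathbf{(1)}$ and $\mathbf{(2)}$ of Lemma \ref{def:atomicity} transfer word-for-word from the CAS proof. Condition $\mathbf{(3)}$ asserts that each read returns the value written by the last preceding write under $\prec$ (or $v_0$). This holds because a terminating read in CCOAS collects $N-f$ coded elements all tagged with $T(\pi)$, and each such coded element was produced by applying the $(N, N-f)$ MDS code $\Phi$ to the value associated with the unique write of tag $T(\pi)$ (or to $v_0$ if $T(\pi) = t_0$); inverting $\Phi$ by Definition \ref{def:MDS} then returns that value. The main obstacle, in my view, is the read case of the quorum property. Unlike CAS, where every non-failed server immediately responds to a reader's finalize (either with a coded element or with an acknowledgement), a CCOAS server that lacks a coded element for tag $t$ merely \emph{registers} the reader in its $\mathcal{C}_0$-set and defers, so a server's $\fin$-label for $t$ may coexist with a $\Null$ payload at the point the read terminates. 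I therefore need to argue that the $N-f$ \emph{responding} servers (those whose responses actually caused termination) are precisely those holding the upgraded tuple $(t, w_s, \fin, *)$, which follows by a small case analysis of the server's response-to-finalize-of-read and response-to-pre-write actions: each of the two branches in which a server sends $(t, w_s)$ to the reader simultaneously installs $w_s$ under tag $t$ with the $\fin$ label. Once this is established, the rest of the proof adapts from CAS essentially line-for-line.
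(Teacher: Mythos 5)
Your proposal is a correct and workable proof of \emph{atomicity} for CCOAS, but it is not a complete proof of Theorem~\ref{thm:counterexample}, which claims that CCOAS \emph{emulates} shared atomic read/write memory --- a claim the paper explicitly decomposes into an atomicity part (Lemma~\ref{lem:counterexampleatom}) and a liveness part (Lemma~\ref{lem:counterexampleliveness}). You never address liveness. In fact the paper singles out liveness as the crux: the sentence immediately following the theorem statement says the main challenge is showing termination of reads, specifically that every non-failing server eventually sends a coded element in response to a reader's finalize message (Lemma~\ref{lem:counterexample}). Your discussion of the ``quorum property for reads'' is only a safety fact --- it argues that \emph{if} a read terminates, the $N-f$ responding servers each hold $(t, w_s, \fin, *)$ and retain it. It does not argue that a non-failing read \emph{will} accumulate $N-f$ coded-element responses. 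That argument is genuinely different from the CAS case: in CAS a read needs only $k \le N-2f$ coded elements and the quorum-intersection $|Q_{pw}\cap Q_{fw}|\ge k$ delivers them immediately, but in CCOAS a read needs $N-f$ coded elements while only $N-2f$ are guaranteed to be present at the time the finalize messages land. The extra coded elements arrive only because (a) servers that lack the coded element register the reader in $\mathcal{C}_0$ rather than answering with an acknowledgment, and (b) the reliable-channel assumption guarantees the writer's pre-write messages eventually reach every non-failing server, at which point the server forwards $w_s$ to every registered reader. Without that argument --- which is the raison d'\^etre of the registration mechanism and the reason CCOAS is flagged as infeasible with lossy channels --- the proof is incomplete.

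On the atomicity half, your route differs from the paper's in a genuine way. The paper constructs, for each CCOAS execution $\beta$, a CAS execution $\beta'$ with the same trace by delaying `$\mathrm{gossip}$' indefinitely and delaying each reader's finalize message at a server until the corresponding pre-write has arrived; atomicity of CCOAS is then inherited from atomicity of CAS. You instead re-derive the partial order $\prec$ on tags and re-verify the three conditions of Lemma~\ref{def:atomicity} directly, re-proving analogs of Lemmas~\ref{lem:qfwproperty}, \ref{lem:tagsordering}, and \ref{inv:tags}. Your observation that the size-$(N-f)$ quorums pairwise intersect (in $\ge N-2f \ge 1$ servers, since $N>2f$) is correct and sufficient for the tag-ordering argument, and your case analysis of the two server branches that emit $(t,w_s)$ to the reader (both of which install $(t,w_s,\fin,*)$) is the right way to patch the quorum-property lemma for reads. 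Both approaches are sound; the paper's simulation is more economical because it reuses the CAS proof wholesale, while yours is more self-contained but repeats essentially the same reasoning. Either way you still owe the liveness argument to complete the theorem.
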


The main challenge in showing Theorem \ref{thm:counterexample} lies in showing termination of read operations, specifically to show that every non-failing server sends a coded element in response to a reader's finalize message. The theorem follows from Lemmas \ref{lem:counterexampleliveness} and \ref{lem:counterexampleatom}, which are stated next.

\begin{Nlemma}
The CCOAS algorithm satisfies atomicity.
\label{lem:counterexampleatom}
\end{Nlemma}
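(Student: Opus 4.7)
The plan is to mirror the atomicity argument for CAS (Lemma \ref{lem:atomicity}), with adaptations for the CCOAS server protocol, and to verify the three sufficient conditions of Lemma \ref{def:atomicity} using the same partial order $\prec$ on operations induced by their tags. Since the client-visible structure of query/pre-write/finalize and the tag arithmetic are the same as in CAS, I expect the overall skeleton to carry over; the CCOAS-specific subtleties lie in what a server does when a reader's finalize arrives before the corresponding pre-write, i.e.\ the $\mathcal{C}_0$-registration mechanism.

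First I would extend Definition \ref{def:tagofoperation} verbatim: $T(\pi)$ is the highest tag returned in the query phase if $\pi$ is a read, and the newly formed tag in the pre-write phase if $\pi$ is a write. Then I would prove the analog of Lemma \ref{lem:qfwproperty}: for every terminating operation $\pi$, there exists a quorum $Q_{fw}(\pi)$ such that after $\pi$ terminates, every server in $Q_{fw}(\pi)$ permanently holds a tuple $(T(\pi),\ast,\fin,\ast)$. This follows by inspecting the server actions in Fig.~\ref{fig:counterexample}, which in every branch transition a tuple with tag $T(\pi)$ to label $\fin$ at the moment the server responds to $\pi$'s finalize (either directly, or indirectly by a later pre-write which then ships the coded element to the reader), and which never downgrade a $\fin$ label.

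Next I would establish the analog of Lemma \ref{lem:tagsordering}: if $\pi_1$ returns before $\pi_2$ is invoked then $T(\pi_2)\ge T(\pi_1)$, with strict inequality if $\pi_2$ is a write. The argument is the same as in CAS, using quorum intersection: since any two quorums have size at least $N-f$, their intersection has at least $2(N-f)-N=N-2f\ge 1$ server, which responds to $\pi_2$'s query with a tag $\ge T(\pi_1)$. Uniqueness of write tags (the analog of Lemma \ref{inv:tags}) then follows by the usual dichotomy on whether the two writes are at the same client (handshake plus the above lemma) or different clients (the `id' component distinguishes the tags).

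With these pieces, the same partial order $\pi_1\prec \pi_2$ iff $T(\pi_1)<T(\pi_2)$, or $T(\pi_1)=T(\pi_2)$ with $\pi_1$ a write and $\pi_2$ a read, gives properties (1) and (2) of Lemma \ref{def:atomicity} verbatim. The hard part, and the main obstacle, is property (3): the value a read $\pi$ returns must equal the value written by the unique write with tag $T(\pi)$. In CCOAS the reader terminates only after collecting responses \emph{containing coded elements} from a quorum of $N-f$ servers, all for the single tag $T(\pi)$ it broadcast in its finalize. I would then verify, by a short case analysis over the four server branches, that every coded element a server ever sends to the reader for tag $T(\pi)$ is precisely the $s$th output of applying $\Phi$ to the value of the unique pre-write with tag $T(\pi)$, whether the server already held the coded element at reception of the reader's finalize or obtained it later through the $\mathcal{C}_0$-registration mechanism. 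Invoking the inverse of the $(N,N-f)$ MDS code $\Phi$ on these $N-f$ consistent coded elements (Definition \ref{def:MDS}) yields exactly that value, discharging (3) and completing the proof.
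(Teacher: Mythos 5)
Your proof is correct, but it takes a genuinely different route from the paper's. The paper proves Lemma~\ref{lem:counterexampleatom} by constructing a simulation relation from CCOAS to CAS: for every execution $\beta$ of CCOAS it builds a corresponding execution $\beta'$ of CAS with the same trace, by delaying `$\mathrm{gossip}$' messages indefinitely and delaying each reader's \emph{finalize} message to a server $s$ so that it arrives at the later of its actual arrival point in $\beta$ and the arrival point of the matching \emph{pre-write} at $s$. This ensures the server responses coincide in $\beta$ and $\beta'$, so $\beta$ inherits atomicity directly from Lemma~\ref{lem:atomicity}. You instead re-verify the three conditions of Lemma~\ref{def:atomicity} from scratch for CCOAS, re-proving analogs of Lemmas~\ref{lem:qfwproperty}, \ref{lem:tagsordering} and \ref{inv:tags} for the modified server protocol. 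Both work. The paper's simulation is shorter and avoids redoing the tag-ordering machinery, but it puts the burden on carefully constructing $\beta'$ and arguing that every server reaction is preserved (including the subtle point that a delayed reader finalize arriving at or after the pre-write triggers exactly the same outgoing coded element as the $\mathcal{C}_0$-registration mechanism did in $\beta$). Your direct route is more verbose but more self-contained; the one place you should be careful to spell out is the read case of the Lemma~\ref{lem:qfwproperty} analog: for CCOAS a server in $Q_{fw}(\pi)$ may send its coded element to the reader at the point of receiving the \emph{pre-write} rather than at the point of receiving the reader's \emph{finalize}, but in either branch the server installs a tuple with tag $T(\pi)$ and label $\fin$ at that instant and never downgrades it, so the quorum-visibility property you need still holds at the termination point of $\pi$.
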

\begin{proof}
Atomicity can be shown via a simulation relation with CAS.  We provide a brief informal sketch of the relation here. We argue that for every execution $\beta$ of CCOAS, there is an execution $\beta'$ of CAS with the same trace. To see this, we note that the write protocol of CCOAS is essentially identical to the write protocol in CAS, with the only difference between the two algorithms being the erasure code used in the pre-write phase. Similarly, the query phase of the read protocols of both algorithms are the same. Also note that the server responses to messages from a writer and query messages from a reader are identical in both CAS and CCOAS. The main differences between CCOAS and CAS in the server actions. The first difference is that, in CCOAS, the servers do not perform `$\mathrm{gossip}$'. The second difference is that in CCOAS, if the server does not have a coded element corresponding to the tag of the reader's finalize message, then the server does not respond at this point. Instead, the server sends a coded element to the reader at the point of receipt of the pre-write message with this tag. We essentially create $\beta'$ from $\beta$ by delaying all messages `$\mathrm{gossip}$' messages indefinitely, and delaying reader's finalize messages so that they arrive at each server at the point of, or after the receipt of the corresponding pre-write message by the server. This delaying ensures that the server actions are identical in both $\beta$ and $\beta'$.

Specifically, we create $\beta'$ as follows. In $\beta'$ the points of 
\begin{itemize*}
\item invocations of operations, 
\item sending and receipt of messages between writers and servers, 
\item sending and receipt of query messages between readers and servers, 
\item and sending of finalize messages from the readers
\end{itemize*}
are identical to $\beta$. The server `$\mathrm{gossip}$' messages in $\beta'$ are delayed indefinitely. A crucial difference between $\beta$ and $\beta'$ lies in the points of receipt of reader's finalize messages at the servers. Consider a read operation that acquired tag $t$ in $\beta$ and let $P$ denote the point of receipt of a reader's finalize message to server $s$. Let $P'$ denote the point of receipt of a pre-write message with tag $t$ at server $s$ in $\beta$.  Now, consider the corresponding read operation that acquired tag $t$ in $\beta'.$ Now, if $P$ precedes $P'$ in $\beta,$ then the reader's finalize message with tag $t$ arrives at server $s$ at $P'$ in $\beta'$, else, it arrives at point $P$ in $\beta'$. This implies that server $s$ responds to reader's finalize messages at the same points in $\beta$ and $\beta'$. Finally, we complete our specification of $\beta'$ by letting a server's response to the reader's finalize message arrive at the client at the same point in $\beta'$ as in $\beta$.

Note that if an operation acquires tag $t$ in $\beta$, the corresponding operation in $\beta'$ also acquires tag $t$. Also note that the points of invocation, responses of operations and the values returned by read operations are the same in both $\beta$ and $\beta'$. Therefore, there exists an execution $\beta'$ of CAS with the same trace as an arbitrary execution $\beta$ of CCOAS. Since CAS is atomic, $\beta'$ has atomic behavior, and so does $\beta$. Therefore, CCOAS satisfies atomicity.
\end{proof}

\subsubsection{Liveness}
We next state the liveness condition of CCOAS.
\begin{Nlemma}
\label{lem:counterexampleliveness}
CCOAS satisfies the liveness condition: in every fair execution where the number of failed servers is no bigger than $f$, every non-failing operation terminates.
\end{Nlemma}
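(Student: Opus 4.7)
The plan is to establish termination of each phase of each operation, following the template of Lemma~\ref{lem:liveness}. The \emph{query} phases of reads and writes terminate because every non-failing server responds using its initial state $(t_0,w_{0,s},\fin,\{\})$ (or a later $\fin$-labelled tag), and the set of at least $N-f$ non-failing servers contains a quorum. By the same argument, a writer's \emph{pre-write} and \emph{finalize} phases terminate, since every non-failing server sends an acknowledgement upon processing these messages.

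The substantive case is the read's \emph{finalize} phase, which requires $N-f$ responses that actually carry coded elements. First I would prove the following invariant: whenever a server has, at some point, a tuple with tag $t > t_0$ and label $\fin$, there was a strictly earlier point at which some writer completed its \emph{pre-write} phase with tag $t$ and hence sent a pre-write to every server. This is proved by examining the very first point in the execution at which any server in the system has such a $\fin$-labelled tag $t > t_0$: by inspection of the server code, readers cannot be the source (a reader's \emph{finalize} message carries a tag obtained during \emph{query}, which already requires a $\fin$ label somewhere in the system), so the first such occurrence must be the receipt of a writer's \emph{finalize} message, and such a message is sent only after the writer's \emph{pre-write} phase sent pre-writes to \emph{all} servers.

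Given the invariant, fix a non-failing read $\pi$ whose \emph{query} phase returns tag $t$. If $t = t_0$, every non-failing server already stores its coded element and responds upon receipt of $\pi$'s \emph{finalize}. Otherwise, the invariant yields a writer who already sent pre-writes with tag $t$ to all servers, and channel reliability ensures every non-failing server eventually receives this pre-write. For any non-failing server $s$: if the pre-write with tag $t$ arrives at $s$ before $\pi$'s \emph{finalize}, then $s$ has the coded element and responds with it upon receiving the \emph{finalize}; otherwise $s$ stores $(t,\Null,\fin,\{C\})$, registering $\pi$, and by the server's response rule to \emph{pre-write} it transmits the coded element to $\pi$ as soon as the pre-write arrives. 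Either way, each of the $\geq N-f$ non-failing servers eventually delivers a coded element to $\pi$, so the \emph{finalize} phase receives at least $N-f$ coded elements and terminates.

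The main obstacle is the chain-of-provenance argument for the invariant: one has to verify, by careful case analysis on the server protocol in Fig.~\ref{fig:counterexample}, that no new $\fin$-labelled tag $t > t_0$ can appear anywhere in the system without being traced back to a completed writer pre-write phase, and that the ``register-on-finalize-then-deliver-on-pre-write'' handshake correctly delivers a coded element exactly once to each registered non-failing reader. Both should reduce to straightforward bookkeeping, but nothing in the rest of the argument depends on subtle asynchrony issues beyond those already handled in the CAS liveness proof.
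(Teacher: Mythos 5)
Your proposal is correct and follows essentially the same route as the paper: both reduce the problem to showing that every non-failing server eventually responds to a reader's \emph{finalize} message with a coded element (this is exactly Lemma~\ref{lem:counterexample} in the paper), using the same two-scenario case split on whether the pre-write with tag $t$ reaches the server before or after the reader's \emph{finalize}, and relying on the reader-registration mechanism in the second case. The paper compresses the ``a writer with tag $t$ must have completed its pre-write phase'' step into a ``from examination of the algorithm'' remark; your chain-of-provenance argument (taking the first point at which any server holds a $\fin$-labelled tag $t>t_0$ and observing that readers cannot be the originating source) is a legitimate and somewhat more careful way to discharge the same step, and your explicit treatment of $t=t_0$ is a small completeness improvement, but neither changes the structure of the argument.
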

To show Lemma \ref{lem:counterexampleliveness}, we first state and prove Lemma \ref{lem:counterexample}. Informally speaking, Lemma \ref{lem:counterexample} implies that every non-failing server responds to a reader's finalize message with a coded element. As a consequence, every read operation gets $N-f$ coded elements in response to its finalize messages. Therefore its finalize phase implying that the operation returns implying Lemma \ref{lem:counterexampleliveness}. We first state and prove Lemma \ref{lem:counterexample}. Then we prove Lemma \ref{lem:counterexampleliveness}. 

\begin{Nlemma}
Consider any fair execution $\alpha$ of CCOAS and a server $s$ that does not fail in $\alpha$. Then, for any read operation in $\alpha$ with tag $t$, the server $s$ responds to the read's finalize message with the coded element corresponding to tag $t$ at some point of $\alpha$. 
\label{lem:counterexample}
\end{Nlemma}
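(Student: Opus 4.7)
The plan is to prove the lemma by tracing the origin of tag $t$ back to the writer that created it, and then using reliability of the channel together with the server's two response modes (coded-element response versus deferred response via reader registration) to guarantee that server $s$ eventually emits the coded element corresponding to $t$ to the reading client.

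First, I would establish that for any read that acquires tag $t \ne t_0$ in $\alpha$, there exists a write operation $W$ whose pre-write phase is the unique origin of $t$, and moreover that $W$ actually completed its pre-write phase and sent at least one finalize message. To see this, consider the earliest point in $\alpha$ at which any server has tag $t$ with the $\fin$ label. From the server code, the $\fin$ label for a non-initial tag $t$ can only be produced by processing a finalize message with tag $t$. This finalize message cannot originate from a reader at this earliest point, because any reader that sends a finalize with tag $t$ must previously have obtained $t$ in its query phase from a server that already stored $t$ with the $\fin$ label, contradicting earliestness. Hence the finalize that first produces $\fin$ with tag $t$ must come from the writer $W$ that created $t$. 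By the write protocol, $W$ sends the finalize only after its pre-write phase completes, and the pre-write phase sends $(t,w_u,\pre)$ to every server $u \in \mathcal{N}$ before awaiting quorum acknowledgments. The case $t=t_0$ is immediate from the server's initial state.

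Next, I would use reliability of point-to-point channels together with the assumption that $s$ does not fail in $\alpha$ to conclude that server $s$ eventually receives the pre-write message $(t,w_s,\pre)$ from $W$ at some point $P_{\mathrm{pw}}$ of $\alpha$. Let $P_{\mathrm{fin}}$ be the point at which $s$ receives the reader's finalize message with tag $t$ (which exists by fairness since the reader sends to all servers). I would then split on the relative order of $P_{\mathrm{pw}}$ and $P_{\mathrm{fin}}$: if $P_{\mathrm{pw}} \le P_{\mathrm{fin}}$, then at $P_{\mathrm{fin}}$ the locally stored set already contains a tuple of the form $(t,w_s,\pre,\mathcal{C}_0)$ or $(t,w_s,\fin,\mathcal{C}_0)$, so by the read-finalize branch of the server protocol $s$ immediately sends $(t,w_s)$ to the reader; if $P_{\mathrm{fin}} < P_{\mathrm{pw}}$, then by inspection of the read-finalize branch the locally stored set at $P_{\mathrm{pw}}$ contains a tuple $(t,\Null,\fin,\mathcal{C}_0)$ with $C \in \mathcal{C}_0$, where $C$ is the reader's identifier, and so the pre-write branch of the server protocol dispatches $(t,w_s)$ to every client in $\mathcal{C}_0$, including $C$.

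The main obstacle will be the trace-back argument in the first step: making precise the claim that the first appearance of $\fin$ with tag $t$ cannot be triggered by a reader. I would handle this cleanly by an induction on the ordered sequence of events in $\alpha$ that deposit a $\fin$-labeled copy of $t$ at any server, using the fact that tags sent by a reader in a finalize must have been returned to that reader by a server's query response, which requires the $\fin$ label to have existed strictly earlier. A secondary subtlety is that $W$ itself may have failed after completing its pre-write phase, so I would explicitly note that completion of the pre-write send-loop (not receipt of acknowledgments from a quorum) is all that is needed for the pre-write to $s$ to be in flight on a reliable channel and therefore eventually delivered.
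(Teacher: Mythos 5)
Your proposal is correct and follows essentially the same route as the paper's proof: trace tag $t$ back to the originating write (which must have sent pre-writes to all servers), use channel reliability and non-failure of $s$ to show $s$ eventually receives $(t,w_s,\pre)$, and then split on whether $s$ receives the pre-write before or after the reader's finalize, invoking the two response modes of the server protocol. The only meaningful difference is cosmetic: you make explicit the well-ordering/induction argument that the first $\fin$-labeled appearance of $t$ at any server cannot be triggered by a reader's finalize (since a reader can only send $t$ after a server already exposed $t$ with label $\fin$), whereas the paper condenses that trace-back into a single sentence; your version is a welcome elaboration of a step the paper asserts without detail. Your closing remark about $W$ possibly failing before receiving quorum acknowledgments is harmless but slightly redundant once the trace-back is done, since the existence of a finalize from $W$ already entails that $W$'s pre-write phase (including the send-loop) completed in full.
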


\begin{proof}[Proof sketch]
	Consider a server $s$ that does not fail in $\alpha$ and consider the point $P$ of $\alpha$ where server $s$ receives a finalize message with tag $t$ from a reader. Since the read operation at the reader acquired tag $t$, from examination of the algorithm we can infer that a write with tag $t$ completed its pre-write phase at some point of $\alpha$. From the write protocol, note that this implies that the writer sent a coded element with tag $t$ to every server in its pre-write phase. In particular, the writer sent coded element $w_s$ to server $s$. Since the channels are reliable and since $s$ does not fail in $\alpha$, this means that at some point $P'$ of $\alpha$, the server $s$ receives the coded element $w_s$.  There are only two possible scenarios. First, $P'$ precedes $P$ in $\alpha$, and second, $P$ precedes $P'$. To complete the proof, we show that, in the first scenario the server responds to the reader's finalize message with $w_s$ at point $P$, and in the second scenario\footnote{Note that in this second scenario, the server does not respond with a coded element in CAS, where the server only sends an acknowledgement. In contrast to the proof here, the liveness proof of CAS involved showing that at least $k$ servers satisfy the condition imposed by the first scenario.}, the server responds to the reader's finalize message with $w_s$ at point $P'$. 

In the first scenario, note that the server has a coded element $w_s$ at the point $P$. By examining the server protocol, we observe that server $s$ responds to the reader's finalize message with a coded element $w_s$. 

In the second second scenario, point $P'$ comes after $P$ in $\alpha$. Because of the server protocol on receipt of the reader's finalize message, server $s$ adds a tuple of the form $(t, \Null, \fin, \mathcal{C}_{0}),$ where $C \in \mathcal{C}_{0},$ to the local state at point $P$. Also, note that, at point $P'$, the server stores a tuple of the form $(t, \Null, \fin, \mathcal{C}_{1}),$ where $C \in \mathcal{C}_1.$  Finally, based on the server protocol on receipt of a pre-write message, we note that at point $P'$,  the server sends $w_s$ to all the clients in $\mathcal{C}_{1}$ including client $C$. This completes the proof.
\end{proof}

We next prove Lemma \ref{lem:counterexampleliveness}.

\begin{proof}[Proof of Lemma \ref{lem:counterexampleliveness}]
To prove liveness, it suffices to show that in any fair execution $\alpha$ where at most $f$ servers fail, every phase of every operation terminates. The proof of termination of a write operation, and the query phase of a read operation is similar to CAS and omitted here for brevity. Here, we present a proof of termination of the finalize phase of a read in any fair execution $\alpha$ where at most $f$ servers fail.

To show the termination of a read, note from Lemma \ref{lem:counterexample} that in execution $\alpha$, every non-failed server $s$ responds to a reader's finalize message with a coded element. Because the number of servers that fail in $\alpha$ is at most $f$, this implies that reader obtains at least $N-f$ messages with coded elements in response to its finalize message. From the read protocol, we observe that this suffices for termination of the finalize phase of a read. This completes the proof.
\end{proof}

\subsubsection{Communication cost}
We next state the communication cost of CCOAS.
\begin{mainth}
	The write and read communication costs of CCOAS are both equal to $\frac{N}{N-f}\log|\mathcal{V}|$.
\label{thm:storCCOAS}
\end{mainth}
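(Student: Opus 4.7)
The plan is to compute each cost by direct accounting: isolate the messages that actually carry value data (as opposed to tags or metadata, which by our cost model contribute negligibly), count how many such messages an operation generates in the worst case, and multiply by the size of a single coded element. Since CCOAS uses an $(N,N-f)$ MDS code, each coded element $w_s$ lies in a set $\mathcal{W}$ of size $|\mathcal{V}|^{1/(N-f)}$, so by the discussion in Section \ref{sec:ec} each coded element contributes exactly $\frac{\log_2|\mathcal{V}|}{N-f}$ bits to the communication cost.

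For the write cost, I would examine the three phases. The query phase exchanges only tags; the finalize phase sends only the tag with a $\Null$ payload in either direction; only the pre-write phase carries value data, where the writer sends the triple $(t,w_s,\pre)$ to every server $s\in \mathcal{N}$, and servers reply with acknowledgements. Thus the write sends exactly $N$ coded elements on the links, for a worst-case total of $N\cdot \frac{\log_2|\mathcal{V}|}{N-f}=\frac{N}{N-f}\log_2|\mathcal{V}|$ bits.

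For the read cost, the query phase is again metadata-only, and the reader's outgoing finalize message carries $(t,\Null,\fin)$ (no data). Only the servers' responses to a read's finalize message carry coded elements. Here I would invoke Lemma \ref{lem:counterexample} to argue that each non-failing server sends \emph{at most one} coded element on behalf of a given read operation: either it already holds $w_s$ at the point it receives the finalize message and replies immediately, or it registers the read and sends $w_s$ exactly once when the matching pre-write later arrives. Even in the worst case where all $N$ servers remain non-failing and eventually reply, the reader therefore accumulates at most $N$ coded-element messages, yielding a worst-case read communication cost of $\frac{N}{N-f}\log_2|\mathcal{V}|$ bits.

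I do not expect a serious obstacle here; the argument is purely bookkeeping once Lemma \ref{lem:counterexample} is in hand. The one point requiring care is the second scenario in that lemma, where a server sends a coded element to a reader as a side-effect of handling a \emph{pre-write} message; this message must still be attributed to the read operation when summing its communication cost, but it does not increase the per-server count beyond one, so the bound $\frac{N}{N-f}\log_2|\mathcal{V}|$ is preserved. Matching lower bounds from classical coding theory (an $(N,k)$-MDS-coded representation of a value requires at least $\frac{N}{k}\log_2|\mathcal{V}|$ bits to tolerate $N-k$ erasures, with $k\le N-f$ here) confirm that this cost is tight, as claimed in the surrounding discussion.
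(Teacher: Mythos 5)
Your proposal is correct and follows essentially the same bookkeeping approach the paper intends: the paper omits this proof, stating only that it is "similar to the proof of Theorem~\ref{thm:comCAS}," and your accounting of which messages carry coded elements (pre-write messages from the writer, finalize responses from the servers) mirrors that proof directly, with the coded-element size $\frac{\log_2|\mathcal{V}|}{N-f}$ replacing $\frac{\log_2|\mathcal{V}|}{k}$. Your observation that a server's deferred coded-element response to a registered reader must be attributed to the read operation, but never pushes the per-server count above one, is a genuinely useful clarification that the paper's brief remark glosses over.
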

The proof of Theorem \ref{thm:storCCOAS} is similar to the proof of Theorem \ref{thm:comCAS} and is omitted here for brevity.

\subsection{Drawbacks of CCOAS} CCOAS incurs a smaller communication cost mainly because the reader acquires $N-f$ coded elements, thus allowing the writer to use an $(N,N-f)$ MDS code. Since a write operation returns after getting responses from some quorum, there are executions of our algorithm where, at the point of termination of a write operation, only a quorum $Q_{pw}$ containing $N-f$ servers have received its pre-write messages. Now, if one of the servers in $Q_{pw}$ fails after the termination of the write, then, since a reader that intends to acquire the value written requires $N-f$ coded elements, it is important that at least one of the pre-write messages sent by the writer to a server outside of $Q_{pw}$ reaches the server. In other words, it is crucial for liveness of read operations that the pre-write messages sent by the write operation are delivered to every non-failing server, even if some of these messages have not been delivered at the point of termination of the write. We use this assumption implicitly in the proof of correctness of CCOAS. 


Although, in our model, channels deliver messages of operations that have terminated, the dependence of liveness on this assumption is a significant drawback of CCOAS. The modeling assumption of reliable channels is often an implicit abstraction of a lossy channel and an underlying primitive that retransmits lost messages until they are delivered. From a practical point of view, however, it is not well-motivated to assume that this underlying primitive retransmits lost messages corresponding to operations that have terminated, especially if the client performing the operation fails. \blue{We note that CAS and CASGC do not share this drawback of CCOAS.} An interesting future exercise is to generalize CAS and CASGC to lossy channel models (see, for example, the model used in \cite{DGL}).
}

\remove{\section{Comparison with Related Literature}
\label{sec:relatedwork}

Erasure coding has been used to develop shared memory emulation techniques for systems with crash failures in \cite{DGL,FAB,AJX} and Byzantine failures in \cite{GWGR,HGR,CT, dobre_powerstore}. Among the previous works, \cite{DGL,HGR, CT, dobre_powerstore} have similar correctness requirements as our paper; these references aim to emulate an atomic shared memory that supports concurrent operations in asynchronous networks. We note that the algorithm of \cite{CT}, like CCOAS, cannot be generalized to lossy channel models (see discussion in \cite{DGL}). Also \cite{CT} differs from our approaches because it relies on gossip amongst servers.  We compare our algorithms with the \emph{ORCAS-B} algorithm of \cite{DGL}\footnote{The \emph{ORCAS-A} algorithm of \cite{DGL}, although uses erasure coding, has the same \emph{worst case} communication and storage costs as ABD.}, the algorithm of \cite{HGR}, which we call the \emph{HGR} algorithm, and the \emph{M-PoWerStore} algorithm of \cite{dobre_powerstore}. We note that \cite{DGL} assumes lossy channels and \cite{HGR, dobre_powerstore} assume Byzantine failures. Here, we interpret the algorithms of \cite{DGL,HGR, dobre_powerstore} in our model that has lossless channels and crash failures, and use worst-case costs for comparison. 

The write protocols of ORCAS-B, HGR and M-PoWerStore have multiple phases that are similar to the \emph{query}, \emph{pre-write} and \emph{finalize} phases of CAS. M-PoWerStore does not garbage collect coded elements, and therefore incurs an infinite storage cost. In ORCAS-B and HGR, on receipt of messages from a writer that are analogous to our ``finalize'' messages, the servers garbage collect all the stored coded elements that have a tag that is smaller than the tag of the message. This is unlike our approach in CASGC, where garbage collection takes place after coded elements corresponding to $\delta+1$ versions are stored at the server. For reasons summarized below, this difference translates to larger read communication and storage costs in ORCAS-B and HGR.

	$\bullet$  In ORCAS-B and HGR, the servers effectively send coded elements corresponding to several versions to a read operation in certain executions. Furthermore, in ORCAS-B and HGR, the readers send coded elements corresponding to the value they decode to the servers to ensure atomicity. This is in contrast with CAS and CASGC, where a server only sends one coded element to a read operation, and messages sent by the readers only contain metadata. Therefore, ORCAS-B and HGR incur comparatively larger read communication costs. 

$\bullet$ In executions of ORCAS-B and HGR where the number of active\footnote{An operation is \emph{active} at a point $P$ of the execution, if $P$ is after the point of its invocation before the point of its termination.} writes is unbounded, the storage cost can be infinite\footnote{The storage costs of HGR and ORCAS-B can be arbitrarily large even in executions where the number of active writes is bounded, because servers that do not receive messages analogous to the finalize message from writers for arbitrarily long could store coded elements corresponding to an arbitrarily large number of versions.}. This is in contrast with CASGC, where the storage cost is bounded in every execution. Furthermore, executions of CASGC($k,\delta$) where the number of writes that are concurrent with a read is upper bounded by $\delta$ satisfy atomicity and liveness, even if the number of active writes in the execution are unbounded.

Finally, in HGR, read operations satisfy the following liveness condition which is different from CASGC: a read operation terminates if the other clients do not take steps for sufficiently long.  In CASGC, however, read operations terminate even if there are concurrent client operations, so long as the conditions of Theorem \ref{thm:CASGCliveness} are met. 

}

\section{Conclusions}\label{sec:fin}
We have proposed low-cost algorithms for atomic shared memory emulation in 
asynchronous message-passing systems. We also contribute to this body of work through rigorous definitions and analysis of (worst-case) communication and storage costs. We show that our algorithms have desirable properties in terms of the amount of communication and storage costs.
{There are several relevant follow up research directions in this topic. An interesting question is whether the storage cost can be reduced through a more sophisticated coding strategy, for instance, using the code constructions of \cite{Wang_Cadambe_MultiversionCoding}. We note that when erasure coding is used for shared memory emulation, the communication and storage costs of various algorithms seem to depend on the number of parallel operations in the system. For instance, in all the erasure coding-based algorithms, servers store coded elements corresponding to multiple versions at the servers. Similarly, in ORCAS-B and HGR, servers send coded elements corresponding to multiple versions to the reader. A natural question is whether there exist fundamental lower bounds that capture this behavior, or whether there exist algorithms that can achieve low communication and storage costs which \blue{do not grow} with the extent of parallelism in the system.} Among the remaining questions, we emphasize the need for generalizing of CAS and CASGC to lossy channels, and to dynamic settings possibly through modifications of RAMBO \cite{RAMBO}.

 
    \remove{This work opens up several new research directions. For instance, the lower bounds we provide on the communication and costs are not tight. A natural question that one can ask is whether tighter (lower or upper) bounds on communication and storage cost exist. A second natural question comes on observing that our storage cost for the CASGC algorithm is proportional to the number of concurrent operations. This phenomenon was first observed in the context of the LDR algorithm which is selfish, unlike the ABD algorithm which, albeit not selfish, has bounded storage cost even in presence of unbounded concurrency. An interesting avenue of future work is to examine the construction of an erasure coding based shared memory emulation algorithm along the lines of ABD that is perhaps not necessarily selfish, but has a bounded cost, and has possibly superior communication and storage costs.}


\newpage
\bibliographystyle{abbrv}
\bibliography{biblio}
\newpage
\appendix
\section{Descriptions of the ABD and LDR Algorithms}
\label{app:ABD_LDR}

As baselines for our work we use the MWMR versions of {the }ABD and LDR algorithms \cite{ABD,LDR}. Here, we describe the ABD and LDR algorithms, and evaluate their communication and storage costs. We present the ABD and LDR algorithms in Fig. \ref{fig:ABD} and Fig. \ref{fig:LDR} respectively.
 The costs of these algorithms are stated 
in Theorems \ref{claim:ABD} and \ref{claim:comLDR}. 

\begin{figure}[h]
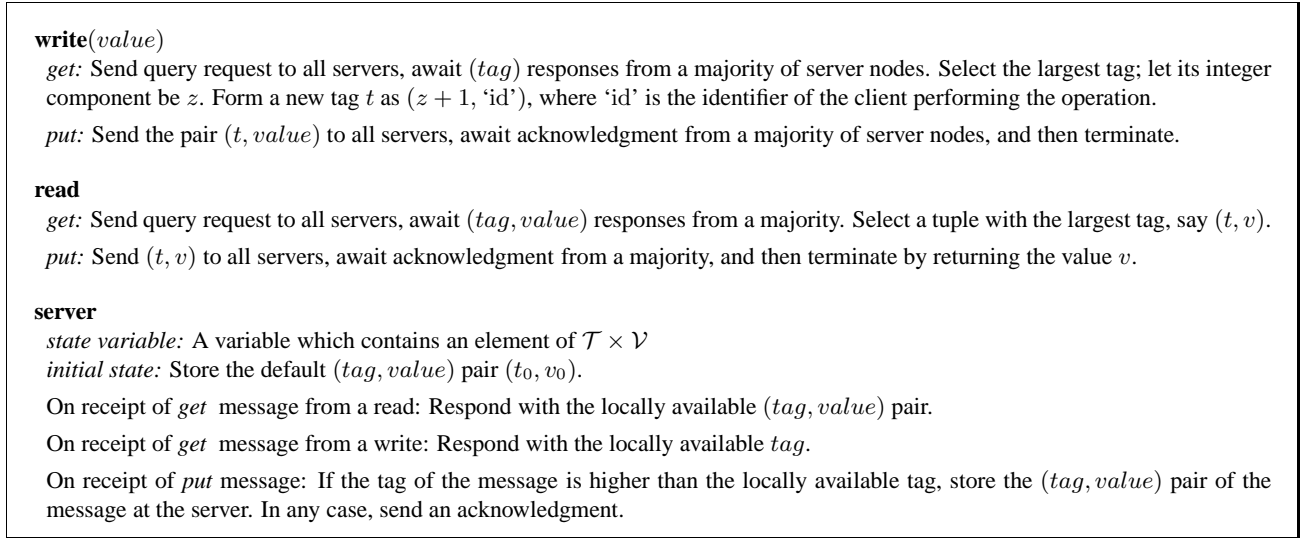

\begin{mdframed}[style=mymdframestyle]
\footnotesize
\begin{tabbing}
{\bf write$(value)$} \\
        \ \ \begin{minipage}[t]{\textwidth}%
{\em get:} {Send query request to all servers, await $(tag)$ responses from a majority of {server }nodes. 
    {Select the largest tag; let its integer component be $z$. Form a new tag $t$ as $(z+1,\text{`}\mathrm{id}\text{'})$, where `$\mathrm{id}$' is the identifier of the client performing the operation.}}\smallskip \\ 
{\em put:} Send {the pair }$(t,value)$ to all servers, await {acknowledgment} from a majority of {server }nodes, 
    and then terminate. 
    \end{minipage}\\ \\

{\bf read} \\
    \ \ \begin{minipage}[t]{\textwidth}%
{\em get:} Send query request to all servers, await $(tag,value)$ responses from a majority. Select 
a tuple with the largest tag, say $(t,v)$.\smallskip\\
{\em put:} Send $(t,v)$ to all servers, await {acknowledgment} from a majority, 
  and then terminate by returning the value $v$.
  \end{minipage}\\  \\ 
{\bf server} \\
    \ \ \begin{minipage}[t]{\textwidth}%
{{\em state variable:} A variable which contains an element of $\mathcal{T} \times \mathcal{V}$} \\
{{\em initial state:} Store the default $(tag, value)$ pair $(t_0,v_0)$.}\smallskip\\
{On receipt of {\em get } message from a read: Respond with the locally available $(tag, value)$ pair.}\smallskip \\
{ {On receipt of {\em get } message from a write:} Respond with the locally available $tag$.}\smallskip \\
{{On receipt of {\em put} message:} If the tag of the message is higher than {the} locally available tag, store the $(tag,value)$ pair of the message at the server. In any case, send an acknowledgment.}
\end{minipage}
\end{tabbing}
\end{mdframed}
\caption{Write, read{, and server} protocols of the ABD algorithm.}\label{fig:ABD}
\end{figure}

\begin{mainth}\label{claim:ABD}
{The write and read communication costs of ABD are respectively equal to $N \log|\mathcal{V}|$ and ${2N} \log|\mathcal{V}|$ bits. The storage cost is equal to $N$ \unit.}
\end{mainth}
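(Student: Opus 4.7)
The plan is to proceed by direct inspection of the ABD protocol given in Fig.~\ref{fig:ABD}, keeping in mind the convention from Sec.~\ref{sec:background} that the cost of a message is measured by the log-cardinality of the set from which its value-carrying component is drawn, and that tags and other metadata contribute negligibly. Under this convention, a message carrying a full value of the data object contributes $\log_2|\mathcal{V}|$ bits, while pure query/acknowledgment/tag messages contribute nothing to the relevant count.

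For the write communication cost, I would walk through the two phases of the write protocol. The \emph{get} phase sends pure query messages to all servers and receives back only tags, so by the cost convention it contributes zero. The \emph{put} phase sends a $(tag,value)$ message to each of the $N$ servers and receives back only acknowledgments; thus exactly $N$ messages carry a value, each contributing $\log_2|\mathcal{V}|$ bits. Summing gives $N\log_2|\mathcal{V}|$ bits, and this is attained in every execution of a completed write, so it is both the upper bound and the worst case. For the read, the \emph{get} phase now elicits $(tag,value)$ responses from servers; in the worst case all $N$ non-failed servers respond with a value, contributing $N\log_2|\mathcal{V}|$ bits. The \emph{put} phase of a read, as in the write, sends $(t,v)$ to all $N$ servers, contributing another $N\log_2|\mathcal{V}|$ bits. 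Adding the two phases yields $2N\log_2|\mathcal{V}|$ bits, matching the claim; I would note that taking the supremum over executions is what makes the "up to $N$" into exactly $N$.

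For the storage cost, I would appeal to the server code: each server maintains a single state variable that is an element of $\mathcal{T}\times \mathcal{V}$, overwriting its $(tag,value)$ pair whenever a \emph{put} arrives with a larger tag. Hence at any point each non-failed server stores exactly one value of the data object, contributing $\log_2|\mathcal{V}|$ bits of value storage (tags being free by convention). Summing across the $N$ servers gives a per-point total of $N\log_2|\mathcal{V}|$ bits, and taking the supremum over points and executions leaves the bound unchanged, yielding the stated $N$ units.

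Since all of these steps are essentially bookkeeping against the cost definitions in Sec.~\ref{sec:background}, I expect no genuine technical obstacle. The only subtlety worth being explicit about is the use of the \emph{supremum} in the definitions of communication and storage cost: I must exhibit executions in which the claimed worst case is actually achieved (e.g.\ all $N$ servers are non-failed and respond during a read's \emph{get}) in order to justify equality rather than merely an upper bound. Once that is pointed out, the proof is a short verification against Fig.~\ref{fig:ABD}.
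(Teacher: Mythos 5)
Your proof is correct and follows essentially the same phase-by-phase accounting as the paper: you separate value-carrying messages from metadata-only ones, sum per phase, and note that the supremum in the cost definitions requires exhibiting executions (all $N$ servers responding) where the bounds are attained.
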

The LDR algorithm divides its servers into \emph{directory servers} that store metadata, and \emph{replica servers} that store object values. The write protocol of LDR involves the sending of object values to $2f+1$ replica servers. The read protocol is less taxing since in the worst-case, it involves retrieving the data object values from $f+1$ replica servers. We state the communication costs of LDR next (for formal proof, see Appendix \ref{app:ABD_LDR}.)
\begin{mainth}\label{claim:comLDR}
In LDR, the write communication cost is $(2f+1)$ \unit, and the read communication cost is $(f+1)$ \unit.
\end{mainth}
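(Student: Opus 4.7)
The plan is to apply the communication cost model of Section \ref{sec:background} directly to the LDR protocol. Under that model, every message is a triple $(t,w,d)$ and we charge only for the value-dependent component $w \in \mathcal{W}$, neglecting tags and other metadata via the approximation $\log_2|\mathcal{M}|\approx \log_2|\mathcal{W}|$. So the whole argument reduces to: identify which messages carry an entire object value, count them in the worst case, and multiply by $\log_2|\mathcal{V}|$.

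First I would recall the structure of the MWMR LDR algorithm \cite{LDR}. Servers are partitioned into \emph{directory servers}, which store only tags and pointers to replicas, and \emph{replica servers}, which store $(tag, value)$ pairs. A write proceeds by (i) querying the directory servers to obtain a new tag, (ii) sending the full $(tag, value)$ pair to $2f+1$ replica servers and awaiting acknowledgments, and (iii) registering completion at the directory servers. A read proceeds by (i) querying the directory servers to obtain the latest finalized tag together with pointers to the replica servers known to hold it, (ii) requesting the $(tag, value)$ pair from those replicas until $f+1$ matching responses arrive, and (iii) registering the read at the directory servers.

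Next I would isolate the value-carrying messages. All directory traffic (queries, pointer responses, registration) carries only tags and pointers, and so contributes negligibly under our cost model. In the write protocol the only messages whose $w$-component is a full data object value are the $2f+1$ replica-write messages of phase (ii); every such message costs $\log_2|\mathcal{V}|$ bits, for a total write communication cost of $(2f+1)\log_2|\mathcal{V}|$ bits, i.e., $(2f+1)$ \units. In the read protocol the only value-bearing messages are the $(tag, value)$ responses from the replica servers in phase (ii); since the reader terminates once $f+1$ matching responses arrive, in the worst case exactly $f+1$ such value-bearing messages are transmitted, yielding read communication cost $(f+1)\log_2|\mathcal{V}|$ bits, i.e., $(f+1)$ \units.

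The main subtlety I anticipate is justifying that these counts are tight rather than merely upper bounds — the statement says the costs are \emph{equal} to $(2f+1)$ and $(f+1)$ units, and our cost measure is a supremum over executions. For writes this is immediate since every execution of LDR ships the value to exactly $2f+1$ replica servers. For reads, the supremum $(f+1)\log_2|\mathcal{V}|$ is attained by exhibiting an execution in which $f$ of the replica servers pointed to by the directory phase have crashed, so the reader must wait for all $f+1$ non-failed replicas to respond with the value; such an execution is admissible since the model tolerates up to $f$ server failures. This, together with the upper-bound argument above, gives equality and completes the proof.
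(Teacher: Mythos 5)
Your overall strategy — charge only the value-bearing messages under the cost model, count them, and exhibit worst-case executions for tightness — is exactly the paper's approach, and your write-cost argument matches the paper's verbatim. However, your read-cost argument has two problems. First, you mischaracterize LDR's read protocol: in the paper's LDR (Fig.~\ref{fig:LDR}), the reader's \emph{get} phase sends value requests to $f+1$ replica servers and \emph{terminates upon receiving a single response}, not $f+1$ matching responses; moreover this \emph{get} phase comes \emph{after} the \emph{put-metadata} phase, not before it. Second, your proposed worst-case execution does not achieve the supremum. The directory's pointer set $\mathcal{S}$ has exactly $f+1$ elements, and the reader contacts precisely those. If $f$ of them have crashed, only one replica responds, so only one value-bearing message is sent — a cost of $\log_2|\mathcal{V}|$ bits, far below the claimed $(f+1)\log_2|\mathcal{V}|$. (Under your modified protocol where the reader awaits $f+1$ matching responses, such an execution would instead block the read forever.) The paper's actual worst-case execution for reads has \emph{no} crashes among $\mathcal{S}$: the reader sends requests to all $f+1$ replicas, terminates after the first response arrives, but the other $f$ responses are still sent on behalf of the operation and therefore still charged to it under the cost definition of Section~\ref{sec:background}. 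That is what yields $(f+1)\log_2|\mathcal{V}|$ bits.
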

In the LDR algorithm, each replica server stores every version of the data object it receives\footnote{This is unlike ABD where the servers store {only }the latest version of the data object received.}. Therefore, the (worst-case) storage cost of the LDR algorithm is unbounded. 

%

\begin{figure}[t]
\begin{mdframed}[style=mymdframestyle]
\footnotesize
\begin{tabbing}
{\bf write$(value)$} \\
        \ \ \begin{minipage}[t]{\textwidth}%
{\em get-metadata:} Send query request to directory servers, and await {$(tag, location)$} responses from a majority of directory servers. {Select the largest tag; let its integer component be $z$. Form a new tag $t$ as $(z+1,\text{`}\mathrm{id}\text{'})$, where `$\mathrm{id}$' represents the identifier of the client performing the operation.}\smallskip\\
{\em put:} Send $(t,value)$ to $2f+1$ replica servers, await {acknowledgment} from $f+1$.  Record identifiers of {the first $f+1$ replica} servers that respond, call this set of identifiers $\mathcal{S}$. \smallskip\\
{\em put-metadata:} Send $(t,\mathcal{S})$ to all directory servers, await {acknowledgment} from a majority, 
    and then terminate.
    \end{minipage}\\ \\
{\bf read} \\
    \ \ \begin{minipage}[t]{\textwidth}%
{\em get-metadata:} Send query request to directory servers, and await {($tag, location$)} responses from a majority {of directory servers.} Choose a {($tag, location$)} pair with the largest tag, let this pair be $(t,\mathcal{S}).$\smallskip\\
{\em put-metadata:} Send $(t,\mathcal{S})$ to all directory servers, await {acknowledgment} from a majority.\smallskip\\
{\em get:} Send \emph{get object} request to {any $f+1$} replica servers recorded in $\mathcal{S}$ for tag $t$. \str{A single response 
    suffices.} {Await a single response and }terminate by returning a value.
    \end{minipage}\\ \\
{{\bf replica server}  }\\
    \ \ \begin{minipage}[t]{\textwidth}%
{ {\em state variable:} A variable that is subset of $\mathcal{T}\times \mathcal{V}$} \smallskip\\
{{\em initial state:} Store the default $(tag, value)$ pair $(t_0,v_0)$.}\smallskip\\
{On receipt of {\em put } message: Add the $(tag,value)$ pair in the message to the set of locally available pairs. Send an acknowledgment.}\smallskip \\
{{On receipt of {\em get }message:} If the value associated with the requested tag is in the set of pairs stored locally, respond with the value. Otherwise ignore.}\smallskip 
\end{minipage}\\ \\
{\bf directory server} \\
    \ \ \begin{minipage}[t]{\textwidth}%
{\emph{state variable:} A variable that is an element of $\mathcal{T} \times 2^{\mathcal{R}}$ where $2^{\mathcal{R}}$ is the set of all subsets of $\mathcal{R}.$}\smallskip\\
{{\em initial state:} Store $(t_0, \mathcal{R}),$ where $\mathcal{R}$ is the set of all replica servers.}\smallskip\\
{On receipt of {\em get-metadata} message: Send the $(tag,\mathcal{S})$ {be the }pair stored locally. }\smallskip \\
{On receipt of {\em put-metadata} message:} Let $(t,\mathcal{S})$ be the incoming message. At the point of reception of the message, let $(tag, \mathcal{S}_1)$ {be the }pair stored locally at the server. If $t$ is equal to the $tag$ stored locally, then store $(t, \mathcal{S}\cup \mathcal{S}_{1})$ locally. If $t$ is bigger than $tag$ and if $|\mathcal{S}| \geq f+1,$ then store $(t,\mathcal{S})$ locally. {Send an acknowledgment.}
\end{minipage}
\end{tabbing}
\end{mdframed}
\caption{Write, read, {and server }protocols of the LDR algorithm}
\label{fig:LDR}
\end{figure}

{\noindent \bf Communication and Storage costs of ABD and LDR algorithms.}

\noindent \textit{Proof of Theorem \ref{claim:ABD}.}
    We first present arguments that upper bound the communication and storage cost for every execution of the ABD algorithm. The ABD algorithm presented here is fitted to our model.  Specifically in \cite{ABD,LS97} there is no clear cut separation between clients and servers.  However, {this} separation does not change the costs of the algorithm. {Then }we present worst-case executions that incur the costs as stated in the theorem.

    \noindent {\em Upper bounds:}
    First consider the write protocol. It has two phases, {\em get} and {\em put}. The \emph{get} phase of a write involves transfer of a tag, but {not of actual }data, and therefore has negligible communication cost. {In the \emph{put} phase of a write, the client sends {a value from the set} $\mathcal{T} \times \mathcal{V}$ to every server node;} the total communication cost of this phase is at most $N \log_2|\mathcal{V}|$ bits. Therefore the total {write }communication cost is at most $N \log_{2}|\mathcal{V}|$ bits. In the \emph{get} phase of the read protocol, the message from the client to the servers {contains} only metadata, and therefore has negligible communication cost. However, in this phase, each of the $N$ servers could respond to the client with a message from $\mathcal{T} \times \mathcal{V}$; therefore the total communication cost of the messages involved in the \emph{get} phase is upper bounded by $N \log_{2}|\mathcal{V}|$ bits. In the \emph{put} phase of the read protocol, the read sends an element {of} $\mathcal{T} \times \mathcal{V}$ to {$N$} servers. Therefore, this phase incurs a communication cost of at most $ {N} \log_{2}|\mathcal{V}|$ bits. The total communication cost of a read is therefore upper bounded by ${2N }\log_{2}|\mathcal{V}|$ bits.

    The storage cost of ABD is no bigger than $N\log_{2}|\mathcal{V}|$ bits because each {server} stores at most one value - the latest value it receives.

    \noindent {\em Worst-case executions:}
    Informally speaking, due to asynchrony and the possibility of failures, clients always send requests to all servers and in the worst case, all servers respond. Therefore the upper bounds described above are tight. 

    For the write protocol, the client sends the value to all $N$ nodes in its \emph{put} phase. So the {write communication }cost in an execution {where at least one write terminates }is $N \log_{2}|\mathcal{V}|$ bits. {For the read protocol, consider the following execution, where there is one read operation, and one write operation that is concurrent with this read. We will assume that none of the $N$ servers fail in this execution. Suppose that the writer completes its get phase, and commits to a tag $t$. Note that $t$ is the highest tag in the system at this point. Suppose that among the $N$ messages that the writer sends in its put phase with the value and tag $t$, Now the writer begins its put phase where it sends $N$ messages with the value and tag $t$. At least one of these messages, say the message to server $1$, arrives.the remaining messages are delayed, i.e., they are assumed to reach after the portion of the execution segment described here.  At this point, the read operation begins and receives $(tag, value)$ pairs from all the $N$ server nodes in its get phase. Of these $N$ messages, at least one message contains the tag $t$ and the corresponding value. Note that $t$ is the highest tag it receives. Therefore, the put phase of the read has to sends $N$ messages with the tag $t$ and the corresponding value - one message to each of the $N$ servers that which responded to the read in the get phase with an older tag. } 

{The read protocol has two phases. The cost of a read operation in an execution is the sum of the {communication} costs of the messages sent in its \emph{get} phase and those sent in its \emph{put} phase. The \emph{get} phase involves communication of $N$ messages from $\mathcal{T} \times \mathcal{V}$, one message from each server to the client, and therefore incurs a communication cost of $N \log_{2}|\mathcal{V}|$ bits {provided that} every server is active. The \emph{put} phase involves the communication of a message in $\mathcal{T} \times \mathcal{V}$ from {the} client to every server thereby {incurring} a communication cost of $N \log_{2}|\mathcal{V}|$ bits as well. Therefore, in any execution where all $N$ servers are active, the communication cost of a read operation is ${2N}\log_{2}|\mathcal{V}|$ bits and therefore the upper bound is tight.}

The storage cost is equal to $N \log_2|\mathcal{V}|$ bits since each of the $N$ servers store exactly one value from $\mathcal{V}$. \qed

\noindent \textit{Proof of Theorem \ref{claim:comLDR}.}


\noindent {{\emph{Upper bounds:}}
    In LDR servers are divided into two groups: {\em directory} servers used to manage object metadata, and {\em replication} servers used for object replication. Read and write protocols have three sequentially executed phases. The {\em get-metadata} and {\em put-metadata} phases incur negligible communication cost since only metadata is sent over the message-passing system.  In the {\em put} phase, the writer sends its messages, each of which is an element from $\mathcal{T} \times \mathcal{V},$ to $2f+1$ replica servers and awaits $f+1$ responses; since the responses have negligible communication cost, this phase incurs a total communication cost of at most $(2f+1)\log_2|\mathcal{V}|$ bits. The read protocol is less taxing, where the reader during the {\em get} phase queries $f+1$ replica servers and in the worst case, all respond with a message containing an element from $\mathcal{T} \times \mathcal{V}$ thereby incurring a total communication cost of at most $(f+1)\log_2|\mathcal{V}|$ bits.

    {\noindent \emph{Worst-case executions:}}
    It is clear that in every execution {where at least one writer terminates,} the writer sends out $(2f+1)$ messages to replica servers that contain the value, {thus incurring a write communication cost of $(2f+1)\log_{2}|\mathcal{V}|$ bits.} Similarly, for a read, in certain executions, all $(f+1)$ replica servers {that are selected in the \emph{put phase} of the read} respond to the $\emph{get}$ request from the client. So the upper bounds derived above are tight. \qed
}

\section{Discussion on Erasure Codes}\label{app:coding}
For an $(N,k)$ code, the ratio $\frac{N}{k}$ - also known as the \emph{redundancy factor} of the code - represents the storage cost overhead in the classical erasure coding model. Much literature in coding theory involves the design of $(N,k)$ codes for which the redundancy factor\footnote{Literature in coding theory literature often studies the \emph{rate} $\frac{N}{k}$  of a code, which is the reciprocal of the redundancy factor, i.e., the rate of an $(N,k)$ code is $\frac{k}{N}.$ In this paper, we use the redundancy factor in our discussions since it enables a somewhat more intuitive connection with the costs of our algorithms in Theorems \ref{claim:ABD}, \ref{claim:comLDR}, \ref{thm:comCAS}, \ref{thm:storCAS}.} can be made as small as possible.
In the classical erasure coding model, the extent to which the redundancy factor can be reduced depends on $f$ - the maximum number of server failures that are to be tolerated. In particular, an $(N,k)$ MDS code, when employed to store the value of the data object, tolerates $N-k$ server node failures; this is because the definition of an MDS code implies that the data can be recovered from any $k$ surviving nodes. Thus, for an $N$-server system {that uses an MDS code}, we must have $k \leq N-f$, meaning that the redundancy factor is at least $\frac{N}{N-f}$. It is well known \cite{Roth_CodingTheory} that, given $N$ and $f$, {the parameter }$k$ cannot be made larger than $N-f$ so that the redundancy factor is lower bounded by $\frac{N}{N-f}$ for \emph{any} code{ even if it is not an MDS code; In fact, an MDS code can equivalently be defined as one which attains this lower bound on the redundancy factor. In coding theory, this lower bound is known as the Singleton bound \cite{Roth_CodingTheory}. Given parameters $N,k,$ the question of whether an $(N,k)$ MDS code exists depends on the alphabet of code $\mathcal{W}$. We next discuss some of the relevant assumptions that we (implicitly) make in this paper to enable the use of an $(N,k)$ MDS code in our algorithms.

\subsection*{Assumption on $|\mathcal{V}|$ due to Erasure Coding}
{Recall that, in our model, each value $v$ of a data object belongs to a finite set $\mathcal{V}$. In our system, for the use of coding, we assume that $\mathcal{V}=\mathcal{W}^{k}$ for some finite set $\mathcal{W}$ and that $\Phi:\mathcal{W}^{k} \rightarrow \mathcal{W}^{N}$ is an MDS code. Here we refine these assumptions using classical results from erasure coding theory. In particular, the following result is useful.}
\begin{mainth}
{Consider a finite set $\mathcal{W}$ such that $|\mathcal{W}| \geq {N}.$ Then, for any integer $k < N$, there exists an $(N,k)$ MDS code $\Phi:\mathcal{W}^{k} \rightarrow \mathcal{W}^{N}$.}
\end{mainth}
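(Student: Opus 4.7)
The plan is to invoke the classical Reed--Solomon construction, which is the canonical family of MDS codes. First I would locate a prime power $q$ with $q \geq N$: Bertrand's postulate guarantees a prime in $(N, 2N)$, so such a $q$ always exists, and in the usage contemplated in the paper $\mathcal{W}$ may be taken to be the underlying set of the finite field $\mathbb{F}_q$ (either directly, when $|\mathcal{W}| = q$, or by a bijective identification after choosing an appropriate encoding of $\mathcal{V}$).

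Next, identify $\mathcal{W}$ with $\mathbb{F}_q$ and pick $N$ distinct evaluation points $\alpha_1, \dots, \alpha_N \in \mathbb{F}_q$. Define $\Phi: \mathcal{W}^k \to \mathcal{W}^N$ by $\Phi(x_1, \dots, x_k) = (p(\alpha_1), \dots, p(\alpha_N))$, where $p(y) := \sum_{i=1}^k x_i y^{i-1}$ is the polynomial over $\mathbb{F}_q$ of degree less than $k$ with coefficient vector $(x_1, \dots, x_k)$. To verify the MDS property, fix any $S \subseteq \{1, \dots, N\}$ with $|S| = k$. The $k$ values $\{p(\alpha_s) : s \in S\}$ are evaluations of the degree-$<k$ polynomial $p$ at $k$ distinct points, so Lagrange interpolation recovers $p$ and hence the coefficient vector $(x_1, \dots, x_k)$; the required inverse $\Phi^{-1}_S$ in Definition \ref{def:MDS} is exactly this Lagrange interpolation map, evaluated against the evaluation points $\{\alpha_s : s \in S\}$.

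The subtle point, and the main thing to guard against, is the passage from an arbitrary alphabet $\mathcal{W}$ with $|\mathcal{W}| \geq N$ to one carrying the field structure demanded by Reed--Solomon. When $|\mathcal{W}|$ is itself a prime power, the identification $\mathcal{W} \cong \mathbb{F}_{|\mathcal{W}|}$ is immediate. When it is not, one can either group $\ell$ data symbols into a super-symbol so that the effective alphabet $\mathcal{W}^\ell$ has prime-power cardinality of size at least $N$, or restrict attention to a prime-power-sized subset $\mathcal{W}' \subseteq \mathcal{W}$; in either case the Reed--Solomon construction above applies verbatim. This is the convention implicitly adopted when the paper writes $\mathcal{V} = \mathcal{W}^k$ with $\mathcal{W}$ of suitable size, and it is the only place where care beyond the standard Reed--Solomon recipe is required.
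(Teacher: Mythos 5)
Your core construction matches the paper's: the paper's proof of this theorem simply cites the Reed--Solomon construction over a field of size $|\mathcal{W}|\ge N$, and your Lagrange-interpolation verification of the MDS property is the standard, correct filling-in of that citation.

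However, the paragraph addressing the case where $|\mathcal{W}|$ is not a prime power contains two false claims, and this is worth flagging because you singled it out as the place requiring care. First, grouping $\ell$ symbols into super-symbols yields an alphabet of size $|\mathcal{W}|^{\ell}$, and a positive power of a non-prime-power is never a prime power (it has exactly the same set of prime divisors as $|\mathcal{W}|$), so this device can never produce the required field. Second, restricting to a prime-power-sized subset $\mathcal{W}'\subsetneq\mathcal{W}$ yields a map $(\mathcal{W}')^{k}\to(\mathcal{W}')^{N}$, whereas Definition~\ref{def:MDS} requires $\Phi$ to be defined on all of $\mathcal{W}^{k}$ and to take values in $\mathcal{W}^{N}$; the MDS property forces $\pi_{S}\circ\Phi$ to be a \emph{bijection} of $\mathcal{W}^{k}$ for every $|S|=k$, and this does not extend from the sub-alphabet for free. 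In fact the theorem as literally stated fails without a prime-power hypothesis: for $N=6$, $k=2$, $|\mathcal{W}|=6$, a $(6,2)$ MDS code over a $6$-letter alphabet is equivalent to four mutually orthogonal Latin squares of order $6$, which do not exist. The paper sidesteps this because in its setting $\mathcal{W}$ is always a set of bit strings, with $|\mathcal{W}|=2^{\log_{2}|\mathcal{V}|/k}$ a power of two by the stated divisibility assumption, so the non-prime-power case never arises. Rather than the two (incorrect) workarounds, the correct move is to make the prime-power hypothesis on $|\mathcal{W}|$ explicit, exactly as the Reed--Solomon recipe and the paper's own usage require.
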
  
{One proof for the above in coding theory literature is constructive. Specifically, it is well known that when $|\mathcal{W}| \geq {N}$, then $\Phi$ can be constructed using the Reed-Solomon code construction \cite{Reed_Solomon,Roth_CodingTheory,LinCostello_Book}. The above theorem implies that, to employ a Reed-Solomon code over our system, we {shall} need the following two assumptions:}
{\begin{itemize}
    \item  $k$ divides $\log_2|\mathcal{V}|,$ and
        \item $\log_{2}|\mathcal{V}|/k ~\geq ~ \log_{2}N$.
        \end{itemize}
    Thus all our results are applicable under the above assumptions. }

{In fact, the first assumption above can be replaced by a different assumption with only a negligible effect on the communication and storage costs. Specifically, if $\log_2|\mathcal{V}|$ were not a multiple of $k$ then, one could pad the value with $\left(\lceil\frac{\log_2{|\mathcal{V}}|}{k}\rceil k - \log_2|\mathcal{V}|\right)$ ``dummy'' bits, all set to 0, to ensure that the (padded) object has a size that is multiple of $k$; note that this padding is an overhead. The size of the padded object would be $\lceil\frac{\log_2{|\mathcal{V}|}}{k}\rceil k$ bits and the size of each coded element would be $\lceil\frac{\log_2{|\mathcal{V}|}}{k}\rceil$ bits. If we assume that $\log_{2}|\mathcal{V}| \gg k$ then, $\lceil\frac{\log_2{|\mathcal{V}|}}{k}\rceil \approx \frac{\log_{2}|\mathcal{V}|}{k}$ meaning that the padding overhead can be neglected. Consequently, the first assumption can be replaced by the assumption that $\log_{2}|\mathcal{V}| \gg k$ with only a negligible effect {on} the communication and storage costs.}

\section{Proof of Lemma \ref{lem:quorums} }\label{appendix:sec:quorum}

\noindent Proof of property ({\bf i}):
	By the definition, each $Q \in \mathcal{Q}$ has cardinality at least $\lceil \frac{N+k}{2}\rceil$. Therefore, for $Q_1, Q_2 \in \mathcal{Q},$ we have
\begin{eqnarray*}
|Q_{1} \cap Q_{2}| &=& |Q_1|+|Q_2|-|Q_1 \cup Q_2|\\
& {\geq} &2 \left\lceil \frac{N+k}{2}\right\rceil - |Q_1 \cup Q_2| \\
& \stackrel{{(a)}}{\geq} &2 \left\lceil \frac{N+k}{2}\right\rceil - N ~~{\geq}~~ k,
\end{eqnarray*}
where we have used the fact that $|Q_1 \cup Q_2| \leq N$ in {$(a)$}.

\noindent Proof of property ({\bf ii}): 
{Let $\mathcal{B}$ be the set of {all the server nodes that fail in an execution}, where $|\mathcal{B}| \leq f$. We need to show that there exists at least one quorum set $Q \in \mathcal{Q}$ such that $Q \subseteq \mathcal{N}-\mathcal{B}$, that is, at least one quorum survives. To show this, because of the definition of our quorum system, it suffices to show that $|\mathcal{N}-\mathcal{B}| \geq \lceil \frac{N+k}{2}\rceil$. We show this as follows:
\begin{eqnarray*}
	|\mathcal{N}-\mathcal{B}| &\geq& N-f ~\stackrel{{(b)}}{\geq} ~ N-\left\lfloor \frac{N-k}{2}\right\rfloor ~= ~\left\lceil\frac{N+k}{2} \right\rceil,
\end{eqnarray*}
where, ${(b)}$ follows because $k \leq N-2f$ implies that $f \leq \lfloor\frac{N-k}{2}\rfloor$.}

\remove{\section{Proof of correctness of CCOAS} 
\label{app:CCOAS}
We prove Lemmas \ref{lem:counterexampleatom} and \ref{lem:counterexampleliveness} here. 

\begin{proof}[Proof Sketch of Lemma \ref{lem:counterexampleatom}.]
Atomicity can be shown via a simulation relation with CAS.  We provide a brief informal sketch of the relation here. We argue that for every execution $\beta$ of CCOAS, there is an execution $\beta'$ of CAS with the same trace. To see this, we note that the write protocol of CCOAS is essentially identical to the write protocol in CAS, with the only difference between the two algorithms being the erasure code used in the pre-write phase. Similarly, the query phase of the read protocols of both algorithms are the same. Also note that the server responses to messages from a writer and query messages from a reader are identical in both CAS and CCOAS. The main differences between CCOAS and CAS in the server actions. The first difference is that, in CCOAS, the servers do not perform `$\mathrm{gossip}$'. The second difference is that in CCOAS, if the server does not have a coded element corresponding to the tag of the reader's finalize message, then the server does not respond at this point. Instead, the server sends a coded element to the reader at the point of receipt of the pre-write message with this tag. We essentially create $\beta'$ from $\beta$ by delaying all messages `$\mathrm{gossip}$' messages indefinitely, and delaying reader's finalize messages so that they arrive at each server at the point of, or after the receipt of the corresponding pre-write message by the server. This delaying ensures that the server actions are identical in both $\beta$ and $\beta'$.

Specifically, we create $\beta'$ as follows. In $\beta'$ the points of 
\begin{itemize*}
\item invocations of operations, 
\item sending and receipt of messages between writers and servers, 
\item sending and receipt of query messages between readers and servers, 
\item and sending of finalize messages from the readers
\end{itemize*}
are identical to $\beta$. The server `$\mathrm{gossip}$' messages in $\beta'$ are delayed indefinitely. A crucial difference between $\beta$ and $\beta'$ lies in the points of receipt of reader's finalize messages at the servers. Consider a read operation that acquired tag $t$ in $\beta$ and let $P$ denote the point of receipt of a reader's finalize message to server $s$. Let $P'$ denote the point of receipt of a pre-write message with tag $t$ at server $s$ in $\beta$.  Now, consider the corresponding read operation that acquired tag $t$ in $\beta'.$ Now, if $P$ precedes $P'$ in $\beta,$ then the reader's finalize message with tag $t$ arrives at server $s$ at $P'$ in $\beta'$, else, it arrives at point $P$ in $\beta'$. This implies that server $s$ responds to reader's finalize messages at the same points in $\beta$ and $\beta'$. Finally, we complete our specification of $\beta'$ by letting a server's response to the reader's finalize message arrive at the client at the same point in $\beta'$ as in $\beta$.

Note that if an operation acquires tag $t$ in $\beta$, the corresponding operation in $\beta'$ also acquires tag $t$. Also note that the points of invocation, responses of operations and the values returned by read operations are the same in both $\beta$ and $\beta'$. Therefore, there exists an execution $\beta'$ of CAS with the same trace as an arbitrary execution $\beta$ of CCOAS. Since CAS is atomic, $\beta'$ has atomic behavior, and so does $\beta$. Therefore, CCOAS satisfies atomicity.
\end{proof}

\begin{proof}[Proof of Lemma \ref{lem:counterexampleliveness}]
To prove liveness, it suffices to show that in any fair execution $\alpha$ where at most $f$ servers fail, every phase of every operation terminates. The proof of termination of a write operation, and the query phase of a read operation is similar to CAS and omitted here for brevity. Here, we present a proof of termination of the finalize phase of a read in any fair execution $\alpha$ where at most $f$ servers fail.

To show the termination of a read, note from Lemma \ref{lem:counterexample} that in execution $\alpha$, every non-failed server $s$ responds to a reader's finalize message with a coded element. Because the number of servers that fail in $\alpha$ is at most $f$, this implies that reader obtains at least $N-f$ messages with coded elements in response to its finalize message. From the read protocol, we observe that this suffices for termination of the finalize phase of a read. This completes the proof.
\end{proof}

}

\end{document}